\theoremstyle{plain}
\newtheorem{theorem}{Theorem}
\newtheorem{lemma}[theorem]{Lemma}
\newtheorem{proposition}[theorem]{Proposition}
\newtheorem{corollary}[theorem]{Corollary}
\newtheorem{definition}[theorem]{Definition}
\newtheorem{fact}[theorem]{Fact}
\newtheoremstyle{note}{\topsep}{\topsep}{\slshape}{}{\scshape}{}{ }{}
\theoremstyle{note}
\newtheorem{remark}[theorem]{Remark}
\newcommand\tr{\operatorname{Tr}}
\newcommand\Mn{\widetilde{\Pi}_N}
\newcommand\B{\widetilde{B}}
\newcommand{\<}{\langle}
\renewcommand{\>}{\rangle}
\newcommand{\ind}{\operatorname{ind}}
\newcommand\be{\begin{equation}}
\newcommand\ee{\end{equation}}
\newcommand\bea{\begin{array}}
	\newcommand\eea{\end{array}}
\newcommand\ben{\begin{eqnarray}}
\newcommand\een{\end{eqnarray}}
\newcommand\ot{\otimes}
\definecolor{forest}{RGB}{11, 102,35}
\newcommand{\A}{\mathcal{A}_{n}^{t_{n}}(d)}
\newcommand{\Hom}{\operatorname{Hom}}
\newcommand{\Span}{\operatorname{span}}
\newcommand\bei{\begin{itemize}}
	\newcommand\eei{\end{itemize}}
\newcommand\bee{\begin{enumerate}}
	\newcommand\eee{\end{enumerate}}
\DeclarePairedDelimiter\floor{\lfloor}{\rfloor}
\begin{document}
\title{Square-root measurements and degradation of the resource state in port-based teleportation scheme}

\author{Micha{\l} Studzi\'nski$^{1}$, Marek Mozrzymas$^{2}$, Piotr Kopszak$^{2}$}
\affiliation{
	$^1$ Institute of Theoretical Physics and Astrophysics, University of Gda\'nsk, National Quantum Information Centre, 80-952 Gda\'nsk, Poland\\
	$^2$ Institute for Theoretical Physics, University of Wrocław
	50-204 Wrocław, Poland }
\begin{abstract}
	Port-based teleportation (PBT) is a protocol of quantum teleportation in which a receiver does not have to apply correction to the transmitted state.  In this protocol two spatially separated  parties can teleport an unknown quantum state only by exploiting joint measurements on number of shared $d-$dimensional maximally entangled states (resource state) together with a state to be teleported and one way classical communication. In this paper we analyse for the first time the recycling protocol for the deterministic PBT beyond the qubit case. In the recycling protocol the main idea is to re-use the remaining resource state after one or many rounds of PBT for further processes of teleportation. The key property is to learn how much the underlying resource state degrades after every round of the teleportation process. We measure this by evaluating quantum fidelity between respective resource states. To do so we first present analysis of the square-root measurements used by the sender in PBT by exploiting the symmetries of the system. 
In particular, we show how to effectively evaluate their square-roots and composition.
	These findings allow us to present the
	explicit formula for the recycling fidelity involving only group-theoretic parameters describing irreducible representations in the Schur-Weyl duality. For the first time, we also analyse the degradation of the resource state for the optimal PBT scheme and show its degradation for all $d\geq 2$. In the both versions, the qubit case is discussed separately resulting in compact expression for fidelity, depending only on the number of shared entangled pairs. 
\end{abstract}
\maketitle

\section{Introduction}
The first quantum teleportation protocol introduced in~\cite{bennett_teleporting_1993} allows for transfer of an unknown quantum state between two spatially separated parties without necessity of exchanging the physical system and has found a lot of important practical and theoretical implications, for example~\cite{boschi_experimental_1998,gottesman_demonstrating_1999,gross_novel_2007, jozsa_introduction_2005, pirandola_advances_2015, raussendorf_one-way_2001}. The protocol, which is illustrated in Figure~\ref{fig:pbt_jpa}, requires pre-shared entanglement and consists of three stages. The first stage is a joint measurement on the state to be teleported and the sender's part of the shared entangled state. The second step involves communicating the classical outcome of the measurement by a classical channel to the receiver.  Finally, the third step requires correction operation, depending on the classical message, which recovers the transmitted state. The requirement of the unitary correction in the last step is a limiting factor, especially when the receiver has limited resources.

The breakthrough has been made by Ishizaka and Hiroshima in 2008. They introduced a novel port-based teleportation protocol (PBT) which does not require unitary correction~\cite{ishizaka_asymptotic_2008,ishizaka_quantum_2009}.
In this setup, illustrated in Figure~\ref{fig:pbt_jpa}, parties share a large \textit{resource state} consisting of $N$ copies of the maximally $d-$dimensional entangled states $|\psi_d^+\rangle^{\otimes N}$, where each pair $|\psi_d^+\rangle=(1/\sqrt{d})\sum_i|ii\>$ is called \textit{port}.
Alice performs a joint measurement on an unknown state $|\psi_{A_0}\>$, which she wishes to teleport, together with her half of the resource state, and communicates the outcome to Bob. The outcome of the measurement indicates the subsystem where the state has been teleported to. To obtain the teleported state, Bob picks up the right port indicated by Alice’s outcome, and no further correction is needed. We distinguish two types of PBT protocols, \textit{deterministic}, where state $|\psi_{A_0}\>$ is always transmitted to the receiver but imperfectly, and \textit{probabilistic}, where parties have to accept some non-zero probability of the failure in transmission, but when succeed the transmission is perfect.  In the first case we ask about the fidelity of the transmitted state while in the latter we are interested in probability of success (here the fidelity is one). In both cases, the perfect transmission (with unit fidelity or unit probability of success) is possible only with infinite resources, when numbers of shared entangled pairs is infinite. This is due to the celebrated non-programming theorem~\cite{Nielsen1997}. To know how PBT protocols behave with finite amount of resources (ports $N$) and local dimension $d$ we must know how the fidelity of the teleported state, or probability of success depend on the mentioned global parameters describing the protocols.  Such analysis have been done for qubits in~\cite{ishizaka_asymptotic_2008,ishizaka_quantum_2009} using $SU(2)^{\otimes N}$ representation approach, while for higher dimensions the problem has been tackled and solved by tools suggested by non-trivial  extension of the Schur-Weyl duality in papers~\cite{Studzinski2017,StuNJP}, with asymptotic analysis presented in~\cite{majenz2} by considering a dual representation to $\overline{U}\in \mathcal{U}(d)$, where the bar denotes complex conjugation. Both types of PBT we have their optimal versions, where Alice optimises simultaneously the measurements and shared entangled states~\cite{ishizaka_quantum_2009,StuNJP} before she runs the teleportation procedure. This optimising procedure increases the efficiency of the protocols measured in the number of shared maximally entangled pairs. In particular, in deterministic qubit scheme~\cite{ishizaka_quantum_2009} the entanglement fidelity $F$ scales as $1-\mathcal{O}(1/N^2)$ in optimal protocol and as $1-\mathcal{O}(1/N)$ in non-optimal one. In probabilistic qubit version~\cite{ishizaka_quantum_2009} the probability of success $p$ scales as $1-\mathcal{O}(1/N)$ in optimal protocol and as $1-\mathcal{O}(1/\sqrt{N})$ in non-optimal scheme. In every variant we have square improvement when moving to optimal procedure. The very elegant and full analysis of asymptotic performance of PBT scheme in all variants, and an arbitrary dimension of the port $d$ is contained in~\cite{majenz2}. However, increasing $d$ does not change the scaling in $N$ in every version.
\begin{figure}[h]
	\centering
	\includegraphics[width=0.7\columnwidth,keepaspectratio,angle=0]{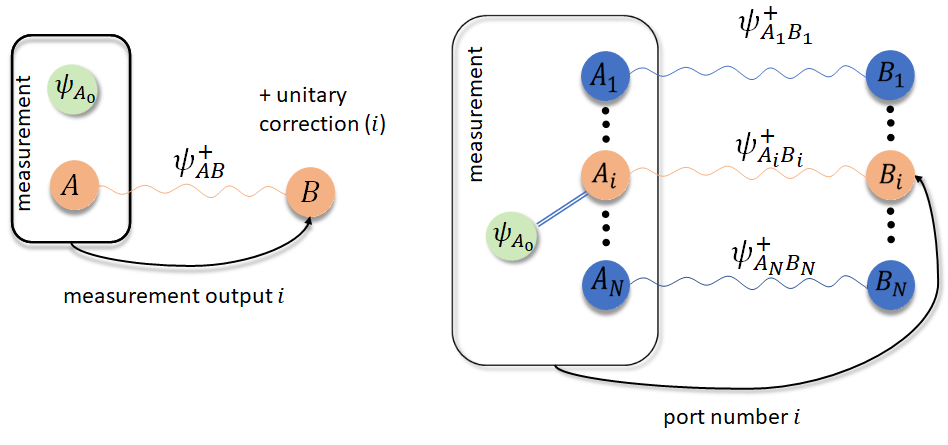}
	\caption{The left panel presents schematic description of the standard quantum teleportation procedure introduced in~\cite{bennett_teleporting_1993}. In this scheme the receiver to recover the transmitted state must apply unitary correction which depends on classical message send by the sender. In the right panel we present schematic description the port-based teleportation introduced in~\cite{ishizaka_asymptotic_2008}. Here, in contrary to the standard teleportation scheme, parties share $N$ maximally entangled pairs (ports), and Bob to recover the transmitted state must just pick up the right port according to classical message send by the sender. No correction is needed here, however due to no-programming theorem~\cite{Nielsen1997} transmission is not perfect, resulting with fidelity of teleportation smaller than 1. The perfect transmission is possible only in asymptotic scenario, when $N\rightarrow \infty$.}
	\label{fig:pbt_jpa}
\end{figure}

The PBT protocols due to the lack of correction in the last step have diverse applications and they are  particularly useful in multi-round quantum information processing settings, where the ordinary teleportation fails. For example, we can use PBT in NISQ protocols as a model for universal processor~\cite{ishizaka_asymptotic_2008,Banchi2020}, position-based cryptography~\cite{beigi_konig}, fundamental limitations on quantum channels discrimination~\cite{limit}, connection between non-locality and complexity~\cite{buhrman_quantum_2016}, and many other important results~\cite{Ebler,PhysRevLett.123.210502,Stroing,sim,PhysRevA.59.156,jeong2020generalization}. All these applications show two-fold importance of further investigations in PBT area. On the one hand, we learn about the fundamental limitations on state transfer by quantum teleportation imposed by the laws of quantum mechanics. On the other hand however, we can exploit PBT for producing many theoretical quantum information processing protocols having an impact on developing the applicative side of the science.

Nevertheless, regardless the variation of the PBT scheme the parties have to exploit substantial number of shared maximally entangled states to obtain satisfactory efficiency. These states can be considered as a resource which has to be produced, stored and possibly costly. This means that one would like to reduce potential costs of preparing PBT by for example using remaining ports after every round of teleportation procedure. To check whether we can re-use remaining ports we have to learn how the resource behaves after joint measurement applied by Alice. Such a possibility would have a great impact on possible practical applications of PBT, since one would get rid of the necessity of preparing the resource state after every teleportation process minimising costs and consumed time. The general idea of such kind is known as \textit{recycling protocol for PBT} $\mathcal{P}_{rec}$ and has been introduced firstly for deterministic scheme in~\cite{strelchuk_generalized_2013}. It is clear that efficiency of such protocol depends on the number of ports $N$, local dimension $d$, and the number of rounds $k$, so we should write $\mathcal{P}_{rec}=\mathcal{P}_{rec}(N,d,k)$.

 For the reader's convenience we present below all steps made by the parties in the recycling scheme (taken from~\cite{strelchuk_generalized_2013}), see also Figure~\ref{fig:rec_jpa}:
\begin{enumerate}
	\item Alice performs a measurement $\widetilde{\Pi}_a^{AA_0}$ with $\sum_{a=1}^N\widetilde{\Pi}_a^{AA_0}=\mathbf{1}_{AA_0}$, obtaining an outcome $1\leq a\leq N$.
	\item Alice sends outcome $a$ to Bob by a classical channel.
	\item Parties apply a transposition (SWAP) between $a-$th and 1st port
	\item Parties do not use port 1 in next rounds of the protocol - they only use remaining $N-1$ ports.
	\item Parties repeat steps 1-4 using remaining ports to complete transmission of $k$ states.
\end{enumerate}
\begin{figure}[h]
	\centering
	\includegraphics[width=0.8\columnwidth,keepaspectratio,angle=0]{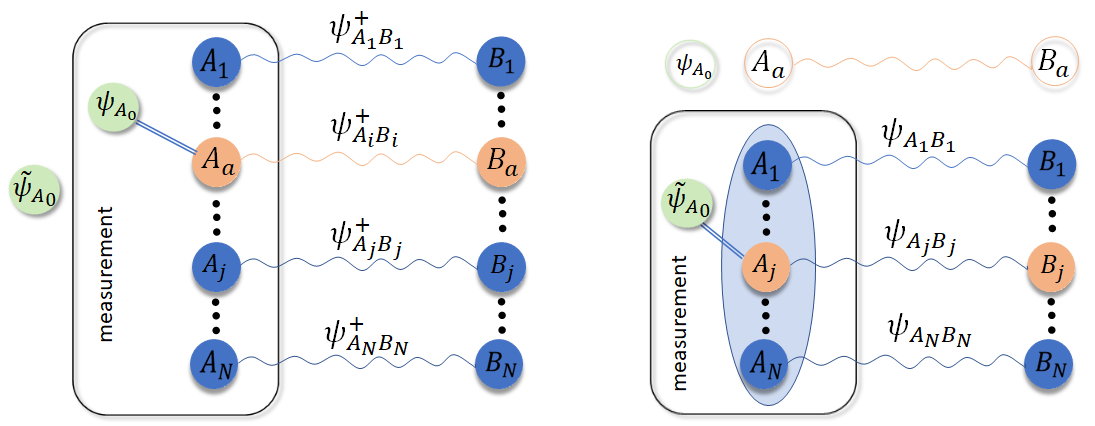}
	\caption{Schematic description of the recycling scheme for teleporting two unknown quantum sates $\psi_{A_0},\widetilde{\psi}_{A_0}$. For the simplicity of the presentation we do not include a classical communication between the parties. On the left we see the usual port-based teleportation procedure, when transmission occurred through $a-$th port. After this parties are left with $N-1$ ports, we do not have port $\psi^+_{A_aB_a}$, since it has been consumed for teleporting state $\psi_{A_0}$. Please notice that after the  measurement in the first round each port is no longer in the form of maximally entangled state, and there are some correlations between all the ports (light blue ellipse). In fact, we have a one large multipartite resource state. Next, in the second round, when parties wish to transmit the state $\widetilde{\psi}_{A_0}$ by using this distorted state. In general one could imagine that the measurement applied by the sender is too destructive, however we show here that for an arbitrary dimension $d$ this distortion is not too big, allowing for further rounds of teleportation.}
	\label{fig:rec_jpa}
\end{figure}
As we will see later the third step is optional for calculating the efficiency of the recycling protocol, and in fact parties could not apply the swap operation or apply any other reordering of the ports without changing quality of the protocol. After completing one round of PBT, the parties are left with $N-1$ ports and it is natural to ask  what is the usefulness of the remaining ports for next teleportation processes. This question has been asked for the first time in~\cite{strelchuk_generalized_2013} together with the description of \textit{the recycling protocol} for PBT. The recycling protocol $\mathcal{P}_{rec}(N,d,k)$ would allow for sequential teleportation of a number of quantum states by exploiting the same resource state in each round. Namely, after each application of PBT the parties do not prepare new $N$ maximally entangled pairs but use the remaining $N-1$ ports of the resource state.

To show that the recycling protocol $\mathcal{P}_{rec}(N,d,k)$ is indeed efficient it is sufficient, as it was explained in~\cite{strelchuk_generalized_2013}, to find  the fidelity between states in the idealised situation, where the state is teleported and the remaining resource state is untouched, and the real state of the resource after application of a joint measurement in PBT. Having this one can check how such fidelity behaves after, let us say $k$ rounds of PBT.  Up to now \textit{only the qubit case}, for non-optimal PBT has been investigated and  there is a lower bound (Theorem 1 in~\cite{strelchuk_generalized_2013}) for the fidelity $F(\mathcal{P}_{rec}(N,2,1))$  of the form:
\begin{equation}
	\label{eq1}
	F(\mathcal{P}_{rec}(N,2,1))\geq 1-\frac{11}{4N}+\mathcal{O}\left(\frac{1}{N^2}\right).
\end{equation}
Next, having a lower bound on fidelity $F(\mathcal{P}_{rec})$ after one round of the recycling protocol, one can establish similar lower bound after $k$ rounds of the protocol (Lemma 2 in~\cite{strelchuk_generalized_2013}):
\begin{equation}
	\label{eq2}
	F(\mathcal{P}_{rec}(N,2,k))\geq 1-\frac{11k}{2N}.
\end{equation}
The above expression states that the error after each round is at most additive in the number of rounds $k$. These results would imply that in every round of teleportation Alice can apply the same type of measurement called \textit{square-root measurement} which is in fact optimal for non-optimal and optimal PBT due to the results in~\cite{leditzky2020optimality}  producing reasonably high efficiency of teleportation when parties re-use the remaining ports.

In this paper we extend the results on recycling protocol for deterministic PBT beyond qubit case and we discuss recycling for the optimal version of PBT. 
In  particular, our contribution is the following:
\begin{enumerate}
	\item To obtain all the results regarding the recycling protocol we have to know the interior structure of the square-root measurements existing in deterministic PBT. In this paper we present substantial analysis of the interior structure of these objects from the point of view of representation theory. We prove several propositions including: their composition law which shows when the considered measurements become projective, we compute matrix elements of the measurements in irreducible blocks, and finally we present how to effectively calculate square-roots from the measurements occurring in PBT. The latter is crucial in computing  recycling fidelity for PBT.
	\item  For \textit{an arbitrary dimension} $d$ of the port we evaluate expressions for $F(\mathcal{P}_{rec}(N,d,1))$ in the case of non-optimal and optimal deterministic PBT in terms of the operators describing the teleportation protocol like Alice's measurements and signal states. In particular, the analysis for the optimal PBT is presented for the first time, even in the qubit case.
	\item In both variants we derive expressions for explicit values of  $F(\mathcal{P}_{rec}(N,d,1))$ depending on group theoretic quantities such as multiplicities and dimensions of irreducible representations of the symmetric groups $S(N)$ and $S(N-1)$ in the Schur-Weyl duality. These results are obtained for an arbitrary port dimension $d$. In particular case, when $d=2$, we present effectively computable expressions depending \textit{only} on the number of ports $N$.
	\item Exploiting already existing results (Lemma 2 in~\cite{strelchuk_generalized_2013}), we present a lower bound on $F(\mathcal{P}_{rec}(N,2,k))$ in terms of group-theoretical quantities for any $d\geq 2$, showing that the error is at most additive in $k$.
	\item We show numerically that there is no clear connection between $F(\mathcal{P}_{rec}(N,d,1))$ and the type of resource states for deterministic PBT. We show that the fidelity between the resource states for non- and optimal deterministic PBT decreases strongly for large $N$, showing qualitative difference between the states. However, this behaviour does not imply the difference between behaviour of  $F(\mathcal{P}_{rec}(N,d,1))$ for discussed schemes. In particular, even that two resource states are very different the values of $F(\mathcal{P}_{rec}(N,d,1))$ do not change significantly.
\end{enumerate}

The structure of the paper is as follows. In Section~\ref{dPBT} we describe in detail deterministic PBT and identify all its symmetries with respect to unitary and symmetric group. In Section~\ref{tools} we introduce the minimal necessary amount of information regarding representation theory of symmetric group $S(n)$ and algebra of partially transposed permutation operators  required for understanding augmentations presented later. In Section~\ref{StructurePOVMs} we analyse the square-root measurements from PBT from the point of view of their 
underlying symmetries. In particular, we evaluate the composition law for them and their square-roots occurring in the recycling protocol.
In Section~\ref{rec_PBT} we formally introduce the recycling protocol for deterministic PBT and present main results of this paper. We start from Theorem~\ref{thm1_rec}, where explicit equation for $F(\mathcal{P}_{rec}(N,d,1))$ is presented as a function of the joint measurement occurring in PBT. 
Next, in Theorem~\ref{Frec_bound}, using Schwarz inequality and properties of the joint measurement we derive an upper bound for $F(\mathcal{P}_{rec}(N,d,1))$, showing that the obtained expression is well defined.
In Theorem~\ref{expplicit} we present explicit expression for $F(\mathcal{P}_{rec}(N,d,1))$ in arbitrary dimension $d$ in terms of group-theoretic parameters like dimensions and multiplicities of irreducible representations in the Schur-Weyl duality. Then, in Lemma~\ref{Fqubits} the reduction to qubit case of the the statement of Theorem~\ref{expplicit} is presented. In the same section we analyse the efficiency of the recycling protocol when Alice optimises over measurements and the resource state simultaneously - see Theorem~\ref{F_rec_optimal} and Lemma~\ref{lem:f_opt}. Lastly, we present a short discussion about  the previously undiscussed connection between the type of the resource state and the efficiency of the recycling protocol. In fact, we show that the resource states for non- and optimal PBT are very different (Lemma~\ref{l:FPBT}) but resulting fidelity in the recycling protocol is almost the same for both of them.  
Our paper contains also appendices where we give detailed proofs of the statements from the main text which require more advanced tools from representation theory. We talk about explicit expressions for $F(\mathcal{P}_{rec}(N,d,1)$ in arbitrary dimension of the port, as well as, its simplification in the qubit case (Appendix~\ref{AppA0}, Appendix~\ref{rec_OdPBT}, Appendix~\ref{app:opbtd2}).

\section{Deterministic Port-based teleportation}
\label{dPBT}
In this section we describe the deterministic version of PBT~\cite{ishizaka_asymptotic_2008,ishizaka_quantum_2009,Studzinski2017,StuNJP} together with the symmetries emerging in the protocol.

	{\bf Deterministic port-based teleportation.} In deterministic PBT parties share a state, called \textit{the resource state}, composed of $N$ copies of $d$-dimensional maximally entangled states, each of them called port. Without loss of generality we assume the following form of shared state:
\be
\label{resource}
|\Psi\>_{AB}=(O_A \otimes \mathbf{1}_B)|\Psi^+\>_{AB}=(O_A \otimes \mathbf{1}_B)|\psi^+\>_{A_1B_1}\otimes |\psi^+\>_{A_2B_2}\otimes \cdots \otimes |\psi^+\>_{A_NB_N},
\ee
where $A=A_1A_2\cdots A_N$, $B=B_1B_2\cdots B_N$, and $O_A$, with normalisation constraint $\tr(O_A^{\dagger}O_A)=d^N$, is a global operation applied by Alice to increase the efficiency of the protocol. In non-optimal PBT $O_A=\mathbf{1}_A$, while for optimal scheme its explicit form in known and discussed in~\cite{ishizaka_quantum_2009,StuNJP}.
Alice to transmit the state of an unknown particle $\psi_C$ performs a joint measurement,  on the state $\psi_{A_0}$ and her half of the resource state. The measurements $\{\widetilde{\Pi}_a^{AA_0}\}_{a=1}^N$ are described here by positive operator valued measure (POVM), so they satisfy the relation $\sum_{a=1}^N\widetilde{\Pi}_a^{AA_0}=\mathbf{1}_{AA_0}$. After the measurement she gets a classical outcome $1\leq a\leq N$ transmitted to Bob by a classical channel. To end the procedure Bob has to just pick-up the right port pointed by the classical message $a$. Denoting by $\Psi_{AB}=|\Psi\>\<\Psi|_{AB}$,   $\Psi_{AB}^+=|\Psi^+\>\<\Psi^+|_{AB}$, and by $\psi_{A_0}=|\psi\>\<\psi|_{A_0}$, we write the teleportation channel $\mathcal{N}$ which has the following form:
\be
\label{ch1}
\begin{split}
	\mathcal{N}\left(\psi_{A_0} \right)&=\sum_{a=1}^N\tr_{A\bar{B}_aA_0}\left[ \sqrt{\widetilde{\Pi}_a^{AA_0}}\left(\Psi_{AB}\ot \psi_{A_0} \right)\sqrt{\widetilde{\Pi}_{a}^{AA_0}}^{\dagger}\right]_{B_a\rightarrow \B}\\
	&=\sum_{a=1}^N\tr_{AA_0}\left[\widetilde{\Pi}_{a}^{AA_0}\left(\left(O_A\ot \mathbf{1}_{B_a} \right)\tr_{\bar{B}_{a}}(\Psi_{AB}^+) \left(O_A^{\dagger}\ot \mathbf{1}_{B_a} \right)\ot \psi_{A_0} \right)\right]_{B_a\rightarrow \B}\\
	&=\sum_{a=1}^N \tr_{AA_0}\left[\widetilde{\Pi}_{a}^{AA_0} \left(\left(O_A\ot \mathbf{1}_{\B} \right)\sigma_{A_a\B} \left(O_A^{\dagger}\ot \mathbf{1}_{\B} \right) \ot \psi_{A_0}\right)\right],
\end{split}
\ee
where by $\tr_{\bar{B}_{a}}$ denotes partial trace over all systems $B$  but $a$. The states $\sigma_{A_a\B}$ are called \textit{signal states} and have the following explicit form
\be
\label{eq:signals}
\sigma_{A_a\B}\equiv \sigma_a=\tr_{\bar{B}_{a}}(\Psi_{AB}^+)=\tr_{\bar{B}_{a}}\left(P^+_{A_1B_1}\ot P^+_{A_2B_2}\ot \cdots \ot P^+_{A_NB_N} \right)_{B_{a}\rightarrow \widetilde{B}}=\frac{1}{d^{N-1}}\mathbf{1}_{\overline{A}_a}\otimes P^+_{A_a\B},
\ee
where $P^+_{A_a\B}$ is projector on maximally entangled state between systems $A_a$ and $\B$. As it was mentioned in deterministic scheme teleportation always succeeds but the teleported state is distorted. To know how well we perform, one can evaluate entanglement fidelity $F(\mathcal{N})$ of teleportation channel $\mathcal{N}$ when teleporting  a subsystem $C$ from a maximally entangled state $P^+_{CD}$, and computing overlap with the state after perfect transmission $P^+_{\B D}$~\cite{ishizaka_asymptotic_2008,ishizaka_quantum_2009,StuNJP}:
\be
F(\mathcal{N})=\tr\left[P^+_{\B D}(\mathcal{N}_{C}\otimes \mathbf{1}_D)(P^+_{CD})\right]=\frac{1}{d^2}\sum_{a=1}^N\tr\left[\left(O_A^{\dagger}\otimes \mathbf{1}_{\B}\right)\widetilde{\Pi}_a^{A\B}\left(O_A\otimes \mathbf{1}_{\B}\right)\sigma_{A_a\B}\right].
\ee
For an arbitrary dimension $d$ the fidelity $F(\mathcal{N})$ has been evaluated explicitly using methods coming from group representation theory~\cite{Studzinski2017,StuNJP,majenz2}.
Due to the recent result presented in \cite{leditzky2020optimality}, we know that \textit{square-root measurements} (SRM) are optimal in both PBT versions, where parties share entangled pairs only, and when Alice optimises over the shared state and measurements. The optimal measurements in the both cases are of the form:
\begin{equation}
	\label{eq:measurements}
	\forall 1\leq a\leq N \qquad \Pi_a^{AA_0}\equiv \Pi_a=\frac{1}{\sqrt{\rho}}\sigma_{A_aA_0}\frac{1}{\sqrt{\rho}},\quad \text{where}\quad \rho=\sum_{a=1}^N\sigma_{A_aA_0}.
\end{equation}
The operator $\rho^{-1}$ is restricted to the support of $\rho$, so to ensure summation of all POVMs to identity $\mathbf{1}_{AA_0}$ on the whole space $(\mathbb{C}^d)^{\otimes N+1}$, we add to every $\Pi_a^{AA_0}$ an excess term
\begin{equation}
	\label{Delta}
	\frac{1}{N}\Delta=\frac{1}{N}\left(\mathbf{1}_{AA_0}-\sum_{a=1}^N\Pi_a^{AA_0}\right),\quad \text{where}\quad \Delta=\mathbf{1}_{AA_0}-\sum_{a=1}^N\Pi_a^{AA_0},
\end{equation}
having for $1\leq a\leq N$ the new operators of the form
\begin{equation}
	\label{mea2}
	\widetilde{\Pi}_a^{AA_0}=\Pi_a^{AA_0}+\frac{1}{N}\Delta.
\end{equation}
As we discuss later (see also~\cite{ishizaka_asymptotic_2008,ishizaka_quantum_2009,Studzinski2017}) this extra term does not change the entanglement fidelity $F(\mathcal{N})$ of the channel $\mathcal{N}$.

	{\bf Symmetries in port-based teleportation} For the further purposes let us focus here a little bit on symmetries occurring in signals and measurements in deterministic PBT.
Now we are ready to identify all symmetries in PBT. First there is a well known observation that a bipartite maximally entangled state is $U\otimes \overline{U}$ invariant, where the bar denotes complex conjugation of an element $U$ of the unitary group $\mathcal{U}(d)$. This implies the following symmetries of all signal states $\sigma_a$:
\begin{equation}
	\label{sym1}
	\begin{split}
		[U^{\otimes N}\otimes \overline{U},\sigma_a]&=0,\quad \forall \ U\in \mathcal{U}(d),\\
		[V(\pi),\sigma_a]&=0,\quad \forall \ \pi\in S(N-1),
	\end{split}
\end{equation}
where $S(N-1)$ is the symmetric group of $N-1$ elements, $\overline{U}$ acts on $B$, and $U^{\otimes N}$ acts on systems $A=A_1\cdots A_N$.
Construction of the signal states $\sigma_{a}$ allows us to identify an additional symmetry which is the covariance with respect to elements from the group $S(N)$:
\begin{equation}
	\begin{split}
		V(\pi)\sigma_aV^{\dagger}(\pi)=\sigma_{\pi(a)},\quad \forall \ \pi\in S(N).
	\end{split}
\end{equation}
In particular, choosing one signal, let us say $\sigma_N$, any other one can be obtained by just implementing an appropriate operator $V(\pi)$, in this case the element from the coset $S(N)/S(N-1)$, elements of
which  are of the form $V[(a,N-1)]$, for $a=1,\ldots,N-1$, where $(a,N-1)$ denotes transposition between respective systems.
The above considerations imply that the operator $\rho$ from~\eqref{eq:measurements} is invariant with respect to elements from $S(N)$ and the following relation for the measurements $\widetilde{\Pi}_a$ from~\eqref{eq:measurements}:
\begin{equation}
	\label{mes_cov}
	V(\pi)\widetilde{\Pi}_aV^{\dagger}(\pi)=\widetilde{\Pi}_{\sigma(a)},\quad \forall \ \pi\in S(N).
\end{equation}
Now, we observe that any bipartite maximally entangled state $P^+_{XY}$ can be viewed as a partially transposed permutation operator $V[(X,Y)]$ between systems $X$ and $Y$:
\begin{equation}
	P^+_{XY}=\frac{1}{d}V^{t_Y}[(X,Y)],\quad t_Y - \text{partial transposition over system $Y$},
\end{equation}
so operator $\rho$ from~\eqref{eq:measurements} reads
\begin{equation}
	\label{eq:rhoV}
	\rho=\frac{1}{d^N}\sum_{a=1}^N\mathbf{1}_{\overline{A}_a\overline{A_0}}\otimes V^{t_{A_0}}[(A_a,A_0)]\equiv\frac{1}{d^N}\sum_{i=1}^NV^{t_{A_0}}[(A_a,A_0)]\equiv \frac{1}{d^N}\sum_{a=1}^NV'[(a,n)],
\end{equation}
where the bar here denotes here all systems but $A_a, A_0$, and in the last equality we renumbered systems according to rule $A_1\mapsto 1,A_2\mapsto 2,\ldots, A_N\mapsto N, A_0\mapsto n=N+1$. By $'$ we denote partial transposition over $n-$th system.  We will exploit this notation later in this paper, making expressions more compact, especially in appendices where we investigate structure of POVMs.
These symmetries, together with observations above allow us to use group theoretic machinery for the algebra of partially transposed permutation operators~\cite{Moz1,MozJPA} together with the Schur-Weyl duality~\cite{FultonSchur}. We discuss this connection on a deeper level later in this paper.

\section{Symmetric group and algebra of  partially transposed permutation operators}
\label{tools}
For self-consistence of the paper and clarity of the further analysis we briefly remind here basic elements of representation theory of the symmetric group and the algebra of partially transposed permutation operators.

	{\bf Representations of symmetric group $S(n)$} Let us start form considering a permutational representation $V$ of the symmetric group $S(n)$, where $n=N+1$, in the space $\mathcal{H\equiv (\mathbb{C}}^{d})^{\otimes n}$ defined in the following way
\begin{definition}
	\label{repV}
	$V:S(n)$ $\rightarrow \Hom(\mathcal{(\mathbb{C}}^{d})^{\otimes n})$ and
	\be
	\forall \pi \in S(n)\qquad V(\pi ).|e_{i_{1}}\>\otimes |e_{i_{2}}\>\otimes
	\cdots \otimes |e_{i_{n}}\>=|e_{i_{\pi ^{-1}(1)}}\>\otimes |e_{i_{\pi
			^{-1}(2)}}\>\otimes \cdots \otimes |e_{i_{\pi ^{-1}(n)}}\>,
	\ee
	where $d\in \mathbb{N}$ and $\{|e_{i}\>\}_{i=1}^{d}$ is an orthonormal basis of the space $\mathcal{\mathbb{C}}^{d}.$
\end{definition}
Since the representation $V(S(n))$ (or $V_d(S(n))$ to underline the space dimension) is defined in a given basis of the space $\mathbb{C}^d$, it is a matrix representation, and operators $V(\pi)$ just permute basis vectors according to the given permutation $\pi$. The representation $V(S(n))$ extends in a natural way to the
representation of the group algebra $\mathbb{C}[S(n)]$ and in this way we get the algebra
\be
\label{CSn}
\mathcal{A}_{n}(d):= \Span_{\mathbb{C}}\{V(\sigma ):\sigma \in S(n)\}\subset \Hom(\mathcal{(\mathbb{C}}^{d})^{\otimes n})
\ee
of operators representing the elements of the group algebra $\mathbb{C}[S(n)]$.
Note that the algebra $A_{n}(d)$ contains a natural subalgebra
\be
\label{An-1}
\mathcal{A}_{n-1}(d):= \Span_{\mathbb{C}}\{V(\sigma _{n-1}):\sigma _{n-1}\in S(n-1)\}.
\ee
To learn about irreps of the symmetric group $S(n)$ we need to introduce a notion of \textit{partition}. A partition $\alpha$ of a natural number $n$, which we denote as $\alpha \vdash n$, is a sequence of positive numbers $\alpha=(\alpha_1,\alpha_2,\ldots,\alpha_r)$, such that
\be
\alpha_1\geq \alpha_2\geq \cdots \geq \alpha_r,\qquad \sum_{i=1}^r\alpha_i=n.
\ee
The above fact can be represented graphically. Namely, every partition can be visualised as a \textit{Young frame} - a collection of boxes arranged in left-justified rows (see the panel A of Figure~\ref{fig:Yng_diag}).
\begin{figure}[h]
	\centering
	\includegraphics[width=0.8\columnwidth,keepaspectratio,angle=0]{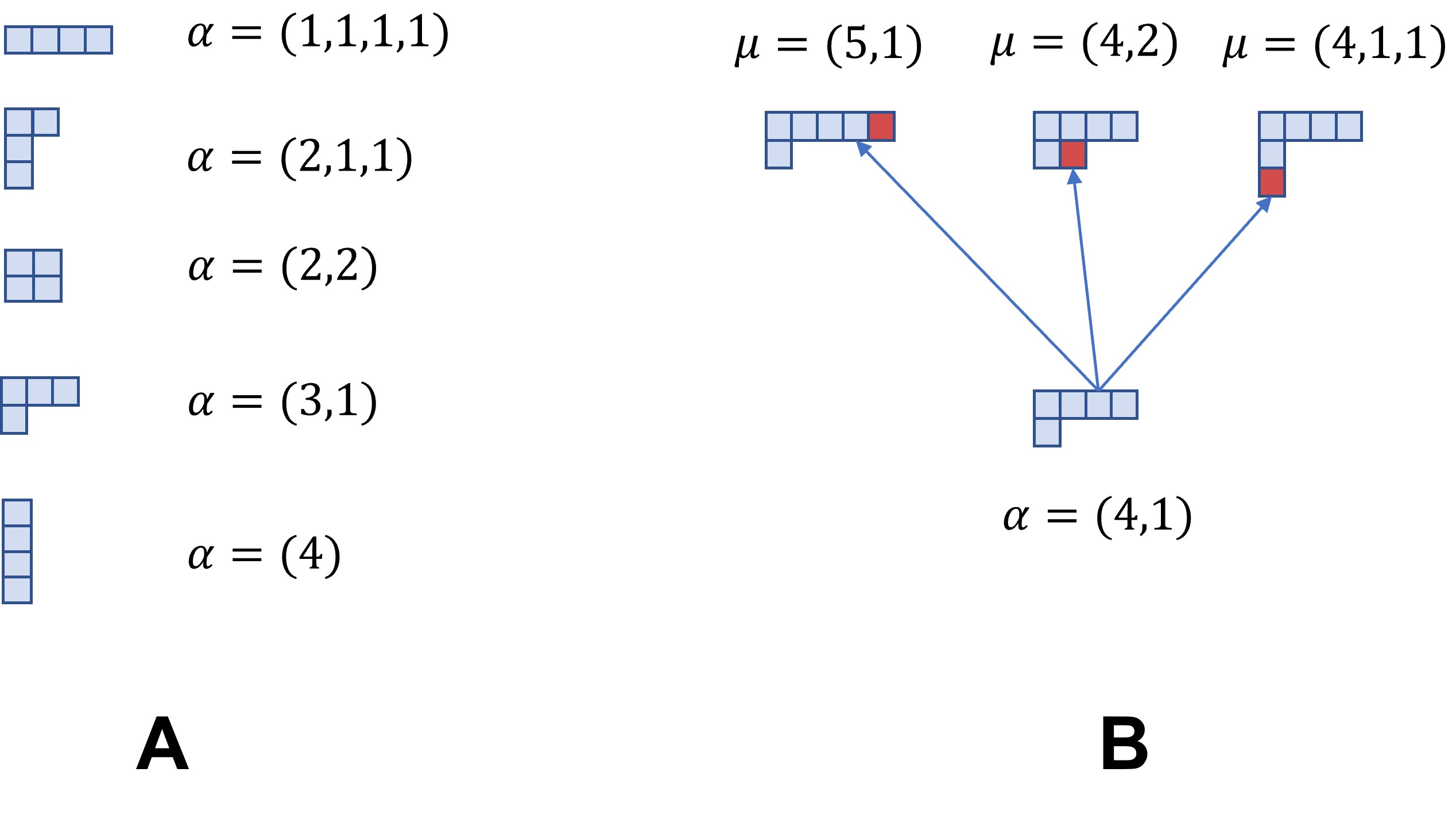}
	\caption{The panel A presents five possible Young frames for $n=4$, which also corresponds to all possible abstract irreducible representations of $S(4)$. Considering representation space $(\mathbb{C}^d)^{\otimes 4}$ there appear only irreps for which height of corresponding Young frames is no larger than $d$. For example, considering qubits ($d=2$) we have only three frames: $(4),(3,1),(2,2)$. The panel B presents possible Young frames $\mu \vdash 6$, which can be obtained from a frame $\alpha=(4,1)$ by adding a single box, depicted here in red. In this particular case, by writing $\mu \in \alpha$, we take $\mu$ represented only by these three frames. In the same manner we define subtracting of a box from a Young frame.}
	\label{fig:Yng_diag}
\end{figure}
For a fixed number $n$, the number of Young frames determines the number of nonequivalent irreps of $S(n)$ in an abstract decomposition.   However, working in the representation space $\mathcal{H\equiv (\mathbb{C}}^{d})^{\otimes n}$, in every decomposition of $S(n)$ into irreps we take Young frames $\alpha$ whose height $h(\alpha)$ is at most $d$. Further, by $\hat{S}(n)$ we denote set of all irreps of  the group $S(n)$.

Now, suppose we have $\alpha \vdash n-1$ and $\mu \vdash n$. Writing $\mu \in \alpha$ we consider such Young frames $\mu$ which can be obtained from $\alpha$ by adding a single box (see the panel B of Figure~\ref{fig:Yng_diag}). Similarly, writing $\alpha \in \mu$ we consider such Young  frames $\alpha$, which can be obtained from $\mu$ by removing a single box.
For further purposes let us define also the following set of irreps of $S(n)$
\begin{equation}
	\label{Theta}
	\Theta:=\left\{\theta \vdash n \ | \ \theta \in \alpha \vdash n-1 \ \text{with} \ h(\alpha)=d \ \text{and} \ h(\theta)=d+1\right\}.
\end{equation}
When one considers irreps of $S(n-1)$ for which $h(\alpha)<d$, then $\Theta$ is a empty set.
Notice that for a given Young frame $\alpha$ with $h(\alpha)=d$ there is only one $\theta$ with $h(\theta)=d+1$.

Finally, having introduced all necessary notation we recall here the celebrated Schur-Weyl duality~\cite{FultonSchur}, which states that the diagonal action of the general linear group $GL_d(\mathbb{C})$ of invertible complex matrices and of the symmetric group on $(\mathbb{C}^d)^{\otimes n}$ commute:
\be
V(\sigma)(X\otimes \cdots \otimes X)=(X\otimes \cdots \otimes X)V(\sigma),
\ee
where $\sigma \in S(n)$ and $X\in GL_d(\mathbb{C})$. Due to the above relation we have the following:
\begin{theorem}
	\label{SW}
	The tensor product space $(\mathbb{C}^d)^{\otimes n}$ can be decomposed as
	\be
	(\mathbb{C}^d)^{\otimes n}=\bigoplus_{\substack{\alpha \vdash n \\ h(\alpha)\leq d}} \mathcal{U}_{\alpha}\otimes \mathcal{S}_{\alpha},
	\ee
	where the symmetric group $S(n)$ acts on the space $\mathcal{S}_{\alpha}$ and the general linear group $GL_d(\mathbb{C})$ acts on the space $\mathcal{U}_{\alpha}$, labelled by the same partitions.
\end{theorem}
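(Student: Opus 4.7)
The plan is to prove Theorem~\ref{SW} via the double commutant theorem applied to the semisimple algebra $\mathcal{A}_n(d)=\mathbb{C}[S(n)]$ acting on $\mathcal{H}=(\mathbb{C}^d)^{\otimes n}$. The strategy is standard: identify the commutant of $\mathcal{A}_n(d)$ inside $\operatorname{End}(\mathcal{H})$ as the algebra generated by the diagonal $GL_d(\mathbb{C})$-action, then use semisimplicity to read off the isotypic decomposition, and finally cut down the index set to Young frames of height at most $d$.

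First I would set $\mathcal{B}_n(d):=\operatorname{span}_{\mathbb{C}}\{X^{\otimes n}:X\in GL_d(\mathbb{C})\}\subseteq\operatorname{End}(\mathcal{H})$. The commutation relation stated just before Theorem~\ref{SW} shows $\mathcal{B}_n(d)\subseteq \mathcal{A}_n(d)'$, where $'$ denotes the commutant. The core step is the reverse inclusion, and this is where the main difficulty lies. My argument would be a polarization argument: an element $T\in\operatorname{End}(\mathcal{H})$ commutes with every $V(\pi)$ if and only if it lies in the symmetric subspace $\operatorname{Sym}^n(\operatorname{End}(\mathbb{C}^d))$, and by the polarization identity
\begin{equation}
X_1\otimes\cdots\otimes X_n\ \text{(symmetrized)}=\frac{1}{n!}\sum_{S\subseteq\{1,\ldots,n\}}(-1)^{n-|S|}\Bigl(\sum_{i\in S}X_i\Bigr)^{\otimes n},
\end{equation}
every symmetric tensor is a linear combination of pure tensor powers $Y^{\otimes n}$ with $Y\in\operatorname{End}(\mathbb{C}^d)$. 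Since such $Y$ can be taken invertible by density, this would yield $\mathcal{A}_n(d)'=\mathcal{B}_n(d)$.

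With the commutant identified, I would invoke the double commutant theorem. Because $\mathcal{A}_n(d)$ is the image of the group algebra $\mathbb{C}[S(n)]$ of a finite group, it is semisimple (Maschke's theorem), so $\mathcal{A}_n(d)''=\mathcal{A}_n(d)$ and $\mathcal{A}_n(d)$ itself is semisimple. The general structure theorem for actions of a semisimple algebra on a module then produces a canonical decomposition
\begin{equation}
\mathcal{H}=\bigoplus_{\alpha\in\widehat{S}(n)}\mathcal{U}_\alpha\otimes \mathcal{S}_\alpha,
\end{equation}
where $\mathcal{S}_\alpha$ is the Specht module labeled by the partition $\alpha\vdash n$ and $\mathcal{U}_\alpha:=\operatorname{Hom}_{S(n)}(\mathcal{S}_\alpha,\mathcal{H})$ is the multiplicity space. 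By the step above, the $\mathcal{U}_\alpha$'s carry exactly the $\mathcal{B}_n(d)=GL_d(\mathbb{C})$-action, and the irreducibility of $\mathcal{U}_\alpha$ as a $GL_d$-representation follows from the double commutant theorem applied in the opposite direction.

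It remains to restrict the sum to $h(\alpha)\leq d$. For this I would use Young symmetrizers: the isotypic component of type $\alpha$ in $\mathcal{H}$ is $c_\alpha\cdot \mathcal{H}$, where $c_\alpha\in\mathbb{C}[S(n)]$ is the Young symmetrizer attached to a standard tableau of shape $\alpha$. The antisymmetrizer part of $c_\alpha$ antisymmetrizes $h(\alpha)$ basis-vector indices from $\{1,\ldots,d\}$, so $c_\alpha$ annihilates every pure tensor in $\mathcal{H}$ whenever $h(\alpha)>d$ by the pigeonhole principle. This gives $\mathcal{U}_\alpha=0$ exactly for those $\alpha$, yielding the claimed indexing set. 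The main obstacle is the commutant computation in the second paragraph; the rest is an application of general semisimple-algebra theory plus a short combinatorial vanishing argument.
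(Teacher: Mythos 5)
Your argument is essentially correct, but note that the paper does not prove Theorem~\ref{SW} at all: it simply recalls the classical Schur--Weyl duality and cites \cite{FultonSchur}, so there is no in-paper proof to compare against. What you have written is the standard double-commutant proof from the textbook literature, and its main steps are sound: the commutant of $V(S(n))$ in $\operatorname{End}(\mathcal{H})\cong\operatorname{End}(\mathbb{C}^d)^{\otimes n}$ is the symmetric subspace, the polarization identity shows this subspace is spanned by powers $Y^{\otimes n}$, the passage from arbitrary $Y$ to invertible $Y$ is a routine density argument, and Maschke plus the double commutant theorem then yields the isotypic decomposition with the commutant acting irreducibly on the multiplicity spaces. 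Two small imprecisions are worth flagging, though neither is fatal. First, $c_\alpha\cdot\mathcal{H}$ for a single Young symmetrizer $c_\alpha$ is not the isotypic component of type $\alpha$ (that is $P_\alpha\mathcal{H}$ with $P_\alpha$ the central idempotent of \eqref{Yng_proj}); what is true, and is all your argument needs, is that the isotypic component vanishes if and only if $c_T\mathcal{H}=0$ for one (equivalently every) tableau $T$ of shape $\alpha$, since the various images $c_T\mathcal{H}$ are permuted by the invertible operators $V(\pi)$. Second, the pigeonhole argument only gives the implication $h(\alpha)>d\Rightarrow\mathcal{U}_\alpha=0$; your claim that this happens ``exactly'' for those $\alpha$ also needs the converse, obtained for instance by applying $c_T$ to the pure tensor whose $i$-th factor is $e_{r(i)}$, where $r(i)$ is the row of $i$ in $T$, and checking the result is nonzero. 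For the decomposition as stated the forward implication suffices, so the proof stands as a correct and self-contained substitute for the citation.
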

From the decomposition given in Theorem~\ref{SW} we deduce that for a given irrep $\alpha$ of $S(n)$, the space $\mathcal{U}_{\alpha}$ is multiplicity space of dimension $m_{\alpha}$ (multiplicity of irrep $\alpha$), while the space $\mathcal{S}_{\alpha}$ is representation space of dimension $d_{\alpha}$ (dimension of irrep $\alpha$).
Finally with every subspace $\mathcal{U}_{\alpha}\otimes \mathcal{S}_{\alpha}$ we associate \textit{Young projector}:
\be
\label{Yng_proj}
P_{\alpha}=\frac{d_{\alpha}}{n!}\sum_{\sigma \in S(n)}\chi^{\alpha}(\sigma^{-1})V(\sigma),\quad \text{with}\quad \tr P_{\alpha}=m_{\alpha}d_{\alpha},
\ee
where $\chi^{\alpha}(\sigma^{-1})$ is the character associated with the irrep indexed by $\alpha$. The symbols $m_{\alpha}, d_{\alpha}$ denote the multiplicity and dimension of an irrep $\alpha$ in the Schur-Weyl dulaity in Theorem~\eqref{SW}. Further, whenever we mean a matrix representation of an irrep of $\sigma \in S(n)$ indexed by a frame $\alpha$ we write $\psi^{\alpha}(\sigma)$ or $\varphi^{\alpha}(\sigma)$.

	{\bf Algebra of partially transposed permutation operators} Having definition of the group algebra $\mathbb{C}[S(n)]$ in equation~\eqref{CSn}, we can naturally introduce the algebra of partially transposed operators with respect to last subsystem $\A$ in the following way
\begin{definition}
	\label{def_A}
	For $\mathcal{A}_{n}(d):= \Span_{\mathbb{C}}\{V(\sigma ):\sigma \in S(n)\}$ we define a new complex algebra
	\be
	\mathcal{A}_{n}^{t_{n}}(d):= \Span_{\mathbb{C}}\{V^{t_{n}}(\sigma ):\sigma \in S(n)\}\subset \Hom(\mathcal{(\mathbb{C}}^{d})^{\otimes n}),
	\ee
	where the symbol $t_{n}$ denotes the partial transposition with respect to the last subsystem
	in the space $\Hom((\mathbb{C}^{d})^{\otimes n})$. The elements $V^{t_{n}}(\sigma ):\sigma \in S(n)$ will
	be called natural generators of the algebra $\mathcal{A}_{n}^{t_{n}}(d)$. Later for the simplicity of the presentation we use  symbol $'$ for partial transposition $t_n$, and $V'$ for transposed permutation operator $V^{t_n}[(n-1,n)]$ between systems $n-1$ and $n$.
\end{definition}
Please notice that from the above definition and expression~\eqref{An-1} it directly follows that $\mathcal{A}_{n-1}(d)\subset
	\mathcal{A}_{n}^{t_{n}}(d)$. It means the algebra $\A$ contains operators representing the subgroup $S(n-1)\subset S(n)$, which are invariant with respect to partial transposition $t_n$.
By the definition the algebra $\A$, which is in fact the a matrix
algebra, acts naturally in the space $\mathcal{H}=(\mathbb{C}^{d})^{\otimes n}.$ From papers~\cite{Moz1,Studzinski2017} we know that the algebra $\A$ is a direct sum of two ideals
\begin{equation}
	\label{A_decomp}
	\A=\mathcal{M}\oplus \mathcal{S}=F\A F\oplus
	(id_{\A}-F)\A(id_{\A}-F),
\end{equation}
where the idempotent $F=\sum_{\alpha \vdash n-2}\sum_{\mu \in \alpha}F_{\mu }(\alpha )$ is the identity on the ideal $\mathcal{M}$, i.e. $F=id_{\mathcal{M}}$. The operators $F_{\mu }(\alpha )$ are projectors on irreps of $\A$ contained in the ideal  $\mathcal{M}$.
The
ideals $\mathcal{M}$ and $\mathcal{S}$ also act in the space $\mathcal{H}=(\mathbb{C}^{d})^{\otimes n}$. The idempotents $F$
and $id_{\A}-F$ satisfy the relation
\begin{equation}
	F+(id_{\A}-F)=id_{\A},\qquad
	F(id_{\A}-F)=0=(id_{\A}-F)F.
\end{equation}%
These properties of the projectors $F$ and $(id_{\A}-F)$ imply,
that the carrier space $\mathcal{H}$ of the algebra $\A$
splits into a direct sum of two orthogonal subspaces
\begin{equation}
	\mathcal{H=}F(\mathcal{H})\oplus (id_{\A}-F)(\mathcal{%
		H})\equiv \mathcal{H}_{\mathcal{M}}\oplus \mathcal{H}_{\mathcal{S}}.
\end{equation}%
and we have
\begin{equation}
	\forall m\in \mathcal{M}\quad m\mathcal{H}_{S}=0,
\end{equation}%
i.e. all elements of the ideal $\mathcal{M}$ act trivially on the subspace $\mathcal{H}%
	_{\mathcal{S}}$, so we have
\begin{equation}
	\forall m\in \mathcal{M}\quad \tr_{\mathcal{H}}(m)=\tr_{\mathcal{H}_{\mathcal{M}}}(m).
\end{equation}

\section{Structure of Square-root measurements in port-based teleportation}
\label{StructurePOVMs}
In this section we investigate the internal structure of POVMs $\{\Pi_a\}_{a=1}^N$ given in~\eqref{eq:measurements} and used by Alice in deterministic PBT scheme. In particular, our main goal here is to calculate the overlap of the signal states $\{\sigma_a\}_{a=1}^N$ with square-roots of POVMs $\{\sqrt{\Pi_a}\}_{a=1}^N$. As a byproduct we also 
prove a composition law for the square-root measurements (Proposition~\ref{p12}) giving conditions for their projectivity.

Let us start from general considerations and for the time being let us drop the extra term $\Delta$ from~\eqref{Delta} in every $\Pi_a$ and write
\begin{equation}
	\sum_{a=1}^N\Pi_a=\frac{1}{\sqrt{\rho}}\sum_{a=1}^N\sigma_a\frac{1}{\sqrt{\rho}}=\frac{1}{\sqrt{\rho}}\rho \frac{1}{\sqrt{\rho}}=id_{\operatorname{supp}(\rho)},
\end{equation}
where $\operatorname{supp}(\rho)$ denotes the support of the operator $\rho$. On the other hand from expression~\eqref{rho_spectral} and interpretation of the projectors $F_{\nu}(\alpha)$ introduced in Section~\ref{tools} one can conclude that $id_{\operatorname{supp}(\rho)}=id_{\mathcal{M}}$, where $\mathcal{M}$ denotes ideal in the decomposition of the algebra $\A$ in~\eqref{A_decomp}. Indeed, as we explained in the proof of Theorem~\ref{Frec_bound}, for computing of the mentioned overlap, we do not have to take into account $\Delta$, since $\operatorname{supp}(\Delta)\perp \operatorname{supp}(\sigma_a)$ for $1\leq a\leq n-1$.

It appears that further properties of the operators $\{\Pi_a\}_{a=1}^{n-1}$
depend on the relation between the numbers $d$ and $n$, i.e. between dimension of the port $d$ and total number of systems in $\mathcal{H}=(\mathbb{C}^d)^{\otimes n}$. It follows from~\cite{Moz1,MozJPA} that if $d\geq n-1$, then the irrep $M_{f}^{\alpha }$ in reduced basis $f$ of the algebra $
	A_{n}^{t_{n}}(d)$ is the full induced representation $\Phi ^{\alpha
	}=\ind_{S(n-2)}^{S(n-1)}(\alpha )$ of the subalgebra $V_{d}(S(n-1))$, i.e. we
have
\be
\label{full}
M_{f}^{\alpha }=\bigoplus _{\nu \in \Phi ^{\alpha }}\psi ^{\nu },
\ee
as a representation of $S(n-1)$, but if $d<n-1$ then we have
\be
\label{notfull}
M_{f}^{\alpha }=\bigoplus _{\nu \in \Phi ^{\alpha },\nu \neq \theta }\psi ^{\nu
},
\ee
where $\psi ^{\theta }$ is the irrep of $S(n-1)$ which does not occur in the decomposition. It takes place when height $h(\cdot)$ of a Young frame $\alpha$ satisfies $h(\alpha)=d$.

First we find an expression for the matrix elements of $M_f^{\alpha}[\Pi_a]$ of a given POVM $\Pi_a$ in the irrep $M_f^{\alpha}$ in the  reduced basis $f\equiv\{f_{j_{\nu} }^{\nu }:h(\nu )\leq d,\quad j_{\nu} =1,\ldots,d_{\nu }\}$ of the ideal $\Phi^{\alpha }$:
\begin{proposition}
	\label{p11}
	The matrix elements of POVM $\Pi_a$, where $1\leq a\leq n-1$, in the irrep $M_f^{\alpha}$ in reduced basis $f$ of the algebra $\A$ are the following:
	\be
	M_{f}^{\alpha }[\Pi _{a}]_{j_{\xi _{\omega }}j_{\zeta _{\nu }}}^{\xi
	_{\omega }\zeta _{\nu }}=\frac{1}{n-1}\frac{\sqrt{d_{\omega }d_{\nu }}}{%
	d_{\alpha }}\sum_{k_{\alpha }}\psi _{R}^{\omega }[(a,n-1)]_{j_{\xi _{\omega
			}}k_{\alpha }}^{\xi _{\omega }\alpha }\psi _{R}^{\nu }[(a,n-1)]_{k_{\alpha
		}j_{\zeta _{\nu }}}^{\alpha \zeta _{\nu }},
	\ee
	where $\omega,\nu \neq \theta$ if $h(\alpha)=d$.
\end{proposition}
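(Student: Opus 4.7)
The plan is to exploit the $S(n-1)$-covariance~\eqref{mes_cov} of the POVMs $\{\Pi_a\}$ in order to reduce the computation to a single reference operator, $\Pi_{n-1}$. Since $\rho=\sum_b\sigma_b$ is invariant under permutations of the first $n-1$ factors, so is $\rho^{-1/2}$ on $\operatorname{supp}(\rho)$; combined with the orbit relation $V[(a,n-1)]\sigma_{n-1}V^\dagger[(a,n-1)]=\sigma_a$ this gives
\[
\Pi_a=V[(a,n-1)]\,\Pi_{n-1}\,V^\dagger[(a,n-1)],
\]
and it suffices to compute $M_f^{\alpha}[\Pi_{n-1}]$ and then conjugate by the irrep of the transposition $(a,n-1)$.

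For $\Pi_{n-1}$ I work entirely inside the algebra $\A$: writing $\sigma_{n-1}=d^{-(n-1)}V^{t_n}[(n-1,n)]$ shows that $\sigma_{n-1}$ is (up to a scalar) a natural generator, while $\rho^{-1/2}$ lies in the same algebra via the spectral decomposition $\rho=\sum_{\alpha,\mu}\lambda_\mu(\alpha)F_\mu(\alpha)$ that is induced by~\eqref{A_decomp} together with the $S(n-1)$-centrality of $\rho$ inside each irrep block of $\mathcal{M}$. Combining this decomposition with the known action of $V^{t_n}[(n-1,n)]$ in the reduced basis~\cite{Moz1,MozJPA}, I will arrive at the rank-one type expression
\[
M_f^{\alpha}[\Pi_{n-1}]^{\xi_\omega\zeta_\nu}_{j_{\xi_\omega}j_{\zeta_\nu}}=\frac{1}{n-1}\,\frac{\sqrt{d_\omega d_\nu}}{d_\alpha}\,\delta_{\xi\alpha}\,\delta_{\zeta\alpha}\,\delta_{j_{\xi_\omega}j_{\zeta_\nu}}.
\]
The Kronecker symbols $\delta_{\xi\alpha}\delta_{\zeta\alpha}$ are the representation-theoretic shadow of the fact that $P^+_{n-1,n}$, when resolved in the Schur basis adapted to the chain $S(n-2)\subset S(n-1)$, has support only on the $\alpha$-component of each $S(n-1)$-irrep $\omega,\nu$ — i.e. the component obtained by adjoining a single box to $\alpha\vdash n-2$.

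The last step is purely algebraic. Since $V[(a,n-1)]\in V(S(n-1))$, its image in $M_f^{\alpha}$ is block-diagonal in the $S(n-1)$-irrep label, the $\omega$-block being $\psi_R^{\omega}[(a,n-1)]$ expressed in the basis adapted to $S(n-2)$. Multiplying
\[
M_f^{\alpha}[\Pi_a]=M_f^{\alpha}[V[(a,n-1)]]\,M_f^{\alpha}[\Pi_{n-1}]\,M_f^{\alpha}[V^\dagger[(a,n-1)]],
\]
the Kronecker factors $\delta_{\xi\alpha}\delta_{\zeta\alpha}$ in $M_f^{\alpha}[\Pi_{n-1}]$ collapse the two intermediate $S(n-2)$-labels to $\alpha$, leaving a single contraction over $k_\alpha=1,\dots,d_\alpha$ and reproducing precisely the claimed formula. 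The restriction $\omega,\nu\neq\theta$ when $h(\alpha)=d$ is automatic from~\eqref{notfull}: the irrep $\psi^{\theta}$ is simply absent from $M_f^{\alpha}|_{S(n-1)}$ in that range, so the corresponding basis labels do not occur.

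The main technical difficulty is Step 2. Pinning down the eigenvalues $\lambda_\mu(\alpha)$ of $\rho$ on each irrep block (taken from~\cite{Studzinski2017}) and the explicit form of $V^{t_n}[(n-1,n)]$ in the reduced basis~\cite{Moz1,MozJPA} is straightforward individually, but verifying that their product conspires to produce exactly the normalisation $\tfrac{1}{n-1}\tfrac{\sqrt{d_\omega d_\nu}}{d_\alpha}$ — with the correct square-root dependence on $d_\omega,d_\nu$ and not, say, linear dependence — is the most delicate point, and it is precisely this cancellation that makes the downstream computations for the recycling fidelity tractable.
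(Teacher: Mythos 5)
Your argument is correct and rests on the same two ingredients as the paper's proof — the diagonality of $\rho$ in the PRIR basis with eigenvalues $\lambda_{\nu}(\alpha)=\gamma_{\nu}(\alpha)/d^{N}$, and the PRIR matrix form of the partially transposed transposition — so that the $\sqrt{\gamma_{\omega}(\alpha)\gamma_{\nu}(\alpha)}$ factors cancel against $\rho^{-1/2}$ exactly as you anticipate, leaving the normalisation $\tfrac{1}{n-1}\tfrac{\sqrt{d_{\omega}d_{\nu}}}{d_{\alpha}}$. The only (cosmetic) difference is that you first reduce to the reference case $a=n-1$ via the $S(n-1)$-covariance and then conjugate by the block-diagonal $\psi_{R}^{\omega}[(a,n-1)]$, whereas the paper invokes the general-$a$ PRIR formula for $V^{t_n}[(a,n)]$ (Proposition~\ref{BP20}) directly; both routes yield the stated matrix elements.
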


\begin{proof}
	In the irrep $\Phi
		^{\alpha }$ of the algebra $\mathcal{A}_{n}^{t_{n}}(d)$ in PRIR basis $M_f^{\alpha}$ (see Appendix~\ref{Prir} for a short summary) the  matrix form for the operators $
		V^{t_{n}}[(a,n)]$ is given through expression~\ref{blee1}.
	Next we know by Lemma 35 in~\cite{MozJPA} that in the irrep $\Phi ^{\alpha }$ of the algebra $%
		\mathcal{A}_{n}^{t_{n}}(d)$ in PRIR basis $M_f^{\alpha}$ the operator $\rho$ from~\eqref{rho_spectral} is diagonal
	\be
	M_{f}^{\alpha }[\rho ]_{j_{\xi _{\omega }}j_{\zeta _{\nu }}}^{\xi _{\omega
	}\zeta _{\nu }}=\delta ^{\omega \nu }\delta ^{\xi _{\omega }\zeta _{\nu
		}}\delta _{j_{\xi _{\omega }}j_{\zeta _{\nu }}}\lambda _{\nu }(\alpha ),
	\ee
	where the numbers $\lambda _{\nu }(\alpha )$ are given in~\eqref{llambda}. Therefore we have
	\be
	M_{f}^{\alpha }\left[ \frac{1}{\sqrt{\rho }}\right] _{j_{\xi _{\omega }}j_{\zeta _{\nu
	}}}^{\xi _{\omega }\zeta _{\nu }}=\delta ^{\omega \nu }\delta ^{\xi _{\omega
		}\zeta _{\nu }}\delta _{j_{\xi _{\omega }}j_{\zeta _{\nu }}}\frac{1}{\sqrt{%
			\lambda _{\nu }(\alpha )}}
	\ee
	and further
	\be
	M_{f}^{\alpha }[\Pi _{a}]_{j_{\xi _{\omega }}j_{\zeta _{\nu }}}^{\xi
	_{\omega }\zeta _{\nu }}=\frac{1}{n-1}\frac{\sqrt{d_{\omega }d_{\nu }}}{%
	d_{\alpha }}\sum_{k_{\alpha }}\psi _{R}^{\omega }[(a,n-1)]_{j_{\xi _{\omega
			}}k_{\alpha }}^{\xi _{\omega }\alpha }\psi _{R}^{\nu }[(a,n-1)]_{k_{\alpha
		}j_{\zeta _{\nu }}}^{\alpha \zeta _{\nu }}.
	\ee
	This finishes the proof.
\end{proof}

\begin{proposition}
	\label{p12}
	For any PRIR representation $M_{f}^{\alpha}$ and POVM operators $\{\Pi_a\}_{a=1}^{n-1}$, we have
	\begin{equation}
		\label{eq:p12}
		\forall 1\leq a\leq n-1 \quad M_{f}^{\alpha }[\Pi _{a}]M_{f}^{\alpha }[\Pi _{a}]=\left(1-\frac{d_{\theta }}{%
				(n-1)d_{\alpha }}\right)M_{f}^{\alpha }[\Pi _{a}].
	\end{equation}%
	If $h(\alpha )<d$ then $d_{\theta }=0$, and $M_{f}^{\alpha
			}[\Pi _{a}]$ is  a projector. If $h(\theta )=d$, then $%
		d_{\theta }\neq 0$ and $M_{f}^{\alpha }[\Pi _{a}]$ is a pseudo-projector.
\end{proposition}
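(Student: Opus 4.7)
The plan is to exploit the rank factorization of $M_f^\alpha[\Pi_a]$ that is visible in Proposition~\ref{p11}. Using that $(a,n-1)$ is an involution and that the irreps of the symmetric group are realized by real orthogonal matrices in the PRIR basis, the transposition identity $\psi_R^\nu[(a,n-1)]^T=\psi_R^\nu[(a,n-1)]$ turns the second factor in Proposition~\ref{p11} into the transpose of the first. I would therefore identify
\[
M_f^\alpha[\Pi_a]=YY^T,\qquad Y_{(\omega,\xi_\omega,j_{\xi_\omega}),\,k_\alpha}=\frac{1}{\sqrt{n-1}}\frac{\sqrt{d_\omega}}{\sqrt{d_\alpha}}\,\psi_R^\omega[(a,n-1)]_{j_{\xi_\omega}k_\alpha}^{\xi_\omega\alpha},
\]
where $Y$ has rows indexed by PRIR basis vectors of $M_f^\alpha$ and columns indexed by $k_\alpha=1,\dots,d_\alpha$. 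Then $M_f^\alpha[\Pi_a]^2=Y(Y^TY)Y^T$, so proving~\eqref{eq:p12} reduces to computing the $d_\alpha\times d_\alpha$ Gram matrix $Y^TY$.

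The main step is to evaluate
\[
(Y^TY)_{k_\alpha k_\alpha'}=\frac{1}{(n-1)d_\alpha}\sum_\omega d_\omega\sum_{\xi_\omega,j_{\xi_\omega}}\psi_R^\omega[(a,n-1)]_{j_{\xi_\omega}k_\alpha}^{\xi_\omega\alpha}\,\psi_R^\omega[(a,n-1)]_{j_{\xi_\omega}k_\alpha'}^{\xi_\omega\alpha}.
\]
For fixed $\omega$ the inner double sum runs through a complete basis of the carrier space of $\omega$ and therefore equals the $(\alpha,\alpha)$-block entry $(k_\alpha,k_\alpha')$ of $\psi_R^\omega[(a,n-1)]^T\psi_R^\omega[(a,n-1)]=\psi_R^\omega[e]=I$, i.e. $\delta_{k_\alpha k_\alpha'}$. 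Here I use that every $\omega$ in the sum lies in $\Phi^\alpha$, so by the branching rule $\alpha$ occurs in $\omega\downarrow_{S(n-2)}$ and the $(\alpha,\alpha)$ block is actually present.

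Putting this together, $(Y^TY)_{k_\alpha k_\alpha'}=((n-1)d_\alpha)^{-1}\bigl(\sum_\omega d_\omega\bigr)\delta_{k_\alpha k_\alpha'}$, and the range of $\omega$ is dictated by~\eqref{full} versus~\eqref{notfull}. When $h(\alpha)<d$ every $\omega\in\Phi^\alpha$ contributes and $\sum_\omega d_\omega=\dim\ind_{S(n-2)}^{S(n-1)}(\alpha)=(n-1)d_\alpha$, so $Y^TY=I_{d_\alpha}$ and $M_f^\alpha[\Pi_a]^2=YY^T=M_f^\alpha[\Pi_a]$ is a genuine projector. When $h(\alpha)=d$ the unique $\theta\in\Theta$ with $h(\theta)=d+1$ is excluded and $\sum_{\omega\neq\theta}d_\omega=(n-1)d_\alpha-d_\theta$, producing the claimed coefficient $1-d_\theta/((n-1)d_\alpha)$.

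The only delicate step is the PRIR index bookkeeping: one must justify that summing over $(\xi_\omega,j_{\xi_\omega})$ genuinely completes to the identity on the carrier of $\omega$, and that reading off the upper-index $(\alpha,\alpha)$ block of that identity indeed gives $\delta_{k_\alpha k_\alpha'}$. Once those conventions are fixed, the computation is essentially the involution identity for Young's orthogonal irreps combined with the elementary dimensional identity for the induced representation of $S(n-1)$, truncated by the Schur--Weyl cutoff at height $d$.
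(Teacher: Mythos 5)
Your proof is correct and follows essentially the same route as the paper: the paper multiplies the two matrix-element formulas of Proposition~\ref{p11} directly and collapses the intermediate PRIR index using $\psi_R^{\rho}[(a,n-1)]\,\psi_R^{\rho}[(a,n-1)]=\psi_R^{\rho}[e]$ together with $\sum_{\rho\in\Phi^\alpha,\,\rho\neq\theta}d_\rho=(n-1)d_\alpha-d_\theta$, which are exactly the involution identity and dimension count underlying your Gram-matrix computation $Y^TY=\bigl(1-\tfrac{d_\theta}{(n-1)d_\alpha}\bigr)I_{d_\alpha}$. The $YY^T$ factorization is a clean repackaging of the same calculation, not a different argument.
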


\begin{proof}
	For the proof we use expression for the matrix elements of $\Pi_a$ presented in Proposition~\ref{p11}. Let us calculate the composition in~\eqref{eq:p12} in PRIR indices:
	\begin{align}
		\left[M_{f}^{\alpha }[\Pi _{a}]M_{f}^{\alpha }[\Pi _{a}]\right]^{\xi_{\omega}\xi_{\nu}}_{j_{\xi_{\omega}}j_{\xi_{\nu}}} & =\frac{1}{(n-1)^2}\sum_{\substack{\rho \in \Phi^{\alpha}                    \\ \rho\neq \theta}}\sum_{k_{\alpha},l_{\alpha}}\frac{\sqrt{d_{\omega}d_{\nu}}d_{\rho}}{d_{\alpha}^2} \psi_R^{\omega}[(a,n-1)]^{\xi_{\omega} \ \alpha}_{j_{\xi_{\omega}}k_{\alpha}} \psi_R^{\rho}[(a,n-1),(a,n-1)]^{\alpha \ \alpha}_{k_{\alpha}l_{\alpha}}\times \\
		                                                                                                                        & \times \psi_R^{\nu}[(a,n-1)]^{\alpha \ \xi_{\nu}}_{l_{\alpha}j_{\xi_{\nu}}} \\
		                                                                                                                        & =\frac{1}{(n-1)^2}\sum_{\substack{\rho \in \Phi^{\alpha}                    \\ \rho\neq \theta}}d_{\rho}\sum_{k_{\alpha},l_{\alpha}}\delta_{k_{\alpha}l_{\alpha}}\frac{\sqrt{d_{\omega}d_{\nu}}}{d_{\alpha}^2} \psi_R^{\omega}[(a,n-1)]^{\xi_{\omega} \ \alpha}_{j_{\xi_{\omega}}k_{\alpha}}\psi^{\nu}_R[(a,n-1)]^{\alpha \ \xi_{\omega}}_{l_{\alpha}j_{\xi_{\omega}}}.
	\end{align}
	Observing that
	\be
	\sum_{\substack{\rho \in \Phi^{\alpha}\\ \rho\neq \theta}}d_{\rho}=(n-1)d_{\alpha}-d_{\theta},
	\ee
	we have
	\begin{align}
		\left[M_{f}^{\alpha }[\Pi _{a}]M_{f}^{\alpha }[\Pi _{a}]\right]^{\xi_{\omega}\xi_{\nu}}_{j_{\xi_{\omega}}j_{\xi_{\nu}}} & =\frac{1}{(n-1)^2}\frac{\sqrt{d_{\omega}d_{\nu}}}{d_{\alpha}^2}((n-1)d_{\alpha}-d_{\theta})\sum_{k_{\alpha}} \psi_R^{\omega}[(a,n-1)]^{\xi_{\omega} \ \alpha}_{j_{\xi_{\omega}}k_{\alpha}}\psi_R^{\nu}[(a,n-1)]^{\alpha \ \xi_{\nu}}_{k_{\alpha}j_{\xi_{\nu}}} \\
		                                                                                                                        & =\frac{(n-1)d_{\alpha}-d_{\theta}}{(n-1)d_{\alpha}}M_f^{\alpha}[\Pi_a]^{\xi_{\omega}\xi_{\nu}}_{j_{\xi_{\omega}}j_{\xi_{\nu}}}=\left(1-\frac{d_{\theta}}{(n-1)d_{\alpha}}\right)M_f^{\alpha}[\Pi_a]^{\xi_{\omega}\xi_{\nu}}_{j_{\xi_{\omega}}j_{\xi_{\nu}}},
	\end{align}
	where in the second equality we use direct expression for $M_f^{\alpha}[\Pi_a]^{\xi_{\omega}\xi_{\nu}}_{j_{\xi_{\omega}}j_{\xi_{\nu}}}$ from Proposition~\ref{p11}. We see that whenever $d_{\theta}\neq 0$ the POVMs $\{\Pi_a\}_{a=1}^N$ are pseudo-projectors with the factor $1-\frac{d_{\theta}}{(n-1)d_{\alpha}}$, this is always the case when $d\leq n-1$. Finally when $h(\alpha)<d$, which is always the case when $d\geq n-1$, then $d_{\theta}=0$, since there are no irreps to remove, and the above equation reduces to
	\begin{equation}
		\left[M_{f}^{\alpha }[\Pi _{a}]M_{f}^{\alpha }[\Pi _{a}]\right]^{\xi_{\omega}\xi_{\nu}}_{j_{\xi_{\omega}}j_{\xi_{\nu}}}=M_f^{\alpha}[\Pi_a]^{\xi_{\omega}\xi_{\nu}}_{j_{\xi_{\omega}}j_{\xi_{\nu}}}
	\end{equation}
	showing that POVMs $\{\Pi_a\}_{a=1}^N$ are projectors in this regime.
\end{proof}

Having the above, we are in position to compute the square root from a given POVM $\Pi_a$:
\begin{proposition}
	\label{p13}
	For any PRIR representation $M_{f}^{\alpha }$ in reduced basis $f$ and any POVM operator $\Pi_a$, we have%
	\begin{equation}
		\label{eq:sqrt}
		M_{f}^{\alpha }[\sqrt{\Pi _{a}}]_{j_{\xi _{\omega }}j_{\zeta _{\nu }}}^{\xi
		_{\omega }\zeta _{\nu }}=\frac{1}{\sqrt{(n-1)d_{\alpha }-d_{\theta }}}\frac{%
		\sqrt{d_{\omega }d_{\nu }}}{\sqrt{(n-1)d_{\alpha }}}\sum_{k_{\alpha }}\psi
		_{R}^{\omega }[(a,n-1)]_{j_{\xi _{\omega }}k_{\alpha }}^{\xi _{\omega }\alpha
		}\psi _{R}^{\nu }[(a,n-1)]_{k_{\alpha }j_{\zeta _{\nu }}}^{\alpha \zeta _{\nu
		}}.
	\end{equation}%
\end{proposition}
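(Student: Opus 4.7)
The plan is to leverage the pseudo-projector identity from Proposition~\ref{p12} directly: the equation
\[
M_f^\alpha[\Pi_a]\,M_f^\alpha[\Pi_a] = c\, M_f^\alpha[\Pi_a], \qquad c := 1 - \frac{d_\theta}{(n-1)d_\alpha} = \frac{(n-1)d_\alpha - d_\theta}{(n-1)d_\alpha},
\]
says that $M_f^\alpha[\Pi_a]$ is, up to the scalar $c$, idempotent. Concretely, writing $Q := c^{-1} M_f^\alpha[\Pi_a]$ one checks $Q^2 = c^{-2}\, c\, M_f^\alpha[\Pi_a] = Q$, so $Q$ is a genuine projector (on the support of $M_f^\alpha[\Pi_a]$), and $M_f^\alpha[\Pi_a] = c\, Q$.

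Next I would use the elementary fact that the (positive) square root of $c\,Q$ for a projector $Q$ and $c>0$ is $\sqrt{c}\, Q$, which I rewrite as $\sqrt{c}\, Q = \sqrt{c}\cdot c^{-1}\,M_f^\alpha[\Pi_a] = c^{-1/2}\, M_f^\alpha[\Pi_a]$. Hence
\[
M_f^\alpha[\sqrt{\Pi_a}] = \sqrt{\frac{(n-1)d_\alpha}{(n-1)d_\alpha - d_\theta}}\; M_f^\alpha[\Pi_a].
\]
Substituting the explicit matrix elements of $M_f^\alpha[\Pi_a]$ from Proposition~\ref{p11} and collecting the prefactor
\[
\frac{1}{n-1}\cdot\frac{\sqrt{d_\omega d_\nu}}{d_\alpha}\cdot\sqrt{\frac{(n-1)d_\alpha}{(n-1)d_\alpha - d_\theta}} = \frac{\sqrt{d_\omega d_\nu}}{\sqrt{(n-1)d_\alpha - d_\theta}\,\sqrt{(n-1)d_\alpha}}
\]
yields exactly the formula in~\eqref{eq:sqrt}.

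The main subtlety, rather than any serious obstacle, is justifying that "square root" is the right operation on the domain where $M_f^\alpha[\Pi_a]$ is only a pseudo-projector ($h(\alpha)=d$ case, $d_\theta\neq 0$). Here I would appeal to the discussion at the start of Section~\ref{StructurePOVMs}: $\sqrt{\Pi_a}$ is defined on $\operatorname{supp}(\rho) = \operatorname{supp}(\mathrm{id}_\mathcal{M})$, and the excess term $\Delta$ is orthogonal to this support, so restricting to each irrep $M_f^\alpha$ one is working with a positive operator on its own range. In that regime $Q = c^{-1}M_f^\alpha[\Pi_a]$ is a positive projector and the positive square root is uniquely $\sqrt{c}\,Q$. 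When $h(\alpha)<d$ one has $d_\theta=0$, $c=1$, and the expression collapses to $M_f^\alpha[\Pi_a]$ itself, as expected from Proposition~\ref{p12}. No further representation-theoretic input beyond Propositions~\ref{p11}--\ref{p12} is needed.
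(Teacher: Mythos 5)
Your proposal is correct and follows essentially the same route as the paper: deduce from the pseudo-projector relation of Proposition~\ref{p12} that $M_f^{\alpha}[\sqrt{\Pi_a}]=\bigl(1-\tfrac{d_{\theta}}{(n-1)d_{\alpha}}\bigr)^{-1/2}M_f^{\alpha}[\Pi_a]$ and then substitute the matrix elements from Proposition~\ref{p11}. The only difference is that you spell out the projector/positivity justification (which the paper leaves implicit), and your prefactor bookkeeping checks out.
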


\begin{proof}
	For the proof it is enough to deduce from Proposition~\ref{p12} that for $1\leq a\leq n-1$ one has
	\begin{equation}
		M_f^{\alpha}[\sqrt{\Pi_a}]=\frac{1}{\sqrt{1-\frac{d_{\theta}}{(n-1)d_{\alpha}}}}M_f^{\alpha}[\Pi_a].
	\end{equation}
	Writing the above in PRIR indices and using the statement of Proposition~\ref{p11} we obtain expression~\eqref{eq:sqrt}.
\end{proof}

Using this we get

\begin{proposition}
	For any PRIR representation $M_{f}^{\alpha }$ we have%
	\begin{equation}
		\label{neqzero}
		\tr\left(M_{f}^{\alpha }[\sqrt{\Pi _{a}}]M_{f}^{\alpha }[V^{\prime }[(a,n)]]\right)=\frac{%
		1}{\sqrt{(n-1)d_{\alpha }-d_{\theta }}}\frac{\sqrt{d_{\alpha }}}{\sqrt{(n-1)}%
		}\frac{1}{m_{\alpha }}\left(\sum_{\substack{\nu \in \Phi^{\alpha}\\ \nu\neq \theta}}\sqrt{m_{\nu }d_{\nu }}\right)^{2}.
	\end{equation}%
	In the case $h(\alpha )<d$, when $d_{\theta }=0$, we have
	\begin{equation}
		\label{zero}
		\tr\left(M_{f}^{\alpha }[\sqrt{\Pi _{a}}]M_{f}^{\alpha }[V^{\prime }[(a,n)]]\right)=\frac{%
			1}{(n-1)m_{\alpha }}\left(\sum_{\nu \in \Phi^{\alpha}  }\sqrt{m_{\nu }d_{\nu }}%
		\right)^{2}.
	\end{equation}
\end{proposition}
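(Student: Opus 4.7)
The plan is to evaluate the trace directly by combining the explicit PRIR expression for $M_f^\alpha[\sqrt{\Pi_a}]$ from Proposition~\ref{p13} with the known form of $M_f^\alpha[V'[(a,n)]]$ established in~\cite{Moz1,MozJPA} (referenced as equation~\eqref{blee1}), which in the reduced basis reads
\begin{equation*}
M_f^\alpha[V'[(a,n)]]^{\zeta_\nu \xi_\omega}_{j_{\zeta_\nu} j_{\xi_\omega}} = \frac{\sqrt{m_\omega m_\nu}}{m_\alpha}\sum_{l_\alpha}\psi_R^\nu[(a,n-1)]^{\zeta_\nu \alpha}_{j_{\zeta_\nu} l_\alpha}\psi_R^\omega[(a,n-1)]^{\alpha \xi_\omega}_{l_\alpha j_{\xi_\omega}},
\end{equation*}
with $\omega,\nu\neq \theta$ when $h(\alpha)=d$. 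Both operators have the same rank-one-per-$k_\alpha$ tensor structure built from matrix elements of $\psi_R^\nu[(a,n-1)]$, which is what makes the trace tractable.

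First I would write the trace $\tr(M_f^\alpha[\sqrt{\Pi_a}]M_f^\alpha[V'[(a,n)]])$ as the quadruple sum over $(\omega,\xi_\omega,j_{\xi_\omega})$ and $(\nu,\zeta_\nu,j_{\zeta_\nu})$, pulling the common $\sum_{k_\alpha},\sum_{l_\alpha}$ outside. The integrand factors into two independent bracketed blocks: one in the indices $(\xi_\omega,j_{\xi_\omega})$ and one in $(\zeta_\nu,j_{\zeta_\nu})$. Each block has the form $\sum_{\xi_\omega,j_{\xi_\omega}} \psi_R^\omega[(a,n-1)]^{\xi_\omega\alpha}_{j_{\xi_\omega}k_\alpha}\psi_R^\omega[(a,n-1)]^{\alpha\xi_\omega}_{l_\alpha j_{\xi_\omega}}$, which by the unitarity and involutive character of the transposition $(a,n-1)$ collapses (using the PRIR orthogonality $\sum_{\xi_\omega,j_{\xi_\omega}}\psi^\omega(\pi)^{\alpha\xi_\omega}_{l_\alpha j_{\xi_\omega}}\overline{\psi^\omega(\pi)^{\alpha\xi_\omega}_{k_\alpha j_{\xi_\omega}}} = \delta_{k_\alpha l_\alpha}$) to $\delta_{k_\alpha l_\alpha}$; the analogous collapse happens for the $\nu$-block.

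After the two Kronecker deltas appear, the $\sum_{k_\alpha,l_\alpha}\delta_{k_\alpha l_\alpha}^2$ produces a factor of $d_\alpha$, and the remaining double sum over irreps factorises:
\begin{equation*}
\sum_{\omega,\nu\neq\theta}\frac{\sqrt{m_\omega m_\nu}}{m_\alpha}\sqrt{d_\omega d_\nu}=\frac{1}{m_\alpha}\left(\sum_{\substack{\nu\in\Phi^\alpha\\\nu\neq\theta}}\sqrt{m_\nu d_\nu}\right)^{\!2}.
\end{equation*}
Collecting the prefactor $\frac{1}{\sqrt{(n-1)d_\alpha-d_\theta}\sqrt{(n-1)d_\alpha}}$ coming from Proposition~\ref{p13} and multiplying by the $d_\alpha$ from the delta sum yields $\frac{d_\alpha}{\sqrt{(n-1)d_\alpha-d_\theta}\sqrt{(n-1)d_\alpha}}=\frac{1}{\sqrt{(n-1)d_\alpha-d_\theta}}\cdot\frac{\sqrt{d_\alpha}}{\sqrt{n-1}}$, giving exactly formula~\eqref{neqzero}. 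The specialisation~\eqref{zero} follows immediately by setting $d_\theta=0$ when $h(\alpha)<d$, in which case no irrep is removed and the sum runs over all of $\Phi^\alpha$.

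The main obstacle is not the algebra itself, which is linear once the two PRIR formulas are in hand, but rather the careful bookkeeping of the PRIR labels $(\alpha,\nu,\zeta_\nu,j_{\zeta_\nu},k_\alpha)$ across the two tensor factors, making sure the orthogonality identity is applied to an involution so that $(a,n-1)^{-1}=(a,n-1)$ delivers the identity matrix block and not a nontrivial $\psi^\omega[(a,n-1)^2]$; and to recognise that the coefficient $\sqrt{m_\omega m_\nu}/m_\alpha$ in $M_f^\alpha[V'[(a,n)]]$ is precisely what produces the perfect square $\bigl(\sum_\nu\sqrt{m_\nu d_\nu}\bigr)^{2}$ appearing in the claim.
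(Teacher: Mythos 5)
Your proposal is correct and follows essentially the same route as the paper's proof: both contract the PRIR expression for $M_f^\alpha[\sqrt{\Pi_a}]$ from Proposition~\ref{p13} against the form of $M_f^\alpha[V'[(a,n)]]$ from Proposition~\ref{BP20} (your rewriting of the prefactor as $\sqrt{m_\omega m_\nu}/m_\alpha$ correctly absorbs the $\gamma_\nu(\alpha)$ factors), collapse the internal sums to $\delta_{k_\alpha l_\alpha}$ via unitarity of $\psi_R^\omega$ applied to the involution $(a,n-1)$, pick up the factor $d_\alpha$, and factorise the remaining irrep sum into the perfect square. The only difference is organisational — you handle the general case and specialise to $d_\theta=0$, while the paper runs the two height cases separately — which does not change the argument.
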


\begin{proof}
	First we prove expression~\eqref{zero}, when $h(\alpha)<d$. It means that in this particular case one has $d_{\theta}=0$ and the irrep $M_f^{\alpha}$ is the full induced representation at it is described in~\eqref{full}. Taking form of $M_f^{\alpha}[V'[(a,n)]]^{\xi_{\nu} \ \xi_{\rho}}_{j_{\xi_{\nu}}j_{\xi_{\rho}}}$ from Proposition~\ref{BP20} and form of $M_f^{\alpha}[\sqrt{\Pi_a}]^{\xi_{\omega} \ \xi_{\nu}}_{j_{\xi_{\omega}}j_{\xi_{\nu}}}$ from Proposition~\ref{p13}, we write:
	\begin{align}
		 & \sum_{\nu \in \Phi^{\alpha}  }\sum_{\xi_{\nu}, j_{\xi_{\nu}}} M_f^{\alpha}[\sqrt{\Pi_a}]^{\xi_{\omega} \ \xi_{\nu}}_{j_{\xi_{\omega}}j_{\xi_{\nu}}}M_f^{\alpha}[V'[(a,n)]]^{\xi_{\nu} \ \xi_{\rho}}_{j_{\xi_{\nu}}j_{\xi_{\rho}}}=\sum_{\nu \in \Phi^{\alpha}  }\sum_{ \xi_{\nu}, j_{\xi_{\nu}}}\frac{d_{\omega}d_{\nu}}{(n-1)d_{\alpha}}\sum_{k_{\alpha}}\psi_R^{\omega}[(a,n-1)]^{\xi_{\omega} \ \alpha}_{j_{\xi_{\omega}}k_{\alpha}}\psi^{\rho}_R[(a,n-1)]^{\alpha \ \xi_{\nu}}_{k_{\alpha}j_{\xi_{\nu}}}\times \\
		 & \times \frac{\sqrt{m_{\nu}m_{\rho}}}{m_{\alpha}}\sum_{l_{\alpha}}\psi_R^{\nu}[(a,n-1)]^{\xi_{\omega} \ \alpha}_{j_{\xi_{\nu}}l_{\alpha}}\psi^{\rho}_R[(a,n-1)]^{\alpha \ \xi_{\rho}}_{l_{\alpha}j_{\xi_{\rho}}}                                                                                                                                                                                                                                                                                                    \\
		 & =\frac{\sqrt{m_{\rho}d_{\omega}}}{(n-1)m_{\alpha}d_{\alpha}}\sum_{\nu, \xi_{\nu}, j_{\xi_{\nu}}}\sqrt{m_{\nu}d_{\nu}}\sum_{k_{\alpha},l_{\alpha}}\delta_{k_{\alpha}l_{\alpha}} \psi^{\omega}_R[(a,n-1)]_{j_{\xi_{\omega}}k_{\alpha}}^{\xi_{\omega} \ \alpha} \psi^{\rho}_R[(a,n-1)]^{\alpha \ \xi_{\rho}}_{l_{\alpha} j_{\xi_{\rho}}}                                                                                                                                                                              \\
		 & =\frac{\sqrt{m_{\rho}d_{\omega}}\sum_{\nu}\sqrt{m_{\nu}d_{\nu}}}{(n-1)d_{\alpha}m_{\alpha}}\sum_{k_{\alpha}}\psi_R^{\omega}[(a,n-1)]^{\xi_{\omega} \ \alpha}_{j_{\xi_{\omega}} k_{\alpha}}\psi^{\rho}_{R}[(a,n-1)]^{\alpha \ \xi_{\rho}}_{k_{\alpha}j_{\xi_{\rho}}}.
	\end{align}
	Having the above expression we are in position to evaluate trace $\tr\left(M_{f}^{\alpha }[\sqrt{\Pi _{a}}]M_{f}^{\alpha }[V^{\prime }[(a,n)]]\right)$. We have
	\begin{align}
		\tr\left(M_{f}^{\alpha }[\sqrt{\Pi _{a}}]M_{f}^{\alpha }[V^{\prime }[(a,n)]]\right)=\frac{\sum_{\nu \in \Phi^{\alpha}}\sqrt{m_{\nu}d_{\nu}}}{(n-1)m_{\alpha}d_{\alpha}}\sum_{\omega \in \Phi^{\alpha}}\sqrt{m_{\omega}d_{\omega}}\sum_{k_{\alpha}}\delta_{k_{\alpha}k_{\alpha}}=\frac{%
			1}{(n-1)m_{\alpha }}\left(\sum_{\nu \in \Phi^{\alpha}  }\sqrt{m_{\nu }d_{\nu }}%
		\right)^{2}.
	\end{align}
	Now, we compute the case when $h(\alpha)=d$ and an irrep $M_f^{\alpha}$ of the algebra $\A$ has a form presented in~\eqref{notfull}. In this case we consider only such irreps $\nu \in \Phi^{\alpha}$ for which $\nu \neq \theta$:
	\begin{align}
		\sum_{\substack{\nu \in \Phi^{\alpha}                                                                                                                                                \\ \nu\neq \theta}}\sum_{\xi_{\nu}, j_{\xi_{\nu}}} M_f^{\alpha}[\sqrt{\Pi_a}]^{\xi_{\omega} \ \xi_{\nu}}_{j_{\xi_{\omega}}j_{\xi_{\nu}}}M_f^{\alpha}[V'[(a,n)]]^{\xi_{\nu} \ \xi_{\rho}}_{j_{\xi_{\nu}}j_{\xi_{\rho}}}&=\frac{\sum_{\substack{\nu \in \Phi^{\alpha}\\ \nu\neq \theta}}\sqrt{m_{\nu}d_{\nu}}}{\sqrt{(n-1)d_{\alpha}-d_{\theta}}\sqrt{(n-1)d_{\alpha}}}\frac{\sqrt{d_{\omega}m_{\rho}}}{m_{\alpha}}\times\\
		 & \times\sum_{k_{\alpha}} \psi_R^{\omega}[(a,n-1)]^{\xi_{\omega} \ \alpha}_{j_{\xi_{\omega}} k_{\alpha}}\psi^{\rho}_{R}[(a,n-1)]^{\alpha \ \xi_{\rho}}_{k_{\alpha} j_{\xi_{\rho}}}.
	\end{align}
	Computing the trace from the above expression we have
	\begin{align}
		\tr\left(M_{f}^{\alpha }[\sqrt{\Pi _{a}}]M_{f}^{\alpha }[V^{\prime }[(a,n)]]\right) & =\frac{\sum_{\substack{\nu \in \Phi^{\alpha} \\ \nu\neq \theta}}\sqrt{m_{\nu}d_{\nu}}}{\sqrt{(n-1)d_{\alpha}-d_{\theta}}\sqrt{(n-1)d_{\alpha}}}\frac{\sum_{\substack{\omega \in \Phi^{\alpha}\\ \omega\neq \theta}}\sqrt{d_{\omega}m_{\omega}}}{m_{\alpha}}\sum_{k_{\alpha}}\delta_{k_{\alpha}k_{\alpha}}\\
		                                                                                    & =\frac{                                      %
		1}{\sqrt{(n-1)d_{\alpha }-d_{\theta }}}\frac{\sqrt{d_{\alpha }}}{\sqrt{(n-1)}%
		}\frac{1}{m_{\alpha }}\left(\sum_{\substack{\nu \in \Phi^{\alpha}                                                                  \\ \nu\neq \theta}}\sqrt{m_{\nu }d_{\nu }}\right)^{2}.
	\end{align}
	This finishes the proof.
\end{proof}

From this we deduce the value of the trace over full Hilbert space $\mathcal{H}=(\mathbb{C}^d)^{\otimes n}$ not only in a particular irrep $M_f^{\alpha}$ of the algebra $\A$. Namely we have the following:

\begin{theorem}
	\label{thm:trH}
	For numbers $n\in \mathbb{N}$, $d\geq 2$ in the algebra $\A$ we have
	\begin{equation}
		\label{trH}
		\tr_{\mathcal{H}}\left(\sqrt{\Pi _{a}}V^{\prime }[(a,n)]\right)
		=\sum_{\alpha :h(\alpha )<d}\frac{1}{n-1}\left(\sum_{\nu \in \alpha}
		\sqrt{m_{\nu }d_{\nu }}\right)^{2}+\sum_{\alpha :h(\alpha )=d}\frac{1}{\sqrt{%
		(n-1)d_{\alpha }-d_{\theta }}}\frac{\sqrt{d_{\alpha }}}{\sqrt{(n-1)}}%
		\left(\sum_{\substack{\nu \in \Phi^{\alpha}\\ \nu\neq \theta}}\sqrt{m_{\nu }d_{\nu }}\right)^{2}.
	\end{equation}
\end{theorem}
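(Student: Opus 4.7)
The plan is to assemble the trace over the full Hilbert space from the block-by-block traces established in the preceding proposition. Concretely, $\mathcal{H}=(\mathbb{C}^d)^{\otimes n}$ carries a representation of $\A$ which, restricted to the ideal $\mathcal{M}$, decomposes as $\bigoplus_{\alpha} \mathbb{C}^{m_\alpha}\otimes \mathcal{V}_\alpha$, where $\mathcal{V}_\alpha$ carries the irrep $M_f^\alpha$ and the sum runs over the Young frames $\alpha$ labelling the nontrivial blocks of $\mathcal{M}$. Since both $\sqrt{\Pi_a}$ and $V'[(a,n)]$ sit inside $\mathcal{M}$ (the $\Delta$-term has support orthogonal to $\supp(\rho)=\mathcal{H}_{\mathcal{M}}$, as already remarked at the beginning of Section~\ref{StructurePOVMs}), the trace over $\mathcal{H}$ reduces, via the trivial action of $\A$ on the multiplicity factor, to
\begin{equation}
\tr_{\mathcal{H}}\!\left(\sqrt{\Pi_{a}}\,V'[(a,n)]\right) \;=\; \sum_{\alpha} m_\alpha\, \tr\!\left(M_f^{\alpha}[\sqrt{\Pi_{a}}]\,M_f^{\alpha}[V'[(a,n)]]\right).
\end{equation}

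Next I split the sum according to $h(\alpha)$. For $\alpha$ with $h(\alpha)<d$ the irrep $M_f^\alpha$ is the full induced representation \eqref{full}, so $d_\theta=0$ and I substitute the expression \eqref{zero} from the previous proposition; the factor $1/m_\alpha$ present there cancels against the multiplicity $m_\alpha$ from the block decomposition, leaving $\tfrac{1}{n-1}\bigl(\sum_{\nu\in\Phi^{\alpha}}\sqrt{m_\nu d_\nu}\bigr)^{2}$. For $\alpha$ with $h(\alpha)=d$ the irrep is the truncated one \eqref{notfull}, so I substitute \eqref{neqzero} instead; the cancellation of $m_\alpha$ is identical and produces the second summand in \eqref{trH}.

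The only bookkeeping step is to identify the summation set $\{\nu \in \alpha\}$ appearing in the statement with $\Phi^{\alpha}=\operatorname{ind}_{S(n-2)}^{S(n-1)}(\alpha)$, which is immediate from the branching rule: $\nu\vdash n-1$ lies in $\Phi^{\alpha}$ iff $\nu$ is obtained from $\alpha\vdash n-2$ by adding a single box. No genuine obstacle arises, since all the representation-theoretic heavy lifting has been done in Propositions~\ref{p11}--\ref{p13} and in the computation of the per-irrep trace; the present theorem is essentially an assembly step, the only delicate point being to keep track of which $\alpha$ force the truncation $\nu\neq \theta$ and to ensure that the multiplicity factor $m_\alpha$ is correctly paired with the $1/m_\alpha$ coming from the irreducible-block trace so that the final formula contains no residual multiplicity of $\alpha$.
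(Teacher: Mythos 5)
Your proposal is correct and follows essentially the same route as the paper: decompose the trace over $\mathcal{H}$ into a sum over the irreducible blocks $M_f^{\alpha}$ of the ideal $\mathcal{M}$ weighted by the multiplicities $m_{\alpha}$, then substitute the per-irrep traces \eqref{zero} and \eqref{neqzero} for the cases $h(\alpha)<d$ and $h(\alpha)=d$, with the factor $m_{\alpha}$ cancelling the $1/m_{\alpha}$ in those formulas. Your write-up is in fact more explicit than the paper's own (which compresses this to a single sentence), but no new idea is introduced.
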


\begin{proof}
	To prove the statement of this theorem we have consider two cases when $h(\alpha)<d$, then we use expression~\eqref{full}, and when $h(\alpha)=d$, then we use expression~\eqref{notfull}. Since both equations are evaluated in a given irrep $M_f^{\alpha}$ of the algebra $\A$ we need to sum up all such contributions, everyone with multiplicity $m_{\alpha}$. This leads us to expression~\eqref{trH} and finishes the proof.
\end{proof}

\begin{lemma}
	\label{l:lemma9}
	In the qubit case ($d=2$) the expression \eqref{trH} takes the form
	\be
	\tr_{\mathcal{H}}\left(\sqrt{\Pi _{a}}V^{\prime }[(a,n)]\right)=\frac{1}{N}\sum_{l=0}^k \sqrt{\frac{(N+1-l)(l+1)}{N+1}}\left((N-2l+1)\sqrt{\frac{1}{N+1}{N+1 \choose l}} + (N-2l-1)\sqrt{\frac{1}{N+1}{N+1 \choose l+1}}\right)^2,
	\ee
	where $N=n-1$ and $a=1,\dots,N.$
\end{lemma}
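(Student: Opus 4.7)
The plan is to specialize Theorem~\ref{thm:trH} to $d = 2$ by enumerating the contributing partitions and carrying out the resulting combinatorial algebra. Since every Young frame relevant for $d = 2$ has at most two rows, the partitions $\alpha \vdash n-2 = N-1$ that contribute to the sum are $\alpha = (N-1-l, l)$ for $l = 0, 1, \ldots, \lfloor (N-1)/2 \rfloor$. The case $h(\alpha) < d$ occurs only for $l = 0$ (first sum of Theorem~\ref{thm:trH}), while $h(\alpha) = d$ occurs for $l \geq 1$ (second sum). For each such $\alpha$ the partitions $\nu \in \alpha$ of $n-1 = N$ are $\nu_1 = (N-l, l)$ and $\nu_2 = (N-1-l, l+1)$, and the excluded three-row irrep is $\theta = (N-1-l, l, 1)$, which becomes vacuous (so one may formally set $d_\theta = 0$) when $l = 0$.

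I would then compute the Schur-Weyl data explicitly. The two-row hook-length formula gives $d_\alpha = \binom{N-1}{l}\frac{N-2l}{N-l}$, $d_{\nu_1} = \binom{N}{l}\frac{N-2l+1}{N-l+1}$, and $d_{\nu_2} = \binom{N}{l+1}\frac{N-2l-1}{N-l}$, while the $SU(2)$-multiplicities are $m_{\nu_1} = N-2l+1$ and $m_{\nu_2} = N-2l-1$. The identity $\binom{N}{j}/(N+1-j) = \binom{N+1}{j}/(N+1)$ rewrites the square roots as $\sqrt{m_{\nu_1} d_{\nu_1}} = (N-2l+1)\sqrt{\binom{N+1}{l}/(N+1)}$ and $\sqrt{m_{\nu_2} d_{\nu_2}} = (N-2l-1)\sqrt{\binom{N+1}{l+1}/(N+1)}$, exactly matching the bracketed expression in the lemma.

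For the three-row piece I would apply the hook-length formula to $\theta = (N-1-l, l, 1)$, whose row-one hooks exhaust $\{1,2,\ldots,a+2\}\setminus\{a+1, a-b+1\}$ with $a = N-1-l$ and $b = l$, to obtain $d_\theta = \frac{l(N-l)(N-2l)}{(N-l+1)(l+1)}\binom{N}{l}$. Substituting and factoring $\binom{N-1}{l}(N-2l)N$ out of $N d_\alpha - d_\theta$ leaves the bracket $\frac{1}{N-l} - \frac{l}{(N-l+1)(l+1)}$, which collapses via $(N-l+1)(l+1) - l(N-l) = N+1$ to $\frac{N+1}{(N-l)(N-l+1)(l+1)}$. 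Consequently $\sqrt{d_\alpha/(N(N d_\alpha - d_\theta))}$ simplifies to $\frac{1}{N}\sqrt{(N+1-l)(l+1)/(N+1)}$, which is the lemma's outer factor. For $l = 0$ the same expression with $d_\theta = 0$ reproduces the first-sum coefficient $1/(n-1) = 1/N$, so both sums of Theorem~\ref{thm:trH} fuse into the single $l$-sum in the lemma.

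The principal obstacle is the three-row hook-length computation and the subsequent cancellation in $N d_\alpha - d_\theta$: two structurally different objects---a two-row and a three-row $S(N)$-dimension---combine, thanks to the elementary identity $(N-l+1)(l+1) - l(N-l) = N+1$, into the clean factor $(N-l)(N-l+1)(l+1)$. Once that observation is made, the remainder is routine binomial algebra.
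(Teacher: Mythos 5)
Your proposal is correct and follows essentially the same route as the paper's proof: specializing Theorem~\ref{thm:trH} to two-row frames, inserting the two-row dimension and $SU(2)$-multiplicity formulas, computing $d_\theta$ for the three-row frame by hook lengths, and simplifying $Nd_\alpha-d_\theta$ via the identity $(N-l+1)(l+1)-l(N-l)=N+1$. The only (cosmetic) difference is that the paper splits into even/odd $n$ before unifying, whereas you treat all $l$ uniformly, which works because the boundary term where $\nu_2=(N-1-l,l+1)$ ceases to be a valid partition carries the vanishing coefficient $N-2l-1=0$.
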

\begin{proof}
	In qubit case, only two types of Young diagrams $\lambda_\alpha$ for $\alpha\in\widehat{S(n-1)}$ are possible: either $\lambda_\alpha=(n-1-l, l)$ or $\lambda_\alpha=(n-1).$ We can denote the respective irreps accordingly to the number of the rows i.e. $\lambda_\alpha=(n-1,l) \coloneqq \alpha_l$.

	In  the expression \eqref{trH} the only irreps $\nu\in\alpha,\;\alpha_l\in\widehat{S(n-2)},h(\alpha)=1$ such that $\nu\in\widehat{S(n-1)}$ are $\nu_0$ and $\nu_1$. Similarly, the irreps $\nu\in\alpha_l,\;\alpha_l\in\widehat{S(n-2)},h(\alpha_l)=2$ are $\nu_{l}$ and $\nu_{l+1}$, unless for $\alpha_l=(n-2-l, l)$ we have  $n-2=2l$ and in such case only $\nu_l$ is present. The expression \eqref{trH} becomes
	\be
	\label{eq:odd}
	\tr_{\mathcal{H}}\left(\sqrt{\Pi _{a}}V^{\prime }[(a,n)]\right)=\frac{1}{n-1}
	\left(\sqrt{m_{\nu_0}d_{\nu_0 }} + \sqrt{m_{\nu_1}d_{\nu_1 }}\right)^{2}
	+\sum_{l=1}^k\frac{1}{\sqrt{%
	(n-1)d_{\alpha_l }-d_{\theta }}}\frac{\sqrt{d_{\alpha_l }}}{\sqrt{(n-1)}}%
	\left(\sqrt{m_{\nu_l }d_{\nu_l }} + \sqrt{m_{\nu_{l+1} }d_{\nu_{l+1} }}\right)^{2}.
	\ee
	for odd  $n$, where $k=\left\lfloor \frac{n-2}{2}\right\rfloor$ and
	\begin{align}
		\label{eq:even}
		\tr_{\mathcal{H}}\left(\sqrt{\Pi _{a}}V^{\prime }[(a,n)]\right)=\frac{1}{n-1}
		\left(\sqrt{m_{\nu_0}d_{\nu_0 }} + \sqrt{m_{\nu_1}d_{\nu_1 }}\right)^{2}
		+ & \sum_{l=1}^{k-1}\frac{1}{\sqrt{                                                %
		(n-1)d_{\alpha_l }-d_{\theta }}}\frac{\sqrt{d_{\alpha_l }}}{\sqrt{(n-1)}}%
		\left(\sqrt{m_{\nu_l }d_{\nu_l }} + \sqrt{m_{\nu_{l+1} }d_{\nu_{l+1} }}\right)^{2} \\
		  & \frac{1}{\sqrt{                                                                %
		(n-1)d_{\alpha_k }-d_{\theta }}}\frac{\sqrt{d_{\alpha_k }}}{\sqrt{(n-1)}}%
		\left(\sqrt{m_{\nu_k }d_{\nu_k }}\right)^{2}
	\end{align}
	for even $n$ and $k=\frac{n-2}{2}$.

	The expression for $m_\alpha$ and $d_\alpha, \alpha=(n-l-1, l)$, for Young diagrams with at most two rows are given by \cite{fulton_harris}
	\be
	d_\alpha =\binom{n-1}{l}-\binom{n-1}{l-1}=\frac{(n-2l)}{n}\binom{n}{l},\quad m_\alpha=(n-2l)
	\label{eq:m_d_alpha}
	\ee
	Moreover, in case of $\theta$ that has three rows $\theta=(n-2-l,l,1)$ the value for $d_\theta$ can be obtained by hook-length formula:
	\be
	d_\theta=\frac{(n-1)!}{\Pi_{i,j} h_\theta(i,j)}
	\ee
	where $h_\theta(i,j)$ is the sum of the number of boxes in $i$th row from $j$th box to the end of the row and the number of boxes in $j$th column after $i$th box, which is so-called hook length. Considering $\alpha_l = (n-2-l, l)$ and $\theta = (n-2-l, l, 1)$  the ony hooks that differ are hooks in the points $(1,1)$ and $(2,1)$. Denoting the product of the common hooks by $R$ we have

	\be
	d_\theta=\frac{(n-1)!}{(n-l)(l+1)R} = (n-1)\frac{(n-1-l)l}{(n-l)(l+1)} \frac{(n-2)!}{(n-1-l)lR} = (n-1)\frac{(n-1-l)l}{(n-l)(l+1)} d_{\alpha_l},
	\label{eq:d_theta}
	\ee

	Therefore we have

	\begin{align}
		\frac{1}{\sqrt{
		(n-1)d_{\alpha_l }-d_{\theta }}}\frac{\sqrt{d_{\alpha_l }}}{\sqrt{(n-1)}}%
		 & = \frac{1}{\sqrt{
				(n-1)-(n-1)\frac{(n-1-l)l}{(n-l)(l+1)} }}\frac{1}{\sqrt{(n-1)}} \\
		 & = \frac{1}{n-1}\sqrt{\frac{(n-l)(l+1)}{n}}
	\end{align}
	and the expressions \eqref{eq:odd} and \eqref{eq:even} become
	\begin{align}
		\label{eq:odd2}
		\tr_{\mathcal{H}}\left(\sqrt{\Pi _{a}}V^{\prime }[(a,n)]\right) & =\frac{1}{n-1}
		\left(n\sqrt{\frac{1}{n}\binom{n}{0}} + \left(n-2\right)\sqrt{\frac{1}{n}\binom{n}{1}} \right)^{2}                        \\
		                                                                & +\frac{1}{n-1}\sqrt{\frac{(n-l)(l+1)}{n}}\sum_{l=1}^{k} %
		\left((n-2l)\sqrt{\frac{1}{n}\binom{n}{l}} + (n-2(l+1))\sqrt{\frac{1}{n}\binom{n}{l+1}}   \right)^{2}.
	\end{align}
	for odd  $n$, where $k=\left\lfloor \frac{n-2}{2}\right\rfloor$ and
	\begin{align}
		\label{eq:even2}
		\tr_{\mathcal{H}}\left(\sqrt{\Pi _{a}}V^{\prime }[(a,n)]\right) & =\frac{1}{n-1}
		\left(n\sqrt{\frac{1}{n}\binom{n}{0}} + \left(n-2\right)\sqrt{\frac{1}{n}\binom{n}{1}} \right)^{2}                          \\
		                                                                & +\frac{1}{n-1}\sqrt{\frac{(n-l)(l+1)}{n}}\sum_{l=1}^{k-1} %
		\left((n-2l)\sqrt{\frac{1}{n}\binom{n}{l}} + \left(n-2(l+1)\right)\sqrt{\frac{1}{n}\binom{n}{l+1}}   \right)^{2}            \\
		                                                                & +                                                         %
		\frac{1}{n-1}\sqrt{\frac{(n-k)(k+1)}{n}}\left( \frac{1}{n}\binom{n}{k} \right)^{2}
	\end{align}
	for even $n$ and $k=\frac{n-2}{2}$.

	Setting $k=\left\lfloor\frac{n-2}{2}\right\rfloor$ and $N=n-1$ we can see that both expressions simplify to one expression, no matter the parity of $n-2$

	\be
	\tr_{\mathcal{H}}\left(\sqrt{\Pi _{a}}V^{\prime }[(a,n)]\right)=\frac{1}{N}\sum_{l=0}^k \sqrt{\frac{(N+1-l)(l+1)}{N+1}}\left((N-2l+1)\sqrt{\frac{1}{N+1}{N+1 \choose l}} + (N-2l-1)\sqrt{\frac{1}{N+1}{N+1 \choose l+1}}\right)^2,
	\ee
	which completes the proof.
\end{proof}

\section{Detailed analysis of the recycling protocol for port-based teleportation}
\label{rec_PBT}
Having presented the analysis of the square-root measurements in the previous section we are in position to apply our findings to the analysis of the recycling protocol. Here we focus on  how much the remaining ports degradate after a sigle round of the recycling scheme. Similarly to~\cite{strelchuk_generalized_2013}, we investigate the non-optimal deterministic PBT, when $O_A=\mathbf{1}_A$ in equation~\eqref{resource}, and present detailed discussion for the optimal PBT in the respective appendices.

However, before we proceed further  we need to introduce and fix some notation. By $|\psi_{in}\>=|\psi^+_{A_0B_0}\>\otimes |\Phi\>_{AB}$ we denote the total state of the resource state and state to be teleported before parties run the protocol. Next, by $|\psi^{(a)}_{id}\>$ we denote the total state after the ideal process of teleportation to $a-$th port:
\begin{equation}
	|\psi^{(a)}_{id}\>=|\psi^+_{A_0A_a}\>|\psi^+_{B_0B_a}\>\otimes \left(\bigotimes_{\substack{j=1\\j\neq a}}^N |\psi_{A_jB_j}^+\>\right).
\end{equation}
Finally, by $|\psi^{(a)}_{out}\>$ we denote the total state after application of a measurement $\widetilde{\Pi}_a^{AA_0}$:
\begin{equation}
	\label{eq8a}
	\begin{split}
		|\psi_{out}^{(a)}\>=\frac{\left(\sqrt{\widetilde{\Pi}_a^{AA_0}}\otimes \mathbf{1}_{B_0B}\right)|\psi_{in}\>_{A_0B_0AB}}{\left|\left|\left(\sqrt{\widetilde{\Pi}_a^{AA_0}}\otimes \mathbf{1}_{B_0B}\right)|\psi_{in}\>_{A_0B_0AB}\right|\right|_2}.
	\end{split}
\end{equation}
Now, we see that to describe qualitatively the efficiency of the recycling scheme we have to compute the average fidelity $F(\mathcal{P}_{rec}(N,d,1))$ between the state of all the ports $|\psi^{(a)}_{out}\>$ after application of a measurement $\widetilde{\Pi}_a^{AA_0}$ and the idealised situation, where the teleportation is carried out without any disturbance and state of the ports is $|\psi^{(a)}_{id}\>.$ Now, with the number of ports growing the fidelity of the teleported state goes to 1, since we perform PBT~\cite{majenz2}. If the same situation we observe for the fidelity $F(\mathcal{P}_{rec}(N,d,1))$  it means that the real state is close to the idealised one. From this one can deduce that remaining ports, those except $a$-th one, do not suffer too much from the measurement $\Pi_a^{AA_0}$. Therefore, our next goal is to find expression for the mentioned fidelity $F(\mathcal{P}_{rec}(N,d,1))$.  We start from defining corresponding density matrices $\psi^{(a)}_{out}:=|\psi^{(a)}_{out}\>\<\psi^{(a)}_{out}|$ and $\psi^{(a)}_{id}:=|\psi^{(a)}_{id}\>\<\psi^{(a)}_{id}|$ for which the fidelity $F(\mathcal{P}_{rec}(N,d,1))$ is
\begin{equation}
	\label{PF}
	F(\mathcal{P}_{rec}(N,d,1))=\sum_{a=1}^Np_aF\left(\psi^{(a)}_{out},\psi^{(a)}_{id}\right)=\frac{1}{d^{N+1}}\sum_{a=1}^N\tr( \widetilde{\Pi}_a^{A_0A})F\left(\psi^{(a)}_{out},\psi^{(a)}_{id}\right),
\end{equation}
since $p_a=\frac{1}{d^{N+1}}\tr (\widetilde{\Pi}_a^{A_0A})$. Having the above we are ready to show connection of $F(\mathcal{P}_{rec}(N,d,1))$ with signal states and Alice's measurements  in arbitrary $d$ (see also Section 2 from Supplementary Materials of~\cite{strelchuk_generalized_2013}):

\begin{lemma}
	\label{thm1_rec}
	The fidelity $F(\mathcal{P}_{rec}(N,d,1))$ in the recycling scheme, with $N$ ports, each of dimension $d$, after one round of teleportation is the following:
	\begin{equation}
		\label{thmeq}
		F(\mathcal{P}_{rec}(N,d,1))=\frac{N}{d}\frac{\sqrt{\tr(\widetilde{\Pi}_N^{A_0A})}}{\sqrt{d^{N+1}}}\left|\tr\left(\sigma_{A_0A_N}\sqrt{\widetilde{\Pi}_N^{A_0A}}\right)\right|,
	\end{equation}
	where $\sigma_N,\widetilde{\Pi}_N^{A_0A}$ are respectively the signal state and the measurement corresponding to index $a=N$ in equations~\eqref{eq:signals} and~\eqref{eq:measurements}.
\end{lemma}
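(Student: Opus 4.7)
The overall plan is to reduce the sum over ports in~\eqref{PF} to one representative term via symmetry, then compute the pure-state overlap between $|\psi_{id}^{(N)}\rangle$ and $|\psi_{out}^{(N)}\rangle$ using a partial-trace identity that converts it into a trace on $AA_0$ involving the signal state $\sigma_{A_0A_N}$.

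\textbf{Step 1 (symmetrisation).} First I would invoke the covariance~\eqref{mes_cov} of the measurements together with the manifest covariance of $|\psi_{in}\rangle$ and of $|\psi_{id}^{(a)}\rangle$ under permutations $V(\pi)$, $\pi\in S(N)$. Because unitaries preserve pure-state overlap, all $N$ summands in~\eqref{PF} coincide, and the expression collapses to
\begin{equation*}
F(\mathcal{P}_{rec}(N,d,1)) \;=\; N\, p_N\, F\bigl(\psi_{out}^{(N)},\psi_{id}^{(N)}\bigr), \qquad p_N \;=\; \tfrac{1}{d^{N+1}}\tr\bigl(\widetilde{\Pi}_N^{AA_0}\bigr).
\end{equation*}

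\textbf{Step 2 (overlap as a trace).} Using the Uhlmann square-root convention $F(|\phi\rangle,|\psi\rangle)=|\langle\phi|\psi\rangle|$ together with~\eqref{eq8a} and the standard identity $\langle\phi|M_A\otimes\mathbf{1}_B|\chi\rangle = \tr_A\bigl[M_A\,\tr_B(|\chi\rangle\langle\phi|)\bigr]$, the fidelity becomes
\begin{equation*}
F\bigl(\psi_{out}^{(N)},\psi_{id}^{(N)}\bigr) \;=\; \frac{1}{\sqrt{p_N}}\,\Bigl|\tr_{AA_0}\!\bigl[\sqrt{\widetilde{\Pi}_N^{AA_0}}\,T\bigr]\Bigr|,\qquad T\;:=\;\tr_{B_0B}\bigl(|\psi_{in}\rangle\langle\psi_{id}^{(N)}|\bigr).
\end{equation*}

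\textbf{Step 3 (key calculation).} The heart of the proof is the explicit evaluation of $T$. Expanding both vectors in the computational basis via $|\psi^+\rangle=d^{-1/2}\sum_i|ii\rangle$ and tracing out each $B_j$ produces Kronecker deltas that pair the $A_j$ indices in the bra with those in the ket for $j=1,\dots,N-1$, yielding identities on those subsystems. The novel contribution comes from the chain $A_0\!-\!B_0\!-\!B_N\!-\!A_N$ present in $|\psi_{id}^{(N)}\rangle$ but absent from $|\psi_{in}\rangle$: tracing $B_0$ and $B_N$ forces the $A_0$- and $A_N$-labels to agree in the ket, and independently in the bra. Collecting the resulting tensor one recognises $T = d^{-N}\,\mathbf{1}_{A_1\cdots A_{N-1}}\otimes P^+_{A_0A_N} = d^{-1}\sigma_{A_0A_N}$, namely the signal state~\eqref{eq:signals} with $A_0$ playing the role of $\widetilde{B}$. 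This combinatorial index matching is the main (and really the only) obstacle; it is straightforward but error-prone.

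\textbf{Step 4 (assembly).} Substituting $\tr(\sqrt{\widetilde{\Pi}_N^{AA_0}}\,T) = d^{-1}\tr(\sqrt{\widetilde{\Pi}_N^{AA_0}}\,\sigma_{A_0A_N})$ into Step~2 and then into Step~1 gives
\begin{equation*}
F(\mathcal{P}_{rec}(N,d,1)) \;=\; \frac{N\sqrt{p_N}}{d}\,\bigl|\tr\bigl(\sigma_{A_0A_N}\sqrt{\widetilde{\Pi}_N^{AA_0}}\bigr)\bigr| \;=\; \frac{N}{d}\,\frac{\sqrt{\tr(\widetilde{\Pi}_N^{A_0A})}}{\sqrt{d^{N+1}}}\,\bigl|\tr\bigl(\sigma_{A_0A_N}\sqrt{\widetilde{\Pi}_N^{A_0A}}\bigr)\bigr|,
\end{equation*}
which is precisely~\eqref{thmeq}.
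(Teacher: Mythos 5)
Your proposal is correct and follows essentially the same route as the paper's proof in Appendix~\ref{AppA0}: covariance under $S(N)$ collapses the sum to the $a=N$ term, and the pure-state overlap is reduced to $\tr\bigl(\sigma_{A_0A_N}\sqrt{\widetilde{\Pi}_N^{A_0A}}\bigr)$ via the entangled-state index contractions. The only cosmetic difference is that you evaluate the cross partial trace $\tr_{B_0B}\bigl(|\psi_{in}\rangle\langle\psi_{id}^{(N)}|\bigr)=d^{-1}\sigma_{A_0A_N}$ directly, whereas the paper first rewrites $|\psi_{id}^{(N)}\rangle=\sqrt{d^{N+1}}\bigl(\sqrt{\sigma_{A_0A_N}}\otimes\mathbf{1}_{B_0B}\bigr)|\psi_{in}\rangle$ and then uses $\tr_{B_0B}|\psi_{in}\rangle\langle\psi_{in}|\propto\mathbf{1}_{A_0A}$ together with $\sqrt{\sigma_{A_0A_N}}=\sqrt{d^{N-1}}\,\sigma_{A_0A_N}$ — the same computation organized in a different order.
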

The proof of the above Lemma is located in Appendix~\ref{AppA0}.

Now we show that indeed the expression from the lemma above is well defined. Namely, we prove that $F(\mathcal{P}_{rec}(N,d,1))\leq 1$.
\begin{remark}
	\label{Frec_bound}
	The fidelity $F(\mathcal{P}_{rec}(N,d,1))$ in the one round of the recycling protocol $\mathcal{P}_{rec}(N,d,1)$, with $N$ ports, each of dimension $d$, satisfies the following bound
	\begin{equation}
		\label{eq:Frec_bound}
		F(\mathcal{P}_{rec}(N,d,1))\leq 1.
	\end{equation}
\end{remark}

Indeed, applying the Schwarz inequality for  the scalar product of operators $\sigma_{A_0A_N}$ and $\sqrt{\widetilde{\Pi}_N^{A_0A}}$ in equation~\eqref{thmeq} in Theorem~\ref{thm1_rec}, we bound $F(\mathcal{P}_{rec})$ as
\begin{equation}
	\label{F1}
	\begin{split}
		F(\mathcal{P}_{rec}(N,d,1))=\frac{N}{d}\frac{\sqrt{\tr(\widetilde{\Pi}_N^{A_0A})}}{\sqrt{d^{N+1}}}\left|\tr\left(\sigma_{A_0A_N}\sqrt{\widetilde{\Pi}_N^{A_0A}}\right)\right|\leq \frac{N}{d}\frac{\sqrt{\tr(\widetilde{\Pi}_N^{A_0A})}}{\sqrt{d^{N+1}}}\sqrt{\tr(\widetilde{\Pi}_N^{A_0A})\tr(\sigma_{A_0A_N}^2)}=\frac{N}{d^{N+1}}\tr(\widetilde{\Pi}_N^{A_0A}),
	\end{split}
\end{equation}
since due to~\eqref{eq:signals} we have $\tr(\sigma_{A_0A_N}^2)=(1/d^{N-1})\tr(\sigma_{A_0A_N})=1/d^{N-1}$. The above requires an additional justification. Due to definitions from~\eqref{eq:measurements}, we have that $\operatorname{supp}(\sum_a \Pi_a^{A_0A})=\operatorname{supp}(\rho)=\mathcal{H}_{\mathcal{M}}$. Next, due to~\eqref{Delta} we know that $\operatorname{supp}(\Delta)\equiv \mathcal{H}_{\mathcal{S}}=\mathbf{1}_{(\mathbb{C}^d)^{\otimes N+1}}\ominus \mathcal{H}_{\mathcal{M}}$. These relations imply that $\operatorname{supp}(\sigma_{A_0A_N})\subset \mathcal{H}_{\mathcal{M}}\perp \mathcal{H}_{\mathcal{S}}$, so $\tr(\sigma_{A_0A_N} m)=0$, for all elements $m\in \mathcal{H}_{\mathcal{S}}$.

Now we have to evaluate $\tr(\widetilde{\Pi}_N^{A_0A})$. First, let us recall that $(\mathbb{C}^d)^{\otimes N+1}\equiv \mathcal{H}=\mathcal{H}_{\mathcal{M}}\oplus \mathcal{H}_{\mathcal{S}}$, so we have the following relations
\begin{equation}
	\label{trPiN}
	\begin{split}
		&\dim \mathcal{H}=\dim\mathcal{H}_{\mathcal{M}}+ \dim \mathcal{H}_{\mathcal{S}}\\
		&\sum_{a=1}^N \tr(\widetilde{\Pi}_a^{A_0A})= \sum_{a=1}^N \tr(\Pi_a^{A_0A})+\tr(\Delta)\\
		& N\tr(\widetilde{\Pi}_N^{A_0A})=N\tr(\Pi_N^{A_0A})+d^{N+1}-N\tr(\Pi_N^{A_0A})\\
		&\tr(\widetilde{\Pi}_N^{A_0A})=\frac{d^{N+1}}{N},
	\end{split}
\end{equation}
where in the third line we use independence of trace with respect to index $a$.  Finally, substituting~\eqref{trPiN} to~\eqref{F1} we get the statement presented in~\eqref{eq:Frec_bound}. This finishes the proof.

Once can see that expression~\eqref{thmeq} from Lemma~\ref{thm1_rec} is given in terms of the operators acting on a very large  space $(\mathbb{C}^d)^{\ot (n+1)}$. Moreover, one must know the support of the operator $\rho$ from~\eqref{eq:measurements} to compute all the quantities interesting in PBT. These two facts make the analysis of the performance of the recycling protocol almost intractable without further simplifications.
Because of that our goal will be to find an explicit expression for $F(\mathcal{P}_{rec}(N,d,1))$ by evaluating trace in~\eqref{thmeq} by exploiting existing symmetries discussed in Section~\ref{dPBT}. To obtain such result first we need to learn about the interior structure of POVM operators $\Pi_a^{A_0A}$, which will allow us to compute their square root and the overlap with the signal states. Below we present explicit equation for $F(\mathcal{P}_{rec}(N,d,1))$ depending on group-theoretic quantities describing permutation groups $S(N-1)$ and $S(N)$. These quantities (multiplicities, dimensions of irreps) are effectively computable even for a large number of ports $N$ and local dimension $d$. This however can be done for example by such packages like SAGE~\cite{sage}. We start from the following theorem.

\begin{theorem}
	\label{th:qubit_f}
	\label{expplicit}
	The fidelity $F(\mathcal{P}_{rec}(N,d,1))$ in the one round of the recycling protocol $\mathcal{P}_{rec}(N,d,1)$, with $N$ ports, each of dimension $d$, reads
	\begin{equation}
		\label{eq:qubit_f0}
		F(\mathcal{P}_{rec}(N,d,1))=\frac{\sqrt{N}}{d^{N+1}}\left[\sum_{\alpha :h(\alpha )<d}\frac{1}{N}\left(\sum_{\nu \in \alpha}
		\sqrt{m_{\nu }d_{\nu }}\right)^{2}+\sum_{\alpha :h(\alpha )=d}\frac{1}{\sqrt{%
		Nd_{\alpha }-d_{\theta }}}\frac{\sqrt{d_{\alpha }}}{\sqrt{N}}%
		\left(\sum_{\nu \neq \theta }\sqrt{m_{\nu }d_{\nu }}\right)^{2}\right].
	\end{equation}
	By $m_{\nu}$ we denote the multiplicity of irreps of $S(N)$ in the Schur-Weyl duality, by $d_{\alpha}, d_{\alpha}$ dimensions of irreps $S(N-1)$ and $S(N)$ respectively in the Schur-Weyl duality. The index $d_{\theta}$ denotes irrep dimension of $S(N)$ of height $d+1$ obtained from irrep of $S(N-1)$ whose height is $d$. If there are no such irreps, then we set $d_\theta\equiv 0$.
\end{theorem}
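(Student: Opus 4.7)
The plan is to start from the closed-form expression in Lemma~\ref{thm1_rec},
\[
F(\mathcal{P}_{rec}(N,d,1))=\frac{N}{d}\frac{\sqrt{\tr(\widetilde{\Pi}_N^{A_0A})}}{\sqrt{d^{N+1}}}\left|\tr\left(\sigma_{A_0A_N}\sqrt{\widetilde{\Pi}_N^{A_0A}}\right)\right|,
\]
and reduce both factors to objects already computed earlier in the paper. The prefactor is immediate: the computation inside the proof of Remark~\ref{Frec_bound} already yields $\tr(\widetilde{\Pi}_N^{A_0A})=d^{N+1}/N$, which collapses the whole prefactor to $\sqrt{N}/d$. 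So the only work left is to evaluate the trace $\tr(\sigma_{A_0A_N}\sqrt{\widetilde{\Pi}_N^{A_0A}})$.

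Next I would remove the excess term $\tfrac{1}{N}\Delta$ from $\widetilde{\Pi}_N$. Because $\operatorname{supp}(\Delta)=\mathcal{H}_{\mathcal{S}}$ while $\operatorname{supp}(\sigma_{A_0A_N})\subset\mathcal{H}_{\mathcal{M}}$, and these two subspaces are orthogonal in the direct sum decomposition~\eqref{A_decomp} of the algebra $\A$, the operator $\sqrt{\widetilde{\Pi}_N}$ splits as $\sqrt{\Pi_N}\oplus \frac{1}{\sqrt N}\sqrt{\Delta}$ on this decomposition, and only the $\sqrt{\Pi_N}$ block contributes to the trace against $\sigma_{A_0A_N}$. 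This is the same orthogonality observation that was used at the end of the proof of Remark~\ref{Frec_bound}.

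Now I rewrite $\sigma_{A_0A_N}$ in the partially transposed permutation language of~\eqref{eq:rhoV}: renumbering $A_a\mapsto a$, $A_0\mapsto n$, the signal state is $\sigma_N=\frac{1}{d^{N-1}}\mathbf{1}_{\overline{A}_N}\otimes P^+_{A_N A_0}=\frac{1}{d^N}V'[(N,n)]$. Substituting gives
\[
\tr\left(\sigma_{A_0A_N}\sqrt{\widetilde{\Pi}_N^{A_0A}}\right)=\frac{1}{d^N}\,\tr_{\mathcal{H}}\!\left(V'[(N,n)]\sqrt{\Pi_N}\right),
\]
and the right-hand side is exactly the quantity computed in Theorem~\ref{thm:trH} (specialised to $a=N$, $n-1=N$). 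Plugging this in and combining with the prefactor $\sqrt{N}/d$ reproduces the claimed formula~\eqref{eq:qubit_f0}.

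The whole proof is therefore a short assembly of pieces already proved, and the genuine content sits in Theorem~\ref{thm:trH} and in the supports argument. The only point that needs care is justifying $\tr(\sigma_N\sqrt{\widetilde{\Pi}_N})=\tr(\sigma_N\sqrt{\Pi_N})$: one must explain that because $\Pi_N$ and $\Delta$ live on mutually orthogonal subspaces of $\mathcal{H}=\mathcal{H}_{\mathcal{M}}\oplus\mathcal{H}_{\mathcal{S}}$, the functional calculus applied to $\widetilde{\Pi}_N=\Pi_N+\tfrac{1}{N}\Delta$ acts blockwise, so $\sqrt{\widetilde{\Pi}_N}$ agrees with $\sqrt{\Pi_N}$ on $\mathcal{H}_{\mathcal{M}}$, and $\sigma_N$ has no component on $\mathcal{H}_{\mathcal{S}}$. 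I expect this is the main (though still minor) obstacle, as the remaining manipulations are direct substitutions of earlier identities.
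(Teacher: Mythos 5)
Your proposal is correct and follows essentially the same route as the paper's own proof: reduce the prefactor via $\tr(\widetilde{\Pi}_N^{A_0A})=d^{N+1}/N$, drop the $\Delta$ term by the support-orthogonality argument, rewrite $\sigma_{A_0A_N}=\frac{1}{d^N}V'$, and invoke Theorem~\ref{thm:trH}. Your explicit remark that the functional calculus acts blockwise on $\mathcal{H}_{\mathcal{M}}\oplus\mathcal{H}_{\mathcal{S}}$, so that $\sqrt{\widetilde{\Pi}_N}$ agrees with $\sqrt{\Pi_N}$ on $\mathcal{H}_{\mathcal{M}}$, is a slightly more careful justification of the step the paper states only as $\operatorname{supp}(\Delta)\perp\operatorname{supp}(\sigma_N)$, and your bookkeeping of the power of $d$ is consistent with the final formula.
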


\begin{proof}
	To prove the expression~\eqref{eq:qubit_f0} in Theorem~\ref{expplicit} it is enough to apply the statement of Theorem~\ref{thm:trH}. Knowing that $\widetilde{\Pi_a}=\Pi_a+\frac{1}{N}\Delta$ together with $\tr (\widetilde{\Pi_a})=d^{N+1}/N$ from expression~\eqref{trPiN}, and fact that $\operatorname{supp}(\Delta)\perp \operatorname{supp}(\sigma_a)$ for~$1\leq a\leq n-1$, we have
	\begin{align}
		F(\mathcal{P}_{rec}) & =\frac{N}{d}\frac{\sqrt{\tr(\Mn)}}{\sqrt{d^{N+1}}}\left|\tr\left(\sigma_N\sqrt{\Mn}\right)\right|=\frac{N}{d^{N+1}}\frac{\sqrt{\tr(\Mn)}}{\sqrt{d^{N+1}}}\left|\tr\left(V'\sqrt{\Mn}\right)\right|=\frac{\sqrt{N}}{d^{N+2}}\left|\tr\left(V'\sqrt{\Pi_N}\right)\right| \\
		                     & =\frac{\sqrt{N}}{d^{N+1}}\left[\sum_{\alpha :h(\alpha )<d}\frac{1}{n-1}\left(\sum_{\nu \in \alpha}
		\sqrt{m_{\nu }d_{\nu }}\right)^{2}+\sum_{\alpha :h(\alpha )=d}\frac{1}{\sqrt{%
		(n-1)d_{\alpha }-d_{\theta }}}\frac{\sqrt{d_{\alpha }}}{\sqrt{(n-1)}}%
		\left(\sum_{\substack{\nu \in \Phi^{\alpha}                                                                                                                                                                                                                                                   \\ \nu\neq \theta}}\sqrt{m_{\nu }d_{\nu }}\right)^{2}\right].
	\end{align}
\end{proof}
In the case of qubits, when $d=2$, we can rewrite the statement of Theorem~\ref{expplicit} in much more appealing form, depending only on number of ports $N$ exploited in PBT scheme. This is possible by direct application of Lemma~\ref{l:lemma9} from Section~\ref{StructurePOVMs}.

\begin{lemma}
	\label{Fqubits}
	The fidelity $F(\mathcal{P}_{rec}(N,2,1))$ in the one round of the qubit recycling protocol $\mathcal{P}_{rec}(N,2,1)$, with $N$ ports, reads as
	\be
	F(\mathcal{P}_{rec}(N,2,1))=\frac{\sqrt{N}}{2^{N+1}}\sum_{l=0}^k \sqrt{\frac{(N+1-l)(l+1)}{N+1}}\left((N-2l+1)\sqrt{\frac{1}{N+1}{N+1 \choose l}} + (N-2l-1)\sqrt{\frac{1}{N+1}{N+1 \choose l+1}}\right)^2,
	\ee
	where $k =\left \lceil \frac{N}{2}-1\right\rceil$.
\end{lemma}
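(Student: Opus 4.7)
The plan is to combine the general formula of Theorem~\ref{expplicit} with the explicit qubit evaluation already obtained in Lemma~\ref{l:lemma9}. From the intermediate step in the proof of Theorem~\ref{expplicit} one reads off the identity
\[
F(\mathcal{P}_{rec}(N,d,1)) = \frac{\sqrt{N}}{d^{N+1}}\,\bigl|\tr(V'\sqrt{\Pi_N})\bigr|,
\]
which follows from the value $\tr(\widetilde{\Pi}_N) = d^{N+1}/N$ obtained in~\eqref{trPiN}, from the representation $\sigma_N = d^{-N}V'$ of the last signal state as a partially transposed transposition, and from the orthogonality of supports $\operatorname{supp}(\sqrt{\Delta}) \perp \operatorname{supp}(V')$ which lets one replace $\sqrt{\widetilde{\Pi}_N}$ by $\sqrt{\Pi_N}$ inside the trace against $V'$. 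Specialising to $d=2$ immediately yields the prefactor $\sqrt{N}/2^{N+1}$ appearing in the statement.

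Next, I would invoke Lemma~\ref{l:lemma9}, which in the qubit setting evaluates $\tr_{\mathcal{H}}(\sqrt{\Pi_a}V'[(a,n)])$ in closed form as a single sum over $l$ from $0$ to $k=\lfloor (n-2)/2\rfloor$. By the $S(N)$-covariance~\eqref{mes_cov} of the measurements (and the matching covariance of the $V'[(a,n)]$) this trace does not depend on the index $a$, so one may set $a=N$ and substitute directly. With the relabelling $n = N+1$, the bookkeeping identity $\lfloor(N-1)/2\rfloor=\lceil N/2 - 1\rceil$, valid for every positive integer $N$, reconciles the summation bound of Lemma~\ref{l:lemma9} with the one written in the statement of Lemma~\ref{Fqubits}.

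Multiplying the two ingredients then gives precisely the claimed formula. The main obstacle is not really encountered here, since all the representation-theoretic heavy lifting has already been carried out in Section~\ref{StructurePOVMs}: the PRIR matrix elements of Proposition~\ref{p11}, the square-root computation of Proposition~\ref{p13}, the hook-length evaluation of $d_\theta/d_{\alpha_l}$ in~\eqref{eq:d_theta}, and the unification of the even and odd parity cases of $n$ into the single sum of Lemma~\ref{l:lemma9}. The proof of Lemma~\ref{Fqubits} therefore reduces to this short substitution, and the only care one needs is in matching summation-index conventions.
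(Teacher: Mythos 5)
Your route is exactly the paper's: the paper gives no separate proof of Lemma~\ref{Fqubits} beyond the remark that it follows by ``direct application of Lemma~\ref{l:lemma9}'' to Theorem~\ref{expplicit}, and that is precisely the substitution you carry out --- reading off $F=\frac{\sqrt N}{d^{N+1}}\bigl|\tr(V'\sqrt{\Pi_N})\bigr|$ from the chain of equalities in the proof of Theorem~\ref{expplicit} (using $\tr(\widetilde\Pi_N)=d^{N+1}/N$, $\sigma_N=d^{-N}V'$, and the support orthogonality that removes $\Delta$), then inserting the qubit evaluation of $\tr_{\mathcal H}(\sqrt{\Pi_a}V'[(a,n)])$. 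Your index bookkeeping $\lfloor (N-1)/2\rfloor=\lceil N/2-1\rceil$ is also correct.

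However, your concluding claim that ``multiplying the two ingredients then gives precisely the claimed formula'' does not literally hold, and this is the one genuine gap. Lemma~\ref{l:lemma9} evaluates $\tr_{\mathcal H}(\sqrt{\Pi_a}V'[(a,n)])$ as $\frac1N\sum_{l=0}^{k}(\cdots)$, with an overall $1/N$ in front of the sum. Multiplying by $\sqrt N/2^{N+1}$ therefore yields $\frac{1}{\sqrt N\,2^{N+1}}\sum_{l=0}^{k}(\cdots)$, which differs from the printed $\frac{\sqrt N}{2^{N+1}}\sum_{l=0}^{k}(\cdots)$ by a factor of $N$. The inconsistency sits in the paper's statement rather than in your method: at $N=2$ the printed formula gives $\frac{\sqrt2}{8}(4+2\sqrt3)\approx 1.32>1$, contradicting Remark~\ref{Frec_bound}, whereas the product of Theorem~\ref{expplicit} with Lemma~\ref{l:lemma9} gives $\frac{\sqrt2}{8}(2+\sqrt3)\approx 0.66$. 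A careful write-up must therefore either carry the $1/N$ through and end with the prefactor $1/(\sqrt N\,2^{N+1})$, or explicitly flag that the formula as stated cannot be what the two cited results produce; asserting that the substitution reproduces the statement verbatim papers over a factor of $N$.
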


\begin{figure}[t]
	\includegraphics[width=0.45\textwidth]{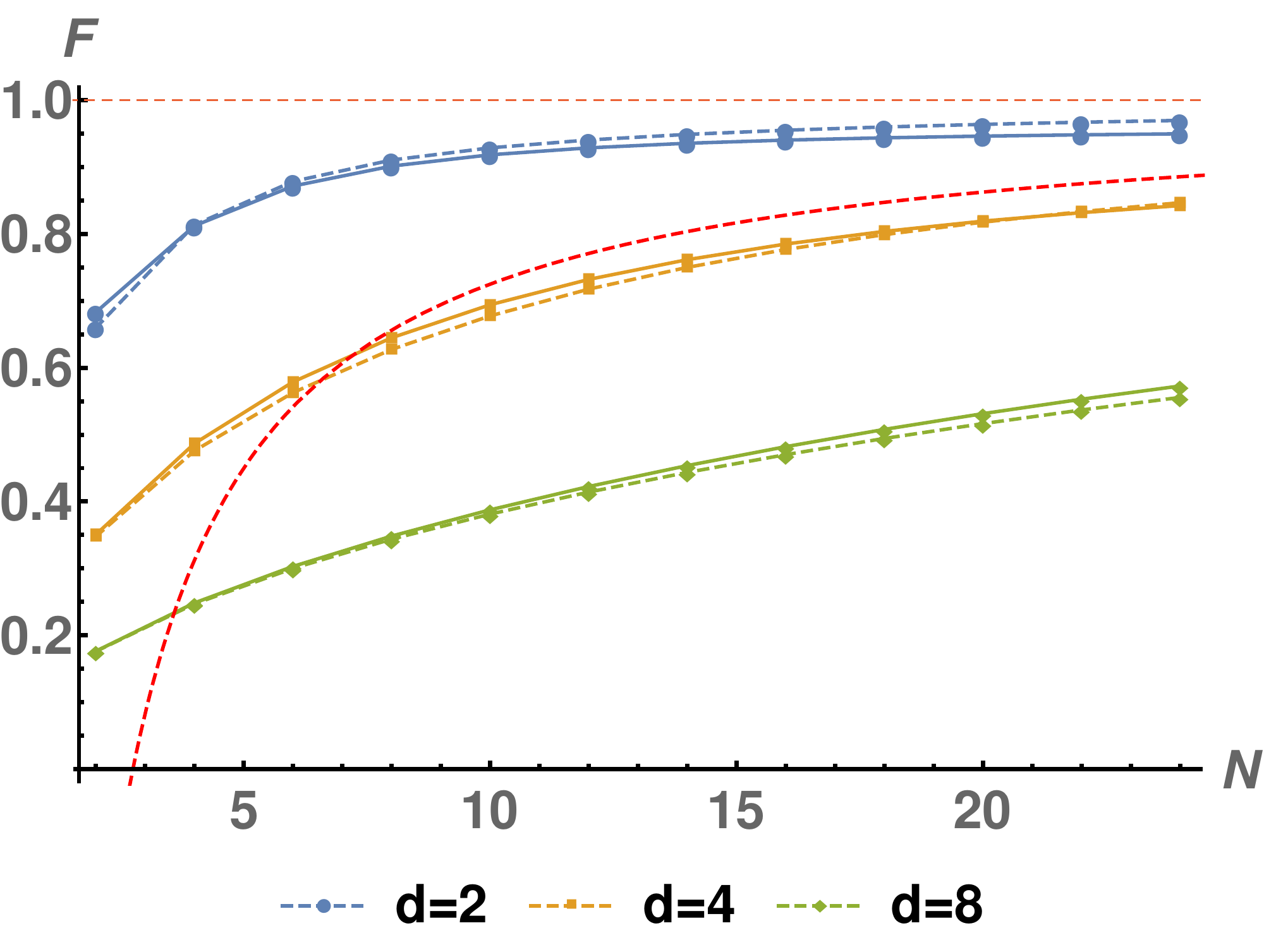}
	\includegraphics[width=0.45\textwidth]{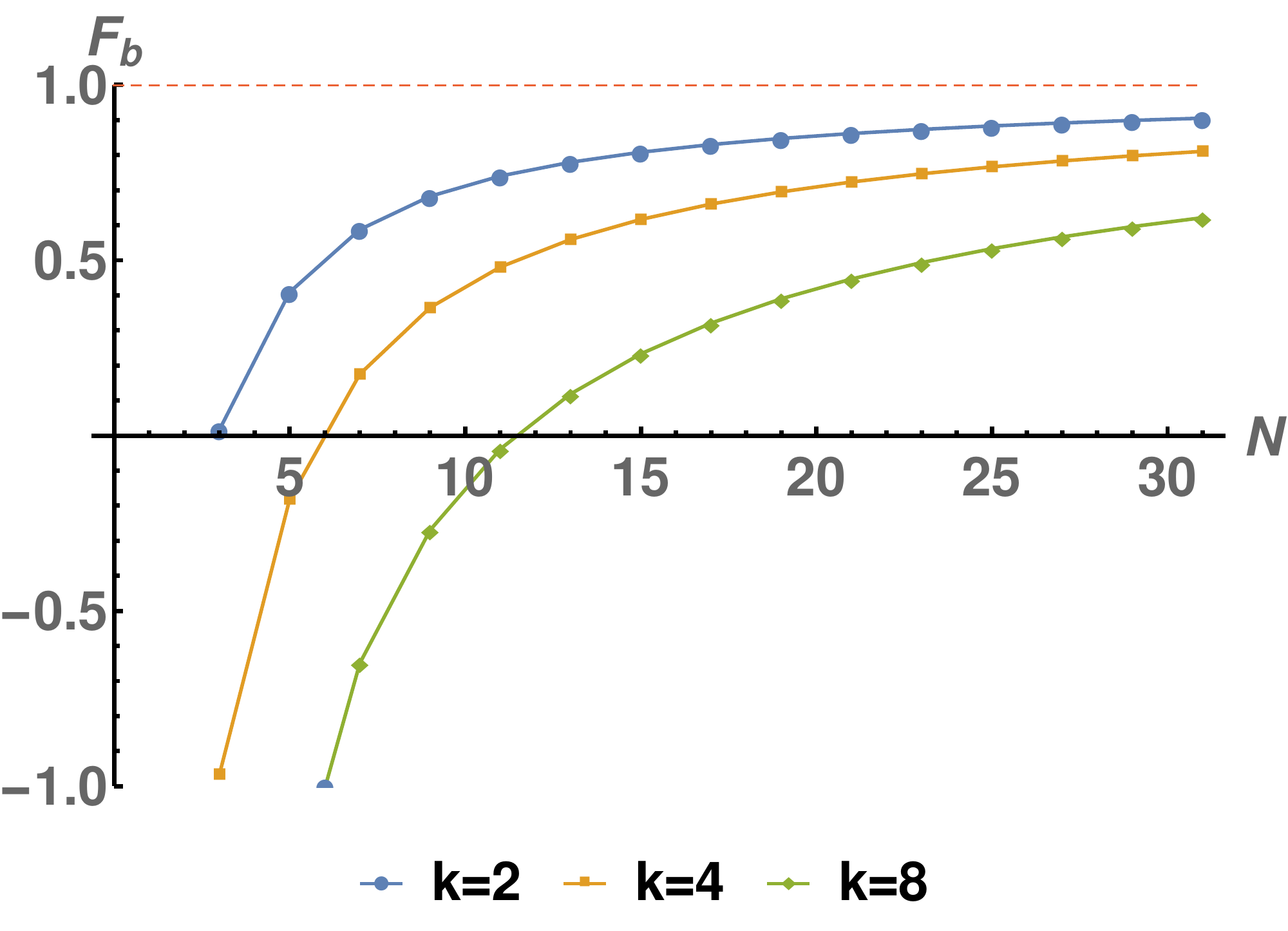}
	\caption{The left panel presents values of fidelity $F(P_{rec}(N,d,1))$ evaluated for non- (dashed lines) and optimal (solid lines) PBT. From these plots we see that the resource state for optimal PBT is not necessarily better for recycling protocol than the resource state for its non-optimal counterpart. In fact, for $d=2$ values of $F(P_{rec}(N,2,1))$ for optimal version are even worse than for non-optimal one. The numerical values for $d>2$ have been obtained from Theorems~\ref{th:qubit_f} and~\ref{F_rec_optimal}. In the qubit case we have used Lemmas~\ref{Fqubits} and~\ref{lem:f_opt}. The red dashed line shows the lower bound on fidelity in the qubit case given by Eq.~\eqref{eq1} up to $\mathcal{O}(1/N^2)$ part. The right panel presents the lower bound for $F(P_{rec}(N,2,k))$ (the qubit case) and different number of teleportation rounds. From the plot we see that $F(P_{rec}(N,2,k))$ is relatively high even for not too large number of ports.
	}
	\label{fig:fid}
\end{figure}

Now, one could ask how the recycling protocol behaves when we consider optimised version of deterministic PBT. In this case measurements and the resource state is optimised by Alice simultaneously, and optimisation resulting in the following explicit form of the operation $O_A$ in equation~\eqref{resource} derived in~\cite{ishizaka_quantum_2009,StuNJP}:
\begin{equation}
	\label{expOA}
	O_A=\sqrt{d^N}\sum_{\mu \vdash N}\frac{v_{\mu}}{\sqrt{d_{\mu}m_{\mu}}}P_{\mu},
\end{equation}
where $v_{\mu}\geq 0$ are entries of a normalised eignevector corresponding to a maximal eigenvalue of the teleportation matrix $M_F$ used for computation of entanglement fidelity in OPBT~\cite{StuNJP}, and $P_{\mu}$ is a Young projector defined in~\eqref{Yng_proj}. Having that we are in position to generalise Lemma~\ref{thm1_rec}, Theorem~\ref{expplicit}, and Lemma~\ref{Fqubits} to the optimal case. Lemma analogous to Lemma~\ref{thm1_rec} is allocated in Appendix~\ref{rec_OdPBT}.
\begin{theorem}
	\label{F_rec_optimal}
	The fidelity $F(\mathcal{P}_{rec}(N,d,1))$ in the recycling scheme for the optimal deterministic PBT scheme, with $N$ ports, each of dimension $d$, after one round of teleportation is the following:
	\begin{equation}
		\begin{split}
			F(\mathcal{P}_{rec}(N,d,1))&=\frac{1}{d^{1/2}}\sum_{\alpha \vdash N-1}\sum_{\mu \in \alpha}\frac{v_{\alpha}v_{\mu}}{m_{\alpha}^{1/2}}\frac{\sum_{\substack{\nu \neq \theta\\
					\nu\in \alpha}}\sqrt{m_{\nu}d_{\nu}}}{\sqrt{Nd_{\alpha}-d_{\theta}}},
			\label{eq:f_opt}
		\end{split}
	\end{equation}
	where $v_{\alpha},v_{\mu}$ are the coefficients of operations $O_{\widetilde{A}},O_A$ given in~\eqref{expOA} for $N-1$ and $N$ ports respectively, for which Young frames are in the relation $\mu\in\alpha$. The numbers $m_{\alpha},m_{\nu}$ and $d_{\alpha},d_{\nu}$ denote multiplicities and dimensions of irreps of $S(N-1)$ and $S(N)$ respectively in the Schur-Weyl duality. Finally by $\theta$ we denote irreps of dimension $d_{\theta}$ of $S(N)$ belonging to the set $\Theta$ given through~\eqref{Theta}.
\end{theorem}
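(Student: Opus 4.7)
The plan is to run the same argument as for Theorem~\ref{expplicit}, but starting from the optimal-PBT analogue of Lemma~\ref{thm1_rec}, which is stated and proved in Appendix~\ref{rec_OdPBT}. That lemma expresses $F(\mathcal{P}_{rec}(N,d,1))$ as a trace involving the modified signal state $(O_A\otimes\mathbf{1})\sigma_{A_0A_N}(O_A^\dagger\otimes\mathbf{1})$, the corresponding square-root measurement $\sqrt{\Pi_N^{A_0A}}$, and the operation $O_{\widetilde A}$ acting on the remaining $N-1$ ports. Substituting the explicit form~\eqref{expOA} of $O_A$ (and of $O_{\widetilde A}$ with the same structure but over $\alpha\vdash N-1$), the problem reduces to computing a trace of the form $\tr\bigl(P_\alpha\,V'[(a,n)]\,P_\mu\,\sqrt{\Pi_N}\bigr)$ weighted by $v_\alpha v_\mu/\sqrt{d_\alpha m_\alpha d_\mu m_\mu}$ and multiplied by $\sqrt{d^N}\cdot\sqrt{d^{N-1}}$.

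Next I would exploit that each Young projector $P_\mu$ acts as the identity on the $\mu$-isotypic subspace of $(\mathbb{C}^d)^{\otimes N}$ under the $S(N)$-action, so it commutes with every element of the ideal $\mathcal{M}\subset\A$ and simply selects the corresponding irreducible block. By the branching rule, the restriction of a $\mu\vdash N$ block to $S(N-1)$ decomposes into the $\alpha\vdash N-1$ components with $\mu\in\alpha$; together with the PRIR structure recalled in~\eqref{notfull}, this forces the only nonvanishing contributions to come from pairs $(\alpha,\mu)$ satisfying $\mu\in\alpha$ and $\mu\neq\theta$, and fixes how the two Young projectors slot into the reduced basis $f$ of $\mathcal{M}$.

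On each such $(\alpha,\mu)$ block I would apply Proposition~\ref{p13} to express the matrix elements of $\sqrt{\Pi_N}$, and then perform the trace exactly as in the proof of Theorem~\ref{expplicit}. The trace inside the block evaluates to $\sqrt{d_\alpha}\,(\sum_{\nu\neq\theta,\nu\in\alpha}\sqrt{m_\nu d_\nu})^{2}/(\sqrt{N}\sqrt{Nd_\alpha-d_\theta})$; one of the two $\sum\sqrt{m_\nu d_\nu}$ factors gets absorbed into the $v_\mu/\sqrt{d_\mu m_\mu}$ coefficient when summed over $\mu\in\alpha$ (since $\mu$ ranges exactly over the admissible $\nu$'s), leaving a single such sum in the final formula. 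Collecting the $d^N$-normalisations from $O_A$ and $O_{\widetilde A}$ against the $d^{N+1}$ and $\sqrt{d^{N+1}}$ prefactors inherited from Lemma~\ref{thm1_rec} produces the overall $d^{-1/2}$ and gives the announced expression~\eqref{eq:f_opt}.

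The main obstacle will be the bookkeeping of indices: the Young projectors $P_\mu$ live in the $S(N)$-algebra while the PRIR decomposition of $\A$ is parametrised by $\alpha\vdash N-1$, so one must carefully verify that the branching $\mu\in\alpha$ together with the restriction $\nu\neq\theta$ when $h(\alpha)=d$ matches the same pattern in~\eqref{notfull}, and that the multiplicity and dimension factors from the Schur--Weyl duality combine into the compact form $v_\alpha v_\mu/m_\alpha^{1/2}$ without leaving residual $d_\mu$ or $m_\mu$ terms. Once this matching is checked, the rest is a direct specialisation of the machinery of Section~\ref{StructurePOVMs}.
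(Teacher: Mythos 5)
Your proposal follows essentially the same route as the paper: it starts from the optimal-PBT analogue of Lemma~\ref{thm1_rec} (Theorem~\ref{thm:fidel} in Appendix~\ref{rec_OdPBT}), substitutes the explicit form~\eqref{expOA} of $O_A$ and $O_{\widetilde A}$ to reduce everything to block traces of $P_\mu P_\alpha V'\sqrt{\Pi_N}$, and evaluates these in the PRIR basis via the branching rule $\mu\in\alpha$ and Proposition~\ref{p13} — which is precisely the content of the paper's Proposition~\ref{Prop_aux1}, including the cancellation of the $\sqrt{d_\mu m_\mu}$ factor against the coefficient of $P_\mu$. The argument is correct and matches the paper's proof in both structure and key steps.
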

The proof of the above theorem is contained in Appendix~\ref{rec_OdPBT}. Similarly as it was for Theorem~\ref{expplicit} we present the general statement of Theorem~\ref{F_rec_optimal} in the qubit case, where the final expression depends only on the total number of ports $N$. In this case all Young frames are up to two rows and they are of the form $(N-l,l)$, so the entries  entries $v_{\mu}$ are labelled by two indices $N,l$ as $v_l^{(N)}$.
\begin{lemma}
	\label{lem:f_opt}
	In the special case $d=2$ the expression~\eqref{eq:f_opt} reads
	\be
	\begin{split}
		&F(\mathcal{P}_{rec}(N,2,1)) =\\
		&=\frac{1}{\sqrt 2}\sum _{l=0}^{k }
		\frac{v_l^{(N-1)}\left({v_l^{(N)} + v_{l+1}^{(N)}}\right)}{N-2l}
		\left ((N-2l+1)\sqrt{\frac{{N+1 \choose l}}{N+1}} + (N-2l-1)\sqrt{\frac{{N+1 \choose l+1}}{N+1}} \right)^2
		\sqrt{\frac{(N-l+1)(l+1)}{(N-2l)(N+1){N\choose l}}},
	\end{split}
	\ee
	where $k=\left \lceil{\frac{N}{2}-1}\right \rceil$, $v_l^{(N)}$ is the coefficient of the operator $O_A$ associated with the irrep $\mu=(N-l,l)$ in the qubit case. If $l+1>N/2$, it is equal to 0, otherwise it is given by
	\be
	v_l^{(N)}=\begin{cases} (-1)^{\frac{N}{2}-l}
			{\left(\operatorname{sin}
		\frac{\left(\frac{N+2}{2}-l\right) N\pi}{N+2} - \operatorname{sin}\frac{\left(\frac{N}{2}-l\right) N\pi}{N+2}\right)}\bigg/{\operatorname{sin}\frac{ N\pi}{N+2}}\; & \textit{for even N}, \\
		(-1)^{\frac{N-1}{2}-l}\operatorname{sin} \frac{\left(\frac{N+1}{2}-l\right) N\pi}{N+2}\big/{\operatorname{sin}\frac{ N\pi}{N+2}}\;                                 & \textit{for odd N.}
	\end{cases}
	\label{eq:v_l}
	\ee
\end{lemma}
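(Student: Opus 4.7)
My approach is to specialise the general formula~\eqref{eq:f_opt} of Theorem~\ref{F_rec_optimal} to $d=2$, in much the same way that Lemma~\ref{Fqubits} is obtained from Theorem~\ref{expplicit}. For qubits the irreps of $S(N-1)$ entering the Schur--Weyl duality are labelled by two-row Young frames $\alpha_l=(N-1-l,l)$ with $l=0,1,\dots,\lfloor(N-1)/2\rfloor$, so the outer sum reduces to a sum over $l$. For a fixed $\alpha_l$ the irreps $\mu\in\alpha_l$ obtained by adding a single box are the (at most two) two-row frames $(N-l,l)$ and $(N-1-l,l+1)$, which I would identify with the coefficients $v_l^{(N)}$ and $v_{l+1}^{(N)}$; hence $\sum_{\mu\in\alpha_l}v_\mu$ collapses to $v_l^{(N)}+v_{l+1}^{(N)}$ and $v_\alpha=v_l^{(N-1)}$. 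The three-row frame $\theta=(N-1-l,l,1)$ is present only for $l\geq 1$, and I would set $d_\theta=0$ at $l=0$.

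Next I would insert the explicit qubit Schur--Weyl data, all of which can be read off from the proof of Lemma~\ref{l:lemma9}: $m_{\alpha_l}=N-2l$, $d_{\alpha_l}=\tfrac{N-2l}{N}\binom{N}{l}$, and for the two children of $S(N)$ the quantities $\sqrt{m_\nu d_\nu}$ equal $(N-2l+1)\sqrt{\binom{N+1}{l}/(N+1)}$ and $(N-2l-1)\sqrt{\binom{N+1}{l+1}/(N+1)}$, together with $d_\theta=N\tfrac{(N-l)l}{(N+1-l)(l+1)}d_{\alpha_l}$ from the hook-length computation. The algebraic identity $(N+1-l)(l+1)-(N-l)l=N+1$ then reduces the denominator in~\eqref{eq:f_opt} to
\begin{equation*}
Nd_{\alpha_l}-d_\theta=\frac{N(N+1)\,d_{\alpha_l}}{(N+1-l)(l+1)},
\end{equation*}
which is the source of the $\sqrt{(N-l+1)(l+1)/[(N+1)\binom{N}{l}]}$-type factor visible in the target formula.

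The explicit trigonometric expression~\eqref{eq:v_l} for $v_l^{(N)}$ I would not re-derive; it is taken directly from~\cite{ishizaka_quantum_2009,StuNJP}, where it is obtained as the normalised top eigenvector of the teleportation matrix $M_F$ in the qubit case. The convention $v_l^{(N)}=0$ for $l+1>N/2$ encodes that the corresponding frame then has height larger than $2$, and matches the convention $v_\mu=0$ outside the qubit Schur--Weyl decomposition used in Theorem~\ref{F_rec_optimal}; it also automatically kills the contribution of the second child whenever $(N-1-l,l+1)$ fails to be a valid Young frame, so no separate boundary case has to be handled by hand.

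The main obstacle, exactly as in the proof of Lemma~\ref{l:lemma9}, is the purely combinatorial book-keeping: carrying the shifted index sets for $S(N-1)$ and $S(N)$ consistently through every factor, handling uniformly the parities of $N$ at the upper summation limit $k=\lceil N/2-1\rceil$, and consolidating several nested square roots and binomial coefficients into the single closed-form summand stated in the lemma.
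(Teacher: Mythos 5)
Your proposal follows essentially the same route as the paper's own proof in Appendix~\ref{app:opbtd2}: specialise the general formula~\eqref{eq:f_opt} to two-row frames, insert the qubit Schur--Weyl data ($m_{\alpha_l}$, $d_{\alpha_l}$, $d_\theta$ from the hook-length computation) already established in the proof of Lemma~\ref{l:lemma9}, and import the Losonczi eigenvector formula for $v_l^{(N)}$ from Appendix~\ref{AppE} without re-derivation. Your identification of the two children $(N-l,l)$ and $(N-1-l,l+1)$ of $\alpha_l=(N-1-l,l)$ and the simplification $Nd_{\alpha_l}-d_\theta=N(N+1)d_{\alpha_l}/[(N+1-l)(l+1)]$ are exactly the ingredients the paper uses, spelled out in somewhat more detail than the paper itself provides.
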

The proof of the above lemma is located in Appendix~\ref{AppE} and ~\ref{app:opbtd2}. The values of $F(P_{rec}(N,d,1))$ for various values of port dimension  and both variants of PBT are depicted in the left panel of Figure~\ref{fig:fid}.  We see from it that the bound from Remark~\ref{Frec_bound} is attained reasonably fast.

Having expressions for $F(P_{rec}(N,d,1))$ in non- and optimal case we can say how the recycling protocol behaves after many rounds $k$. To do so use exactly the same reasoning as in the proof of Lemma 2 in~\cite{strelchuk_generalized_2013}, since the proof is dimension independent. This leads us to the following lower bound:
\begin{equation}
	\label{eq:accumulation}
	F(P_{rec}(N,d,k))\geq 1-2k(1-F(P_{rec}(N,d,1))),
\end{equation}
which obviously tends to 1 when $F(P_{rec}(N,d,1))\rightarrow 1$. This also shows that the error is additive with the number of rounds $k$. The bound from~\eqref{eq:accumulation}, for various values of rounds, is depicted on the right panel of Figure~\ref{fig:fid}.

The above results show clearly that there is no connection between the fidelity $F(P_{rec}(N,d,1))$ an the type of PBT protocol.
First of all, notice that the resource states in non- and optimal PBT are in fact very different, when we compute fidelity between them.
Let us consider resource state $|\Phi^+\>_{AB}$ in non-optimal PBT, when $O_A=\mathbf{1}_A$ in~\eqref{resource}, and optimal PBT $|\Phi\>_{AB}$, where $O_A$ is given through~\eqref{expOA}. Having that we can formulate the following lemma:
\begin{lemma}
	\label{l:FPBT}
	The fidelity between the resource state in non-optimal and optimal PBT with $N$ ports, each of dimension $d$ is given as:
	\begin{equation}
		\label{FPBT}
		F(|\Phi^+\>_{AB},|\Phi\>_{AB})=\frac{1}{\sqrt{d^N}}\sum_{\mu \vdash N} v_{\mu}\sqrt{d_{\mu}m_{\mu}},
	\end{equation}
	where $v_{\mu}$ are entries of an eignevector corresponding to a maximal eigenvalue of the teleportation matrix, $m_{\mu},d_{\mu}$ denote multiplicity and dimension of irreps of $S(N)$ in the Schur-Weyl duality, and $P_{\mu}$ is a respective Young projector.
	In particular case of qubits, when $d=2$, the fidelity~\eqref{FPBT} is of the form:
	\begin{equation}
		\label{F_qubit}
		F(|\Phi^+\>_{AB},|\Phi\>_{AB})=\frac{1}{d^{N+1}}\sum_{l=0}^{\lfloor \frac{N}{2}\rfloor} v_l^{(N)} \frac{(N-2l+1)}{\sqrt{N+1}}\sqrt{{N+1 \choose l}}.
	\end{equation}
\end{lemma}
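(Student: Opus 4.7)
My plan is to start from the definition of fidelity between the two (pure) resource states as the modulus of their inner product, so that
$F(|\Phi^+\rangle_{AB},|\Phi\rangle_{AB}) = |\langle \Phi^+|(O_A\otimes \mathbf{1}_B)|\Phi^+\rangle|$. The key input is the standard identity for a maximally entangled state: for any operator $X$ acting on $A=A_1\cdots A_N$,
\[
\langle \Phi^+|(X_A\otimes \mathbf{1}_B)|\Phi^+\rangle = \tfrac{1}{d^N}\tr(X),
\]
which follows directly from the expansion $|\Phi^+\rangle = (1/\sqrt{d^N})\sum_{\vec k}|\vec k\rangle_A|\vec k\rangle_B$ together with the fact that $\mathbf{1}_B$ enforces $\vec k' = \vec k$ on the $B$-side.

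I would then substitute the spectral form~\eqref{expOA} of $O_A$ and use $\tr(P_\mu)=m_\mu d_\mu$ from~\eqref{Yng_proj}, which gives
\[
\tr(O_A) = \sqrt{d^N}\sum_{\mu\vdash N}\frac{v_\mu}{\sqrt{d_\mu m_\mu}}\, m_\mu d_\mu = \sqrt{d^N}\sum_{\mu\vdash N} v_\mu\sqrt{d_\mu m_\mu}.
\]
Combining the two displays yields $F = (1/\sqrt{d^N})\sum_\mu v_\mu\sqrt{d_\mu m_\mu}$, which is exactly~\eqref{FPBT}. The absolute value can be dropped because the entries $v_\mu$ of the Perron-type eigenvector of the positive teleportation matrix $M_F$ can be chosen non-negative.

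For the qubit specialisation, I would restrict to the admissible Young frames $\mu=(N-l,l)$ with $0\le l\le \lfloor N/2\rfloor$. The Weyl dimension formula for $SU(2)$ yields the multiplicity $m_\mu = N-2l+1$, while the hook-length formula applied to two-row shapes gives $d_\mu = \frac{N-2l+1}{N+1}\binom{N+1}{l}$; hence $\sqrt{d_\mu m_\mu}=(N-2l+1)\sqrt{\binom{N+1}{l}/(N+1)}$. Inserting these values into~\eqref{FPBT} and relabelling the coefficients $v_\mu$ as $v_l^{(N)}$ according to the two-row Young frame convention reproduces the compact form~\eqref{F_qubit}.

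The derivation is conceptually very short and presents no genuine obstacle. The only place where care is needed is the hook-length calculation of $d_\mu$ for $\mu=(N-l,l)$: one has to correctly separate the hooks in the overlap columns $j\le l$ (which each receive an extra $+1$ from the second row) from those in the tail columns $j>l$, and verify that the resulting product collapses cleanly to $\tfrac{N-2l+1}{N+1}\binom{N+1}{l}$ rather than one of the equivalent rearrangements such as $\tfrac{N-2l+1}{N-l+1}\binom{N}{l}$.
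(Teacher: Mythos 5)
Your derivation of the general formula \eqref{FPBT} is correct and is essentially the paper's own argument (Appendix~\ref{AppB}, Eq.~\eqref{compa}): both reduce the overlap to $\frac{1}{d^N}\tr(O_A)$ via the transfer identity for the maximally entangled state and then invoke $\tr(P_\mu)=m_\mu d_\mu$ together with the explicit form~\eqref{expOA} of $O_A$; your remark that positivity of the Perron eigenvector lets one drop the modulus is also the implicit justification in the paper. For the qubit specialisation the routes differ: you substitute $m_\mu=N-2l+1$ and $d_\mu=\tfrac{N-2l+1}{N+1}\binom{N+1}{l}$ for the two-row frames directly into \eqref{FPBT} (and both expressions you quote for $d_\mu$ are indeed equal, consistent with~\eqref{eq:m_d_alpha}), whereas the paper's appendix instead passes to the $SU(2)$ angular-momentum picture of Ishizaka--Hiroshima, writing $O_A=\sum_j\sqrt{\gamma(j)}\mathds{1}(j)$ and producing an equivalent closed form in terms of sines; your direct substitution is in fact the cleaner way to arrive at the combinatorial form~\eqref{F_qubit}, while the paper's detour buys an explicit numerical formula once the Losonczi eigenvector is inserted. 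One caveat: your computation yields the prefactor $1/\sqrt{2^N}$, which is what \eqref{FPBT} gives at $d=2$ and what the appendix's own step $\tfrac{1}{2^N}\tr(O_A)$ produces, whereas \eqref{F_qubit} as printed carries $1/d^{N+1}=1/2^{N+1}$; so your claim to literally ``reproduce'' \eqref{F_qubit} does not hold as stated. This mismatch is a normalisation typo in the lemma's displayed qubit formula rather than a gap in your reasoning, but it is worth flagging rather than silently asserting agreement.
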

The proof of the above lemma is located in Appendix~\ref{AppB} with derivation of equivalent expression to~\eqref{F_qubit} in the picture of quantum angular momentum. 
\begin{figure}
	\centering
	\includegraphics[width=0.45\textwidth]{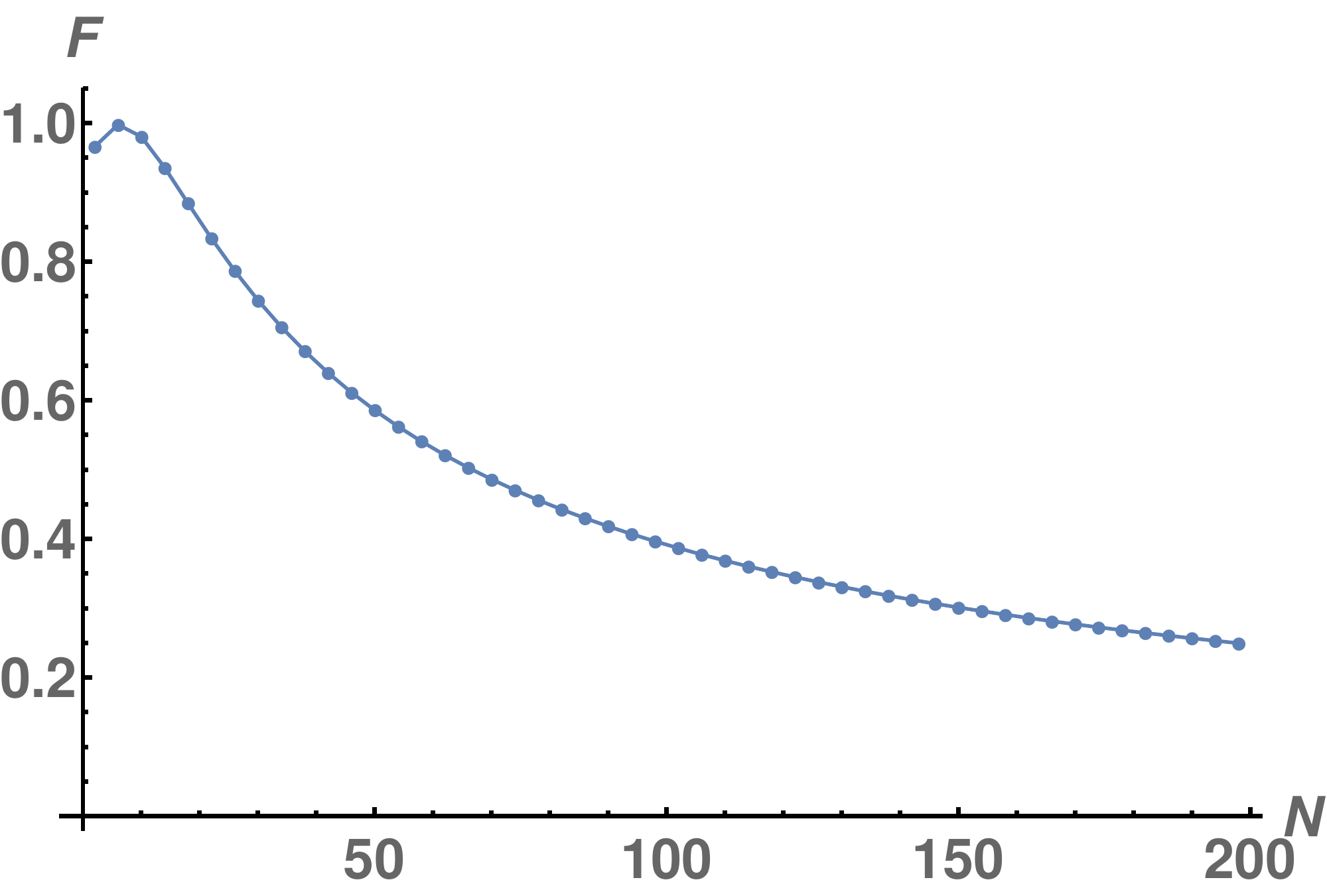}
	\includegraphics[width=0.45\textwidth]{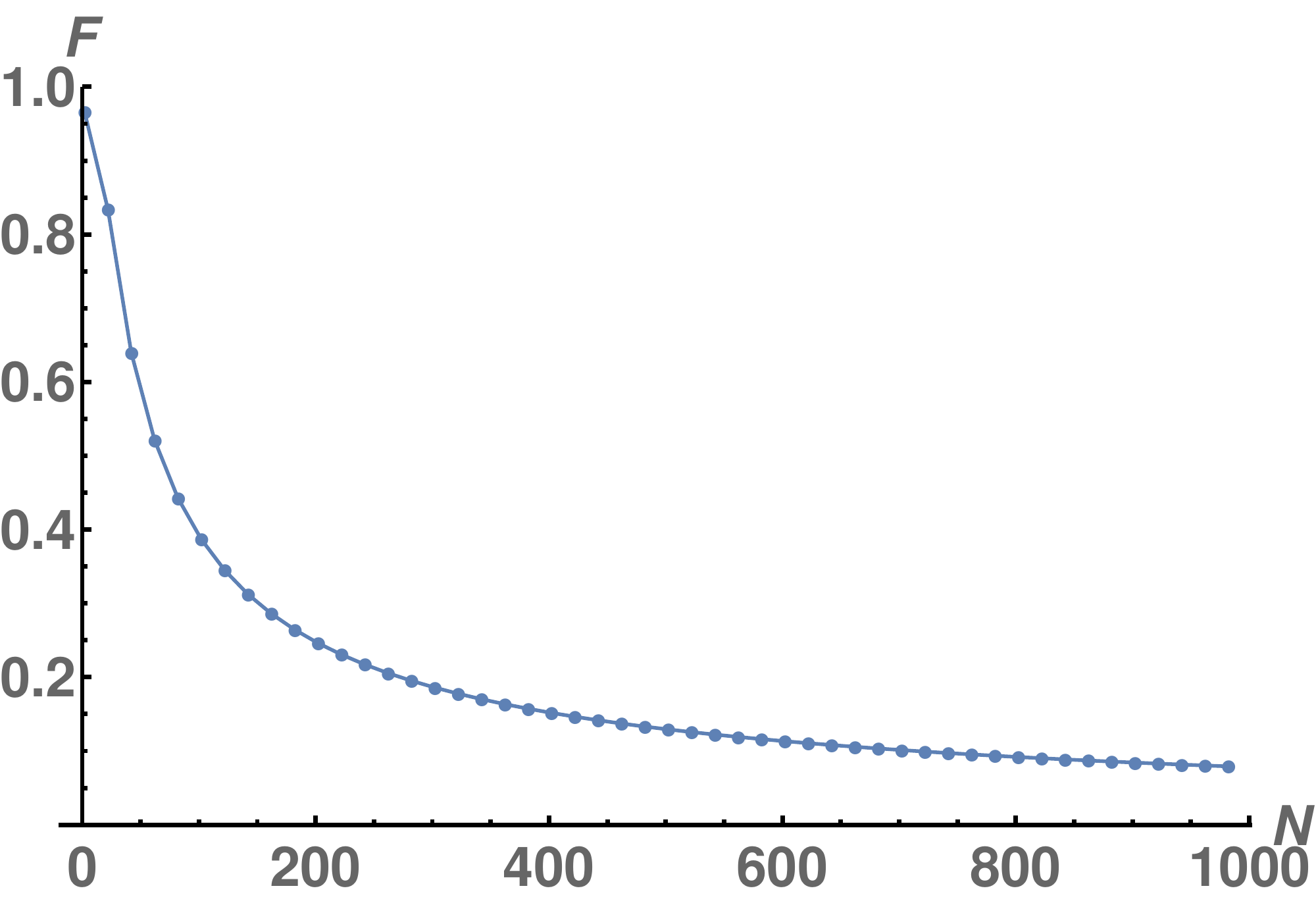}
	\caption{Figures depict the qubit fidelity between resource states in the non- and the optimal PBT for different maximal number of ports $N$. In the left panel we see that the mentioned fidelity is not a monotonic function for the small number of ports. Its maximal value which is $F=0.9977$ is attained for $N=6$. In the right panel we see monotonic behaviour for a large number of ports. In the asymptotic limit and the qubit case, the both states are orthogonal in the limit of $N\rightarrow \infty$.}
	\label{fig:my_label}
\end{figure}

From the above lemma we clearly see that two states can be different, even very much, but still both offer huge usefulness for the port-based teleportation - please see Figure~\ref{fig:my_label}. Namely, from paper~\cite{ishizaka_quantum_2009} we know that fidelity of teleportation in non-optimal PBT, when  one uses $|\Phi^+\>_{AB}$, scales as $1-\mathcal{O}(1/N)$, while in OPBT, when one uses $|\Phi\>$, scales as $1-\mathcal{O}(1/N^2)$. However, considering optimal version of PBT not guarantee higher values of fidelity of the recycled state. In particular, it can be seen for $d=2$ where  considered values are even lower for optimal PBT than for non-optimal protocol.

\section{Discussion and further research}
\label{diss}
In this paper we start from analysis of the square-root measurements used in deterministic PBT from the point of view of their group-theoretical properties. We show how to evaluate matrix elements of measurements in every irreducible block and in what follows we established their composition rule. This led us to conclusion in which situation the considered measurements become projective. Next, applying these findings we found a way how to evaluate effectively square-roots from the measurements in PBT and their overlap with the signal states.

Having the knowledge about interior structure of the measurements we analyse the recycling protocol for deterministic non-optimal and optimal port-based teleportation for an arbitrary dimension of the port.  
By exploiting symmetries in PBT scheme we derive an explicit expression for $F(\mathcal{P}_{rec}(N,d,1))$ for an arbitrary dimension $d$ depending only on dimensions and multiplicities of the symmetric groups $S(N)$ and $S(N-1)$ in the Schur-Weyl duality. This expressions are effectively computable by using dedicated group theoretical packages (SAGE), together with created by the Authors codes in Python\cite{PanKoperRecycling}. In the particular case of qubits, where all irreps are indexed by Young frames with at most two rows, by using the Hook length formula, we derive closed expression for $F(\mathcal{P}_{rec}(N,2,1))$ depending only on the number of ports $N$. We also derived a lower bound on  quantity $F(\mathcal{P}_{rec}(N,d,k))$ which shows that the resource state is still useful, even after a few round of teleportation. Additionally, our results show that there is no explicit connection between the type the resource state in PBT and values of recycled fidelity. In particular, fidelity between the resource states in non-optimal and optimal schemes is low for large $N$, but the recycling fidelity does not differ much between them.

We have left also with a few open questions for further possible research. Th first of them is to consider what is the real fidelity of a teleported state after every round of the recycling protocol. From the paper~\cite{strelchuk_generalized_2013} and results presented here, we know that this fidelity is high and approaches 1 with $N\rightarrow \infty$, for fixed $k$. This is due to the observation that closeness of the idealised state $|\psi_{id}\>$ and the states $|\psi^{(a)}_{out}\>$ disturbed by Alice's measurement implies closeness of the resulting fideilites of teleportation. Having explicit values of fidelity for teleported state after every round of the recycling protocol would allow for direct comparison with other existing protocols for teleporting a number of states. One of such possibility is for example non- and optimal version of multi-PBT schemes introduced and investigated in a series of papers~\cite{strelchuk_generalized_2013,Kopszak2020,stud2020A,Mozrzymas2021optimalmultiport}. This however, requires proving and computing new properties of the recycling scheme which are out of scope of this manuscript. 

The second problem is to consider a recycling protocol for the probabilistic PBT scheme. In this particular case the situation is even more complicated comparing to its deterministic counterpart. This is due to the fact that we can consider two scenarios. In the first scenario we consider the resource state after success of transmission, while in the second one after failure of the whole procedure. We expect totally different behaviour in these two cases, since no transmission corresponds to a very destructive POVM applied by Alice. Also it is not clear how to consider more than a one round of the recycling scheme for probabilistic protocol - success or failure can happen interchangeably causing than final quality of the recycling scheme depends also on number of successes and failures. 
\section*{Acknowledgements}
MS, MM, PK are supported through grant Sonatina 2, UMO-2018/28/C/ST2/00004 from the Polish National Science Centre.

\appendix
\section{Proof of Lemma~\ref{thm1_rec}}
\label{AppA0}
The result presented here is analogous to the result from Section 2 of Supplementary Materials of~\cite{strelchuk_generalized_2013}, but it is necessary for understanding the discussion on the recycling protocol for optimal PBT. This is the reason why we have decided to include detailed reasoning in this manuscript in the notation used here and adapted for group-theoretical reasoning presented later.

The goal is  to calculate the expression
\be
\label{fideq}
F(\mathcal{P}_{rec}(N,d,1))=\sum_{a=1}^Np_aF(\psi_{out}^{(a)}, \psi_{id}^{(a)})=\frac{1}{d^{N+1}}\sum_{a=1}^N\operatorname{Tr}(\widetilde{\Pi}_a^{A_0A})F(\psi_{out}^{(a)}, \psi_{id}^{(a)}),
\ee
where $p_a= \frac{1}{d^{N+1}}\operatorname{Tr}(\widetilde{\Pi}_a^{A_0A})$. The operator $\psi_{id}^{(a)}=|\psi_{id}^{(a)}\>\<\psi_{id}^{(a)}|$ corresponds to the total state after the ideal process of teleportation, with the following explicit form
\begin{equation}
	\begin{split}
		|\psi^{(a)}_{id}\>&=|\psi^+_{A_0A_a}\>|\psi^+_{B_0B_a}\>\otimes \left(\bigotimes_{\substack{j=1\\j\neq a}}^N |\psi_{A_jB_j}^+\>\right)=\sqrt{d^{N+1}}\left(\sqrt{\sigma_{A_0A_a}}\otimes \mathbf{1}_{B_0B}\right)|\psi^+_{A_0B_0}\>\otimes \left(\bigotimes_{\substack{j=1}}^N |\psi_{A_jB_j}^+\>\right)\\
		&=\sqrt{d^{N+1}}\left(\sqrt{\sigma_{A_0A_a}}\otimes \mathbf{1}_{B_0B}\right)|\psi_{in}\>_{A_0B_0AB}.
	\end{split}
\end{equation}
In the above equation by $\sigma_{A_0A_a}=\frac{1}{d^{N-1}}\left(\mathbf{1}_{\overline{A}_0\overline{A}_a}\otimes P^+_{A_0A_a}\right)$ we denote the signal states, and by $\mathbf{1}_{\overline{A}_0\overline{A}_a}$ we denote identity operator acting on all systems $A_0A_1\cdots A_N$ but $A_0$ and $A_a$. The state $\psi_{out}^{(a)}=|\psi_{out}^{(a)}\>\<\psi_{out}^{(a)}|$ corresponds to the total state after application by Alice a measurement $\widetilde{\Pi}_a$ in non-idealised state, and it has a form
\begin{equation}
	\label{eq8}
	\begin{split}
		|\psi_{out}^{(a)}\>=\frac{\left(\sqrt{\widetilde{\Pi}_a^{A_0A}}\otimes \mathbf{1}_{B_0B}\right)|\psi_{in}\>_{A_0B_0AB}}{\left|\left|\left(\sqrt{\widetilde{\Pi}_a^{A_0A}}\otimes \mathbf{1}_{B_0B}\right)|\psi_{in}\>_{A_0B_0AB}\right|\right|_2}.
	\end{split}
\end{equation}
Let us calculate square of the norm from equation~\eqref{eq8}:
\begin{equation}
	\begin{split}
		&\left|\left|\left(\sqrt{\widetilde{\Pi}_a^{A_0A}}\otimes \mathbf{1}_{B_0B}\right)|\psi_{in}\>_{A_0B_0AB}\right|\right|_2^2=\<\psi_{in}|\widetilde{\Pi}_a^{A_0A}\otimes \mathbf{1}_{B_0B}|\psi_{in}\>=\tr\left[\left(\widetilde{\Pi}_a^{A_0A}\otimes \mathbf{1}_{B_0B}\right)|\psi_{in}\>\<\psi_{in}|\right]\\
		&=\tr\left[\left(\widetilde{\Pi}_a^{A_0A}\otimes \mathbf{1}_{B_0B}\right)\left(\psi^+_{A_0B_0}\otimes \left[\bigotimes_{\substack{j=1}}^N\psi_{A_jB_j}^+\right]\right)\right]=\frac{1}{d^{N+1}}\tr\left[\widetilde{\Pi}_a^{A_0A}\right],
	\end{split}
\end{equation}
so $|\psi_{out}^{(a)}\>$ reads
\begin{equation}
	|\psi_{out}^{(a)}\>=\sqrt{\frac{d^{N+1}}{\tr\left[\widetilde{\Pi}_a^{A_0A}\right]}}\left(\sqrt{\widetilde{\Pi}_a^{A_0A}}\otimes \mathbf{1}_{B_0B}\right)|\psi_{in}\>_{A_0B_0AB}.
\end{equation}
Now we are in position to calculate terms $F(\psi_{out}^{(a)}, \psi_{id}^{(a)})$ from~\eqref{fideq}. Since all the states $\psi_{out}^{(a)},\psi_{id}^{(a)}$ are pure, we have $F(\psi_{out}^{(a)}, \psi_{id}^{(a)})=|\braket{\psi_{out}^{(a)}|\psi_{id}^{(a)}}|=\left|\tr(|\psi^{(a)}_{id}\>\<\psi^{(a)}_{out}|)\right|.$
Due to permutational symmetry of the signals $\sigma_{A_0A_a}$ and measurements $\widetilde{\Pi}_a$ discussed in Section~\ref{dPBT}, without loss of generality we can compute $F(\psi_{out}^{(a)}, \psi_{id}^{(a)})$ only for $a=N$, this means that
\begin{equation}
	\label{F15}
	\begin{split}
		&F(\mathcal{P}_{rec})=\sum_{a=1}^Np_aF(\psi_{out}^{(a)}, \psi_{id}^{(a)})=\frac{N}{d^{N+1}}\tr(\widetilde{\Pi}_N^{A_0A})\left|\tr(|\psi^{(N)}_{id}\>\<\psi^{(N)}_{out}|)\right|\\
		&=\frac{N\tr(\widetilde{\Pi}_N^{A_0A})}{d^{N+1}}\sqrt{\frac{d^{N+1}}{\tr\left[\widetilde{\Pi}_N^{A_0A}\right]}}\sqrt{d^{N+1}}\left|\tr\left[\left(\sqrt{\widetilde{\Pi}_N^{A_0A}}\otimes \mathbf{1}_{B_0B}\right)|\psi_{in}\>\<\psi_{in}|\left(\sqrt{\sigma_{A_0A_N}}\otimes \mathbf{1}_{B_0B}\right)\right]\right|\\
		&=\frac{N\tr(\widetilde{\Pi}_N^{A_0A})}{d^{N+1}}\sqrt{\frac{d^{N+1}}{\tr\left(\widetilde{\Pi}_N^{A_0A}\right)}}\frac{\sqrt{d^{N+1}}}{d^{N+1}}\left|\tr\left(\sqrt{\widetilde{\Pi}_N^{A_0A}}\sqrt{\sigma_{A_0A_N}}\right)\right|\\
		&=\frac{N}{d^{N+1}}\sqrt{\tr\left(\widetilde{\Pi}_N^{A_0A}\right)}\left|\tr\left(\sqrt{\widetilde{\Pi}_N^{A_0A}}\sqrt{\sigma_{A_0A_N}}\right)\right|.
	\end{split}
\end{equation}
Now using relation
\begin{equation}
	\label{sqrt_sigma}
	\begin{split}
		\sqrt{\sigma_{A_0A_N}}&=\sqrt{\frac{1}{d^{N-1}}\left(\mathbf{1}_{\overline{A}_0\overline{A}_N}\otimes P^+_{A_0A_N}\right)}=\frac{1}{\sqrt{d^{N-1}}}\left(\mathbf{1}_{\overline{A}_0\overline{A}_N}\otimes P^+_{A_0A_N}\right)\\
		&=\frac{d^{N-1}}{\sqrt{d^{N-1}}}\left[\frac{1}{d^{N-1}}\left(\mathbf{1}_{\overline{A}_0\overline{A}_N}\otimes P^+_{A_0A_N}\right)\right]\\
		&=\sqrt{d^{N-1}}\sigma_{A_0A_N},
	\end{split}
\end{equation}
since $P^+_{A_0A_N}$ is a projector.
Inserting the above to~\eqref{F15} we obtain
\begin{equation}
	\begin{split}
		F(\mathcal{P}_{rec}(N,d,1))&=N\frac{\sqrt{d^{N-1}}}{d^{N+1}}\sqrt{\tr\left(\widetilde{\Pi}_N^{A_0A}\right)}\left|\tr\left(\sigma_{A_0A_N}\sqrt{\widetilde{\Pi}_N^{A_0A}}\right)\right|\\
		&=\frac{N}{d}\frac{\sqrt{\tr\left(\widetilde{\Pi}_N^{A_0A}\right)}}{\sqrt{d^{N+1}}}\left|\tr\left(\sigma_{A_0A_N}\sqrt{\widetilde{\Pi}_N^{A_0A}}\right)\right|.
	\end{split}
\end{equation}
To obtain the second equality from~\eqref{thmeq} it is enough to use the definition of the signal state $\sigma_{A_0A_N}$ and observe that
\begin{equation}
	\label{VV'}
	P^+_{A_0A_N}=\frac{1}{d}V'.
\end{equation}
This finishes the proof.

\section{Partially reduced irreducible representations}
\label{Prir}
The concept of partially reduced irreducible representations has been introduced in~\cite{Studzinski2017}, and its main goal is to simplify representation theoretic  calculations in the algebra $\A$. In our work, as we show later on, it plays central role in evaluation of explicit equations of square root from square-root measurements in deterministic port-based teleportation protocols. Here we remind only facts and ideas, which are necessary for potential reader of this manuscript. Most of the facts and definitions are taken from~\cite{Studzinski2017,MozJPA}

Let us consider an arbitrary unitary irrep $\psi^{\mu }$ of $S(n)$. It can be always unitarily transformed to reduced form
$\psi_{R}^{\mu }$, such that
\be
\label{eq:prir}
\forall \pi \in S(n-1)\quad \psi _{R}^{\mu }(\pi )=\bigoplus _{\alpha \in \mu
}\varphi ^{\alpha }(\pi),
\ee
where $\varphi ^{\alpha }$ are irreps of $S(n-1)$.  The sum runs over all Young frame $\alpha$ which can be obtained from a frame $\mu$ by subtracting a single box. We call decomposition given in~\eqref{eq:prir} the \textit{Partially Reduced Irreducible Representations (PRIR)}. We see that the restriction of the irrep  $\psi^{\mu }$ of $S(n)$ to the subgroup $S(n-1)$ has a block-diagonal form of completely reduced representation, which in matrix notation takes the form%
\be
\label{prir1}
\forall \pi \in S(n-1)\quad \psi _{R}^{\mu }(\pi)=\left( \delta ^{\alpha \beta
}\varphi _{i_{\alpha }j_{\alpha }}^{\alpha }\right) .
\ee
The block structure of this reduced representation allows us to introduce such a block indexation for PRIR $\psi _{R}^{\mu }$ of $S(n)$, which gives
\be
\forall \sigma \in S(n)\quad \psi _{R}^{\mu }(\sigma )=\left( \psi _{i_{\alpha
	}j_{\beta }}^{\alpha \beta }(\sigma )\right) ,
\ee
where the matrices on the diagonal $(\psi _{R}^{\mu })^{\alpha \alpha
		}(\sigma )=\left( \psi _{i_{\alpha }j_{\alpha }}^{\alpha \alpha }(\sigma )\right) $ are
of dimension of corresponding irrep $\varphi ^{\alpha }$ of $S(n-1)$. The
off diagonal blocks need not to be square. 
The PRIR notation allows us for relative friendly description of the basic objects in the algebra $\A$, being also a building blocks in deterministic PBT scheme. In particular, we have
\begin{proposition}[extended version of Prop. 33, see page 14 of~\cite{MozJPA}]
	\label{BP20}
	In the irrep $\Phi^{\alpha }$ of the algebra $%
		\mathcal{A}_{n}^{t_{n}}(d)$ we have the following matrix representation of elements $%
		V'[(a,n)]$
	\be
	\label{blee1}
	M_{f}^{\alpha }\left[V'[(a,n)]\right]_{j_{\xi_{\omega}} \ j_{\xi_{\nu}}}^{\xi_{\omega} \ \xi_{\nu}}=\frac{1}{n-1}\frac{\sqrt{d_{\xi}d_{\omega}}}{d_{\alpha}}\sum_{k_{\alpha}}\sqrt{\gamma_{\omega}(\alpha)}\psi_{R \ j_{\xi_{\omega}} \ k_{\alpha}}^{\omega \ \xi_{\omega} \ \alpha}[(a, n-1)]\psi_{R \ k_{\alpha} \ j_{\xi_{\nu}}}^{\nu \
		\alpha \ \xi_{\nu}}[(a, n-1)]\sqrt{\gamma_{\nu}(\alpha)},
	\ee
	where $\omega ,\nu \neq \theta$ and the subscript $f$  means that the matrix representation is calculated
	in reduced basis $%
		f\equiv\{f_{j_{\nu} }^{\nu }:h(\nu )\leq d,\quad j_{\nu} =1,\ldots,d_{\nu }\}$ of the ideal $\Phi^{\alpha }$.

	In particular for $a=n-1$ expression~\eqref{blee1} reduces to
	\be
	\label{blee2}
	M_f^{\alpha}\left(V' \right)_{j_{\xi_{\omega}} \ j_{\xi_{\nu}}}^{\xi_{\omega} \ \xi_{\nu}}=\frac{1}{n-1}\frac{\sqrt{d_{\xi}d_{\omega}}}{d_{\alpha}}\sqrt{\gamma_{\omega}(\alpha)\gamma_{\nu}(\alpha)}\delta^{\xi_{\omega}\alpha}\delta^{\xi_{\nu}\alpha}\delta_{j_{\xi_{\omega} j_{\xi_{\nu}}}}.
	\ee
	In the above expressions numbers $\gamma_{\nu}(\alpha)$ equal to
	\be
	\gamma_{\nu}(\alpha)=(n-1)\frac{m_{\nu}d_{\alpha}}{m_{\alpha}d_{\nu}}
	\ee
	are eigenvalues of the sum the following operator
	\be
	\widetilde{\rho}=\sum_{i=1}^N V'[(a,n)].
	\ee
\end{proposition}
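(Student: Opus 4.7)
The strategy is to reduce the general formula \eqref{blee1} to its specialisation \eqref{blee2} at $a=n-1$ via the elementary conjugation identity
$V'[(a,n)] = V[(a,n-1)]\, V'[(n-1,n)]\, V[(a,n-1)]$.
This is valid because $V[(a,n-1)]$ acts trivially on system $n$, hence commutes with the partial transposition $t_n$; as ordinary permutations, $(a,n-1)(n-1,n)(a,n-1)=(a,n)$, and applying $t_n$ to both sides of this relation (harmlessly, thanks to the trivial action on the $n$th factor) produces the displayed identity. Applying the PRIR representation $M_f^\alpha$ of $\mathcal{A}_n^{t_n}(d)$ turns this into
$M_f^\alpha[V'[(a,n)]] = M_f^\alpha[V(a,n-1)]\, M_f^\alpha[V'[(n-1,n)]]\, M_f^\alpha[V(a,n-1)]$.

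Next, I would use the fact that $V(a,n-1)\in\mathbb{C}[S(n-1)]\subset\mathcal{A}_n^{t_n}(d)$ acts block-diagonally in the $S(n-1)$-reduction \eqref{full}/\eqref{notfull} of $M_f^\alpha$: inside the $\omega$-block it is $\psi_R^\omega[(a,n-1)]$, and inside the $\nu$-block it is $\psi_R^\nu[(a,n-1)]$. Inserting \eqref{blee2} for the middle factor, whose only non-vanishing entries carry the Kronecker deltas $\delta^{\xi_\omega \alpha}\delta^{\xi_\nu \alpha}\delta_{j_{\xi_\omega} j_{\xi_\nu}}$ that force both "inner" sub-blocks to coincide with $\alpha$, collapses the triple matrix product to a single sum over the $\alpha$-sub-block dimension index $k_\alpha$, producing precisely the contraction $\sum_{k_\alpha}\psi_R^\omega[(a,n-1)]^{\xi_\omega\,\alpha}_{j_{\xi_\omega}\,k_\alpha}\,\psi_R^\nu[(a,n-1)]^{\alpha\,\xi_\nu}_{k_\alpha\,j_{\xi_\nu}}$ appearing in \eqref{blee1}, multiplied by the scalar $\tfrac{1}{n-1}\tfrac{\sqrt{d_\omega d_\nu}}{d_\alpha}\sqrt{\gamma_\omega(\alpha)\gamma_\nu(\alpha)}$ inherited from \eqref{blee2}. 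This yields the claimed formula.

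The real content therefore lies in \eqref{blee2} itself, which treats the $a=n-1$ case. The cleanest route is to cite Proposition~33 of~\cite{MozJPA}, since the present statement is explicitly advertised as its extension. A direct argument would assemble three ingredients: (i) $V'[(n-1,n)]$ equals $d$ times the projector $P^+_{n-1,n}$ on the maximally entangled state tensored with the identity on the remaining $n-2$ sites, hence it is rank-one on the pair $(n-1,n)$ and in the irrep $\Phi^\alpha$ of $\mathcal{M}$ its matrix elements are supported only on reduced-basis vectors carrying the "invariant" $S(n-2)$-label $\alpha$; this accounts for the deltas $\delta^{\xi_\omega\alpha}\delta^{\xi_\nu\alpha}$; (ii) the eigenvalues of $\widetilde\rho=\sum_{a=1}^{n-1}V'[(a,n)]$ are $\gamma_\nu(\alpha)=(n-1)m_\nu d_\alpha/(m_\alpha d_\nu)$, which pins down the proportionality factors $\sqrt{\gamma_\omega(\alpha)\gamma_\nu(\alpha)}$; (iii) the overall normalisation $\sqrt{d_\omega d_\nu}/((n-1)d_\alpha)$ is fixed by a trace consistency check, which is just the statement that summing \eqref{blee2} over $a$ reproduces the spectrum of $\widetilde\rho$ given in~\cite{MozJPA}.

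The principal obstacle is the justification of \eqref{blee2}: establishing the delta structure requires careful bookkeeping of the reduced-basis construction of the ideal $\mathcal{M}$ from~\cite{Moz1,MozJPA}, and verifying the numerical prefactor relies on the basis normalisations and on the explicit form of $\gamma_\nu(\alpha)$ coming from the spectral analysis of $\widetilde\rho$. Once \eqref{blee2} is accepted, the reduction to general $a$ described in the first two paragraphs is a one-line calculation.
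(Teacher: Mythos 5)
The paper does not actually prove this proposition: it is imported wholesale as ``Prop.~33, p.~14'' of \cite{MozJPA}, with the remark that the spectral statement about $\widetilde{\rho}$ is Proposition~2 of \cite{Studzinski2017}. So the relevant comparison is between your reconstruction and that citation. Your first step --- the reduction of general $a$ to $a=n-1$ via $V'[(a,n)]=V[(a,n-1)]\,V'\,V[(a,n-1)]$ --- is correct and cleanly executed: $V[(a,n-1)]$ acts trivially on site $n$, so the conjugation relation $(a,n-1)(n-1,n)(a,n-1)=(a,n)$ survives the partial transposition $t_n$, and since $V[(a,n-1)]\in\mathcal{A}_{n-1}(d)$ is block-diagonal in the $S(n-1)$-reduction of $M_f^\alpha$ with blocks $\psi_R^\omega[(a,n-1)]$, the deltas $\delta^{\xi\alpha}$ in \eqref{blee2} collapse the triple product to exactly the contraction $\sum_{k_\alpha}\psi_R^\omega[\cdot]^{\xi_\omega\alpha}_{j k_\alpha}\psi_R^\nu[\cdot]^{\alpha\xi_\nu}_{k_\alpha j'}$ of \eqref{blee1}. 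This is a genuine (if modest) addition, since the paper leaves the passage from \eqref{blee2} to \eqref{blee1} entirely implicit in the citation.

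For the base case \eqref{blee2}, your primary route --- citing Proposition~33 of \cite{MozJPA} --- is precisely what the paper does, so the proposal is acceptable on those terms. But be aware that your sketched ``direct argument'' would not stand on its own: ingredient (i) (the support of $V'\propto P^+_{n-1,n}\otimes\mathbf{1}$ forcing the labels $\xi_\omega=\xi_\nu=\alpha$) is the real content and itself requires the reduced-basis construction of the ideal $\mathcal{M}$, while ingredients (ii) and (iii) are only consistency checks. Knowing the spectrum of the \emph{sum} $\widetilde{\rho}=\sum_a V'[(a,n)]$ and matching a single trace does not determine the individual matrix elements of one summand $V'$; in particular it cannot by itself establish the identity structure $\delta_{j_{\xi_\omega}j_{\xi_\nu}}$ inside the $\alpha$-sub-blocks or the factorised form $\sqrt{d_\omega d_\nu\,\gamma_\omega(\alpha)\gamma_\nu(\alpha)}$ of the coefficients (many rank-$d_\alpha$ operators share that trace). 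You flag this obstacle yourself, which is the right instinct: either cite \cite{MozJPA} for \eqref{blee2}, or carry out the full PRIR computation there --- the heuristics do not substitute for it.
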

The last sentence from the above Proposition is not obvious and it has been proven in Proposition 2 of~\cite{Studzinski2017}. In fact this proposition states that the non-zero eigenvalues of the operator $\rho$ given in~\eqref{eq:rhoV} are of the form
\begin{equation}
	\label{llambda}
	\lambda_{\nu}(\alpha)=\frac{N}{d^{N}}\frac{m_{\nu}d_{\alpha}}{m_{\alpha}d_{\nu}},\quad \text{where}\quad N=n-1.
\end{equation}
We can say even more (see Theorem 1 in~\cite{Studzinski2017}), namely the operator $\rho$ admits the following spectral decomposition
\begin{equation}
	\label{rho_spectral}
	\rho=\frac{1}{d^N}\widetilde{\rho}=\sum_{\alpha \vdash N-1}\sum_{\nu \in \alpha}\lambda_{\nu}(\alpha)F_{\nu}(\alpha),
\end{equation}
where $F_{\nu}(\alpha)$ are projectors on irreps of $\A$ described briefly in Section~\ref{tools}.

Since the projectors $F_{\nu}(\alpha)$ play a central role in our considerations we also need an explicit form of operators $F_{\mu}(\alpha)$ in PRIR representation:
\begin{lemma}[Lemma 35, page 15 of~\cite{MozJPA}]
	\label{lemma_Mf}
	The matrix form of the projector $F_{\nu }(\alpha )$ on non-trivial irreducible spaces of the algebra $\mathcal{A}_{n}^{t_{n}}(d)$, in the reduced basis $f$ has
	the following form
	\be
	M_{f}^{\alpha }[F_{\nu }(\alpha )]_{\xi _{\eta }j_{\xi _{\eta }}\xi _{\mu
	}j_{\xi _{\mu }}}^{\eta \qquad \mu }=\delta ^{\eta \nu }\delta ^{\nu \mu
	}\delta _{\xi _{\eta }\xi _{\mu }}\delta _{j_{\xi _{\eta }}j_{\xi _{\mu }}},
	\ee
	i.e.  in the
	reduced basis $%
		f\equiv\{f_{j_{\nu} }^{\nu }:h(\nu )\leq d,\quad j_{\nu} =1,\ldots,d_{\nu }\}$ of the ideal $\Phi^{\alpha }$, the projector $F_{\nu }(\alpha )$ takes its canonical
	form with one$'$s on the diagonal in the position of the irrep $%
		\psi ^{\nu }$ of the group $S(n-1)$ only.
\end{lemma}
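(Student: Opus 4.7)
The plan is to read off the matrix form of $F_{\nu}(\alpha)$ directly from the way the reduced PRIR basis $f$ is constructed, rather than to compute it from any defining algebraic expression. The argument will be conceptual, organised around two ingredients: the block structure of the ideal $\mathcal{M}$ and the adaptedness of the PRIR basis to the restriction $\A\downarrow S(n-1)$.

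First, I would recall that by the decomposition~\eqref{A_decomp} the ideal $\mathcal{M}\subset\A$ splits into simple blocks labelled by pairs $(\alpha,\nu)$ with $\alpha\vdash n-2$ and $\nu\in\alpha$, and that $F_{\nu}(\alpha)$ is by definition the central idempotent of $\A$ associated with the $(\alpha,\nu)$ block. Equivalently, $F_{\nu}(\alpha)$ is the orthogonal projector onto the subspace on which the irrep $M_{f}^{\alpha}$ of $\A$, restricted to $S(n-1)\subset\A$, acts as the irrep $\varphi^{\nu}$.

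Next, I would invoke the construction of PRIR recalled in~\eqref{eq:prir}: inside the $\A$-irreducible block labelled by $\alpha$, the reduced basis $f$ is chosen so that the restriction to $S(n-1)$ is block-diagonal with diagonal blocks $\varphi^{\nu}$ of dimension $d_{\nu}$, one for each $\nu\in\alpha$. In this indexing, $\nu$ names the $S(n-1)$-irrep, $\xi_{\nu}$ tracks its occurrence (multiplicity label) within the $\alpha$-block, and $j_{\xi_{\nu}}$ is the internal $\varphi^{\nu}$-index. Consequently $F_{\nu}(\alpha)$ must act as the identity on every reduced basis vector carrying the label $\nu$ and as zero on every vector labelled by a different frame. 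In matrix notation this is exactly
\[
M_{f}^{\alpha}[F_{\nu}(\alpha)]^{\eta\ \mu}_{\xi_{\eta}j_{\xi_{\eta}}\xi_{\mu}j_{\xi_{\mu}}}=\delta^{\eta\nu}\delta^{\nu\mu}\delta_{\xi_{\eta}\xi_{\mu}}\delta_{j_{\xi_{\eta}}j_{\xi_{\mu}}}.
\]

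The hard part of this programme is the first ingredient: showing that the simple components of $\mathcal{M}$ are indeed in bijection with pairs $(\alpha,\nu)$, $\nu\in\alpha$, and that the central projector onto the $(\alpha,\nu)$ piece coincides with the $S(n-1)$-block labelled by $\nu$ inside $M_{f}^{\alpha}$. This branching/compatibility statement is the content of the results from~\cite{Moz1,MozJPA} underlying PRIR: without it one cannot identify the $\nu$-block of the restriction with the image of $F_{\nu}(\alpha)$. Once that identification is in place, the lemma is pure bookkeeping in the PRIR indexing and the Kronecker delta formula follows immediately.
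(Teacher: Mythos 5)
The paper does not actually prove Lemma~\ref{lemma_Mf}: it is imported verbatim as Lemma~35 of~\cite{MozJPA}, so there is no in-paper argument to compare against, and your proposal has to stand on its own. Your \emph{operative} idea is the right one and is essentially how the cited source proceeds: once $F_{\nu}(\alpha)$ is identified as the orthogonal projector onto the $\nu$-isotypic component of the restriction $M_{f}^{\alpha}\!\downarrow S(n-1)$, the Kronecker-delta formula is immediate from the fact that the reduced basis $f$ is by construction adapted to that restriction, cf.~\eqref{eq:prir} and~\eqref{full}--\eqref{notfull}.

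However, your first characterisation of $F_{\nu}(\alpha)$ is wrong, and the two statements you present as ``equivalent'' are not. The simple components of the ideal $\mathcal{M}$ in~\eqref{A_decomp} are in bijection with frames $\alpha\vdash n-2$ of height at most $d$, \emph{not} with pairs $(\alpha,\nu)$: the irrep of $\A$ is $M_{f}^{\alpha}$, and only its restriction to the subalgebra $\mathcal{A}_{n-1}(d)$ splits according to $\nu\in\alpha$. Consequently $F_{\nu}(\alpha)$ is not a central idempotent of $\A$. If it were, Schur's lemma would force $M_{f}^{\alpha}[F_{\nu}(\alpha)]$ to be either $0$ or the identity on the entire $\alpha$-block, contradicting the very formula you are proving; one also sees directly from Proposition~\ref{p11} that $\A$ contains elements with nonzero off-diagonal $(\omega,\nu)$-blocks inside $M_{f}^{\alpha}$, so a partial diagonal projector cannot be central. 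This matters because the step you single out as ``the hard part'' and defer to~\cite{Moz1,MozJPA} --- establishing a bijection between simple components of $\mathcal{M}$ and pairs $(\alpha,\nu)$ --- is a false statement, so the references cannot supply it. The repair is to drop the centrality claim entirely and take as the starting point only your second characterisation ($F_{\nu}(\alpha)$ is central in the image of $\mathbb{C}[S(n-1)]$ within the $\alpha$-block, equivalently the eigenprojector of $\rho$ for the eigenvalue $\lambda_{\nu}(\alpha)$ in~\eqref{rho_spectral}); with that in place the PRIR bookkeeping you describe does yield the lemma.
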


\begin{corollary}[Corollary 36, page 15 of~\cite{MozJPA}]
	\label{CM4}
	\be
	\tr M_{f}^{\alpha }[F_{\nu }(\alpha )]=d_{\nu },
	\ee
	and from this we get
	\be
	\label{FF2}
	\tr_{\mathcal{H}}F_{\nu }(\alpha )=m_{\alpha }d_{\nu },
	\ee
	where $\mathcal{H}=(\mathbb{C}^d)^{\ot n}$, and $m_{\alpha}$ is the multiplicity  the irreps $\varphi ^{\alpha }$ of $S(n-2)$ in the representation $V(S(n-2))$.
\end{corollary}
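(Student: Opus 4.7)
The plan is to read off both trace formulas directly from the explicit PRIR matrix form of $F_{\nu}(\alpha)$ stated in Lemma~\ref{lemma_Mf}. The crucial input is that in the reduced basis $f$ the projector has matrix entries $\delta^{\eta\nu}\delta^{\nu\mu}\delta_{\xi_{\eta}\xi_{\mu}}\delta_{j_{\xi_{\eta}}j_{\xi_{\mu}}}$, which is nothing but the canonical identity block on the $\psi^{\nu}$-subrepresentation of $S(n-1)$ contained in the irrep $\Phi^{\alpha}$ of $\A$, and zero elsewhere. Once this description is available, both identities are elementary counting statements.

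For the first equation, I would compute $\tr M_{f}^{\alpha}[F_{\nu}(\alpha)]$ by summing the diagonal entries, i.e.\ those with $\eta=\mu$, $\xi_{\eta}=\xi_{\mu}$, $j_{\xi_{\eta}}=j_{\xi_{\mu}}$. The Kronecker deltas $\delta^{\eta\nu}\delta^{\nu\mu}$ force $\eta=\mu=\nu$, so only indices associated with the irrep $\psi^{\nu}$ of $S(n-1)$ contribute. The remaining double sum over $(\xi_{\nu},j_{\xi_{\nu}})$ with $\xi_{\nu}=\xi_{\nu}$ just counts the dimension of the carrier space of $\psi^{\nu}$, which by definition equals $d_{\nu}$. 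This immediately gives $\tr M_{f}^{\alpha}[F_{\nu}(\alpha)]=d_{\nu}$.

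For the second equation, I would use the Schur--Weyl-type decomposition of $\mathcal{H}=(\mathbb{C}^d)^{\otimes n}$ adapted to the algebra $\A$. The irrep $\Phi^{\alpha}$ of $\A$ appears in $\mathcal{H}$ with multiplicity $m_{\alpha}$, so the full trace of any element of $\A$ is the product of $m_{\alpha}$ with the trace computed inside a single irreducible block, summed over $\alpha$. Since $F_{\nu}(\alpha)$ is supported entirely in the single block labelled by $\alpha$ (it annihilates all other irreducible components), no summation over different $\alpha$'s is needed and one obtains $\tr_{\mathcal{H}}F_{\nu}(\alpha)=m_{\alpha}\tr M_{f}^{\alpha}[F_{\nu}(\alpha)]=m_{\alpha}d_{\nu}$.

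The only mildly subtle point, and the one I would make sure to justify cleanly, is the identification of the multiplicity $m_{\alpha}$: in the algebra $\A$ the blocks of the ideal $\mathcal{M}$ are labelled by $\alpha\vdash n-2$, and their multiplicities in $\mathcal{H}$ coincide with the multiplicities of the $S(n-2)$-irreps $\varphi^{\alpha}$ in $V(S(n-2))$, as recalled after~\eqref{A_decomp}. Given this identification, both identities follow without any further computation.
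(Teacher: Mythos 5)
Your proof is correct and follows the same route the paper intends: the first identity is read off from the explicit block form of $F_{\nu}(\alpha)$ in Lemma~\ref{lemma_Mf}, and the second follows by multiplying by the multiplicity $m_{\alpha}$ with which the irrep $M_f^{\alpha}$ of $\mathcal{A}_n^{t_n}(d)$ occurs in $\mathcal{H}$ (the paper itself only cites this corollary from~\cite{MozJPA} and signals exactly this derivation with ``and from this we get''). Your care in noting that $F_{\nu}(\alpha)$ vanishes on all blocks other than the one labelled by $\alpha$, so that no sum over partitions arises, is the right point to make explicit.
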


\section{Additional properties of the optimising operation $O_A$}
In optimal PBT from paper~\cite{StuNJP} we know that Alice to increase efficiency of the protocol has to apply to her part of shared maximally entangled pairs operation $O_A$ of the form
\begin{equation}
	\label{Oa}
	O_A=\sqrt{d^N}\sum_{\mu \vdash N} \frac{v_{\mu}}{\sqrt{d_{\mu}m_{\mu}}}P_{\mu},
\end{equation}
where the non-negative coefficients $v_{\mu}$ are entries of the eigenvector corresponding to the maximal eigenvalue of teleportation matrix discussed in Section 4 of the same work~\cite{StuNJP}. Now, we prove the following
\begin{fact}
	\label{trOXO}
	For every $1\leq a \leq N$, there is
	\begin{equation}
		\tr\left(O_A^{\dagger}\widetilde{\Pi}_a O_A\right)=\tr\left(\widetilde{\Pi}_a\right)=\frac{d^{N+1}}{N},
	\end{equation}
	where $\widetilde{\Pi}_a$ are POVMs with an additional part $\Delta$ as it is described in expression~\eqref{mea2}.
\end{fact}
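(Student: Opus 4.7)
The plan is to exploit the $S(N)$-covariance of the measurements together with the fact that $O_A O_A^\dagger$ is a central element of $\mathcal{A}_N(d)$. Concretely, I would first compute $O_A O_A^\dagger$ from the explicit form~\eqref{Oa}: using orthogonality of Young projectors, $P_\mu P_\nu=\delta_{\mu\nu}P_\mu$, one immediately obtains
\begin{equation}
O_A O_A^\dagger = d^N \sum_{\mu \vdash N}\frac{v_\mu^2}{d_\mu m_\mu}P_\mu.
\end{equation}
Note that $\tr(O_A O_A^\dagger)=d^N \sum_\mu v_\mu^2=d^N$ recovers the normalisation $\sum_\mu v_\mu^2=1$, consistent with $\tr(O_A^\dagger O_A)=d^N$.

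Next I would observe the key symmetry: since $O_A O_A^\dagger$ is a linear combination of Young projectors on the $A$ systems, it commutes with every $V(\pi)_A\otimes \mathbf{1}_{A_0}$, $\pi\in S(N)$. Combined with the covariance $V(\pi)\widetilde{\Pi}_a V^\dagger(\pi)=\widetilde{\Pi}_{\pi(a)}$ from~\eqref{mes_cov} and the cyclicity of the trace, this gives
\begin{equation}
\tr\left(O_A^\dagger \widetilde{\Pi}_a O_A\right)=\tr\left(O_A O_A^\dagger\, V(\pi_a)\widetilde{\Pi}_N V^\dagger(\pi_a)\right)=\tr\left(O_A O_A^\dagger\, \widetilde{\Pi}_N\right),
\end{equation}
where $\pi_a\in S(N)$ is any permutation sending $N\mapsto a$. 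Hence the quantity is independent of $a\in\{1,\ldots,N\}$.

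Finally I would sum over $a$ and use the POVM condition $\sum_{a=1}^{N}\widetilde{\Pi}_a=\mathbf{1}_{AA_0}$:
\begin{equation}
N\,\tr\left(O_A^\dagger \widetilde{\Pi}_a O_A\right)=\tr\left(O_A O_A^\dagger\sum_{a=1}^N \widetilde{\Pi}_a\right)=\tr\left(O_A^\dagger O_A\otimes\mathbf{1}_{A_0}\right)=d\cdot d^N,
\end{equation}
so dividing by $N$ yields $d^{N+1}/N$, which coincides with $\tr(\widetilde{\Pi}_a)$ as recorded in~\eqref{trPiN}. There is no real obstacle here; the only subtlety is making sure that the reduction of $O_A O_A^\dagger$ to a central element is done cleanly so that the commutation with $V(\pi)_A$ is manifest, after which the averaging argument is forced.
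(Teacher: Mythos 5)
Your proposal is correct and follows essentially the same route as the paper's own proof: both rest on the commutation of $O_A$ (a combination of Young projectors) with the $S(N)$ action, the covariance~\eqref{mes_cov} to establish $a$-independence of the trace, and summation over $a$ using the POVM completeness relation together with $\tr(O_A^{\dagger}O_A)=d^N$. Your version merely makes the centrality of $O_AO_A^{\dagger}$ and the conjugation step more explicit, which is a harmless elaboration.
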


\begin{proof}
	First let us observe that due to form of $O_A$ given in~\eqref{Oa} it commutes with all permutations from $S(N)$. On the other hand from~\eqref{mes_cov}, we know that $\widetilde{\Pi}_a$ are covariant with respect to the elements from $S(N)$. These properties allow us to write
	\begin{equation}
		\begin{split}
			&\sum_{a=1}^N\tr\left(O_A^{\dagger}\widetilde{\Pi}_aO_A\right)=\tr\left(O_A^{\dagger}\mathbf{1}_{(\mathbb{C}^d)^{\otimes n}}O_A\right)=d\tr\left(O_A^{\dagger}O_A\right)=d^{N+1},\\
			&N\tr\left(O_A^{\dagger}\widetilde{\Pi}_aO_A\right)=d^{N+1}.
		\end{split}
	\end{equation}
	In the first equality we use that operators $\widetilde{\Pi}_a$ are POVMs and they have to sum up to identity on the whole $(\mathbb{C}^d)^{\otimes n}$ space. In the second equality we use fact that $O_A$ acts on $N=n-1$ first systems. The third equality is due to normalisation condition $\tr(O_A^{\dagger}O_A)=d^N$. Finally, to get the second line we use mentioned covariance of POVMs.
\end{proof}

\section{Calculations for the recycling protocol for optimal deterministic PBT}
\label{rec_OdPBT}
In the optimal deterministic port-based teleportation (OdPBT), as we described earlier, Alice optimises over the shared maximally entangled pairs and the measurements before she runs the protocol. This optimisation results in application of the global operation $O_A$ on her halves of entangled states, see equation~\eqref{resource}. The goal of this section is to re-derive Theorem~\ref{thm1_rec} for the optimal protocol. We start from definitions of the ideal and the real state after the teleportation process.

The ideal state $|\psi_{id}^{(i)}\>$ after teleportation process is given as
\begin{align}
	|\psi_{id}^{(i)}\>= \ket{\psi^+}_{A_0A_i}\otimes\ket{\psi^+}_{B_0B_i}\otimes \ket{\psi}_{\overline{A}_i\overline{B}_i},
\end{align}
where $\overline{A}_i\overline{B}_i$ denotes all subsystems except this on $i$-th position.
After the ideal process of teleportation we would like the parties to share also ideal resource state, except ideally teleported systems. It would mean that the state $\ket{\psi}_{\overline{A}_i\overline{B}_i}$ should be optimal for OdPBT performed on $N-1$ ports:
\begin{align}
	\ket{\psi}_{\overline{A}_i\overline{B}_i}=\left(O_{\widetilde{A}}\otimes \mathbf{1}\right)\bigotimes_{j=1,j\neq i}^{N} \ket{\psi^+}_{A_jB_j},
\end{align}
where $\widetilde{A}=A_1A_2\cdots \overline{A}_i\cdots A_N$.
This leads us to
\begin{align}
	|\psi_{id}^{(i)}\> & = \ket{\psi^+}_{A_0A_i}\otimes\ket{\psi^+}_{B_0B_i}\otimes \left(O_{\widetilde{A}}\otimes \mathbf{1}\right)\bigotimes_{j=1,j\neq i}^{N} \ket{\psi^+}_{A_jB_j}                                                  \\
	                   & =\sqrt{d^{N+1}}\sqrt{\sigma_{A_0A_i}}\left(\ket{\psi^+}_{A_0B_0}\otimes\ket{\psi^+}_{A_iB_i}\right)\otimes \left(O_{\widetilde{A}}\otimes \mathbf{1}\right)\bigotimes_{j=1,j\neq i}^{N} \ket{\psi^+}_{A_jB_j}.
\end{align}
As it was shown in~\cite{leditzky2020optimality} the optimal measurements for OdPBT coincide with those for non-optimal PBT, but instead distinguishing signals $\{\sigma_{A_0A_i}\}_{i=1}^N$ we distinguish their rotated versions $\{O_A\sigma_{A_0A_i}O_A^{\dagger}\}_{i=1}^N$. It means we can use measurements from~\eqref{mea2} and the total state after application of a measurement $\widetilde{\Pi}_i^{A_0A}$, acting non-trivially on systems $A_0A$, with $A=A_1A_2\cdots A_N$, equals to
\begin{align}
	\label{Onorm}
	|\psi^{(i)}_{out}\>=\frac{\left(\sqrt{\widetilde{\Pi}_i^{A_0A}}\otimes \mathbf{1}\right)\left(|\psi^+\>_{A_0B_0}\otimes \left(O_A\otimes \mathbf{1}\right)\bigotimes_{j=1}^{N} \ket{\psi^+}_{A_jB_j}\right)}{\left|\left|\left(\sqrt{\widetilde{\Pi}_i^{A_0A}}\otimes \mathbf{1}\right)\left(|\psi^+\>_{A_0B_0}\otimes \left(O_A\otimes \mathbf{1}\right)\bigotimes_{j=1}^{N} \ket{\psi^+}_{A_jB_j}\right)\right|\right|_2}.
\end{align}
Having definitions of states $|\psi_{id}^{(i)}\>$ and $|\psi^{(i)}_{out}\>$ in OdPBT we are in position  to re-formulate Theorem~\ref{thm1_rec} from the main  text.
\begin{theorem}
	\label{thm:fidel}
	The fidelity $F(\mathcal{P}_{rec}(N,d,1))$ in the recycling scheme for the OdPBT scheme, with $N$ ports, each of dimension $d$, after one round of teleportation is the following:
	\begin{equation}
		\label{fidel}
		\begin{split}
			F(\mathcal{P}_{rec}(N,d,1))=
			\frac{\sqrt{N}}{d}\left|\tr\left(\sigma_{A_0A_N}\sqrt{\widetilde{\Pi}_N^{A_0A}}O_AO_{\widetilde{A}}^{\dagger}\right)\right|,
		\end{split}
	\end{equation}
	where $\sigma_{A_0A_N},\widetilde{\Pi}_N^{A_0A}$ are respectively the signal state and the measurement corresponding to index $a=N$ in~\eqref{eq:measurements}. Operators $O_A,O_{\widetilde{A}}$ are operations applied by Alice on her halves of shared maximally entangled state to increase the efficiency of the protocol, respectively for $N$ and $N-1$ ports.
\end{theorem}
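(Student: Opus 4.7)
The proof will parallel that of Lemma~\ref{thm1_rec} (Appendix~\ref{AppA0}), the only novelty being the handling of the optimising operations $O_A$ and $O_{\widetilde{A}}$ that now appear in the input and ideal states. First I would rewrite $|\psi_{id}^{(a)}\>$ as
\begin{equation}
|\psi_{id}^{(a)}\>=\sqrt{d^{N+1}}\,(\sqrt{\sigma_{A_0A_a}}\otimes\mathbf{1}_{B_0B})(\mathbf{1}_{A_0A_a}\otimes O_{\widetilde{A}}\otimes\mathbf{1}_B)|\psi_{in}\>_{A_0B_0AB},
\end{equation}
using the rewriting already employed in the non-optimal case together with the observation that $\sqrt{\sigma_{A_0A_a}}=\sqrt{d^{N-1}}\,\sigma_{A_0A_a}$ from~\eqref{sqrt_sigma} is the identity on $\widetilde{A}$, and hence commutes with the embedding of $O_{\widetilde{A}}$ into $A_0A$. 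Similarly, $|\psi_{out}^{(a)}\>$ is, up to normalisation, equal to $(\sqrt{\widetilde{\Pi}_a^{A_0A}}\otimes\mathbf{1})(\mathbf{1}_{A_0}\otimes O_A\otimes\mathbf{1}_B)|\psi_{in}\>$.

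Next I would evaluate the two ingredients of $F(\mathcal{P}_{rec})=\sum_a p_a|\<\psi_{id}^{(a)}|\psi_{out}^{(a)}\>|$. The probability $p_a$ equals the squared norm appearing in~\eqref{Onorm}, which via the maximally entangled identity $\<\psi_{in}|X_{A_0A}\otimes\mathbf{1}_{B_0B}|\psi_{in}\>=\tr(X)/d^{N+1}$ reduces to $\tr(O_A^{\dagger}\widetilde{\Pi}_a^{A_0A}O_A)/d^{N+1}$, and by Fact~\ref{trOXO} this equals exactly $1/N$. Inserting the two rewritten states into $\<\psi_{id}^{(a)}|\psi_{out}^{(a)}\>$ and applying the same identity yields
\begin{equation}
\<\psi_{id}^{(a)}|\psi_{out}^{(a)}\>=\sqrt{N}\cdot\frac{1}{\sqrt{d^{N+1}}}\,\tr\!\left[(\mathbf{1}_{A_0A_a}\otimes O_{\widetilde{A}}^{\dagger})\sqrt{\sigma_{A_0A_a}}\sqrt{\widetilde{\Pi}_a^{A_0A}}(\mathbf{1}_{A_0}\otimes O_A)\right].
\end{equation}
Replacing $\sqrt{\sigma_{A_0A_a}}$ by $\sqrt{d^{N-1}}\,\sigma_{A_0A_a}$ produces the overall prefactor $\sqrt{N}/d$, and cyclicity of the trace combined with the commutation of $\sigma_{A_0A_a}$ and $\mathbf{1}_{A_0A_a}\otimes O_{\widetilde{A}}^{\dagger}$ allows me to rewrite the trace in the form $\tr(\sigma_{A_0A_a}\sqrt{\widetilde{\Pi}_a^{A_0A}}O_AO_{\widetilde{A}}^{\dagger})$.

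Finally, covariance under $S(N)$ of both the signal states and the measurements, expressed by~\eqref{mes_cov}, implies that $|\<\psi_{id}^{(a)}|\psi_{out}^{(a)}\>|$ is independent of $a$, so the sum $F(\mathcal{P}_{rec}(N,d,1))=\sum_{a=1}^N(1/N)|\<\psi_{id}^{(a)}|\psi_{out}^{(a)}\>|$ collapses to the single $a=N$ contribution, which is precisely~\eqref{fidel}. The main obstacle is purely bookkeeping: one has to track carefully which optimising operator ($O_A$ acting on the full $A$ versus $O_{\widetilde{A}}$ acting only on $\widetilde{A}=A\setminus\{A_a\}$) acts on which systems and justify the relevant commutations — in particular, the commutation of $\sigma_{A_0A_a}$ with $\mathbf{1}_{A_0A_a}\otimes O_{\widetilde{A}}^{\dagger}$ is what permits $O_A$ and $O_{\widetilde{A}}^{\dagger}$ to be collected together into the product $O_AO_{\widetilde{A}}^{\dagger}$ appearing in~\eqref{fidel}.
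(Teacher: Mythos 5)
Your proposal is correct and follows essentially the same route as the paper's proof in Appendix~\ref{rec_OdPBT}: rewriting $|\psi_{id}^{(a)}\>$ and $|\psi_{out}^{(a)}\>$ in terms of $\sqrt{\sigma_{A_0A_a}}$, $O_{\widetilde{A}}$, $O_A$ acting on $|\psi_{in}\>$, reducing the overlap and the normalisation to traces via the maximally entangled state identity, invoking Fact~\ref{trOXO} together with $\tr(\widetilde{\Pi}_a^{A_0A})=d^{N+1}/N$, using $\sqrt{\sigma_{A_0A_a}}=\sqrt{d^{N-1}}\,\sigma_{A_0A_a}$, and collapsing the sum over $a$ by $S(N)$ covariance. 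The bookkeeping of which operator acts on which subsystems, and the commutation of $\sigma_{A_0A_a}$ with $O_{\widetilde{A}}^{\dagger}$, are handled exactly as in the paper.
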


\begin{proof}
	We start from computing norm in equation~\eqref{Onorm}. One can show that we have
	\begin{align}
		\left|\left|\left(\sqrt{\widetilde{\Pi}_i^{A_0A}}\otimes
		\mathbf{1}\right)\left(|\psi^+\>_{A_0B_0}\otimes \left(O_A\otimes \mathbf{1}\right)\bigotimes_{j=1}^{N} \ket{\psi^+}_{A_jB_j}\right)\right|\right|_2^2=\frac{1}{d^{N+1}}\tr\left(O_A^{\dagger}\widetilde{\Pi}_i^{A_0A}O_A\right).
	\end{align}
	In the next step we evaluate fidelity $F(|\psi_{id}^{(i)}\>,|\psi^{(i)}_{out}\>)=\left|\tr\left(|\psi^{(i)}_{out}\>\<\psi_{id}^{(i)}|\right)\right|$ between ideal and the real situation:
	\begin{equation}
		\begin{split}
			&\left|\tr\left(|\psi^{(i)}_{out}\>\<\psi_{id}^{(i)}|\right)\right|=\frac{d^{N+1}}{\sqrt{\tr\left(O_A^{\dagger}\widetilde{\Pi}_i^{A_0A}O_A\right)}}\times\\
			&\times\left|\tr\left(\left(\sqrt{\widetilde{\Pi}_i^{A_0A}}\otimes \mathbf{1}\right)\left(|\psi^+\>_{A_0B_0}\otimes \left(O_A\otimes \mathbf{1}\right)\bigotimes_{j=1}^{N} \ket{\psi^+}_{A_jB_j}\right)\left(\<\psi^+|_{A_0B_0}\otimes\<\psi^+|_{A_iB_i}\right)\sqrt{\sigma_{A_0A_i}}\otimes \bigotimes_{j=1,j\neq i}^{N} \<\psi^+|_{A_jB_j}\left(O_{\widetilde{A}}^{\dagger}\otimes \mathbf{1}\right)\right)\right|\\
			&=\frac{1}{\sqrt{\tr\left(O_A^{\dagger}\widetilde{\Pi}_i^{A_0A}O_A\right)}}\left|\tr\left(\sqrt{\widetilde{\Pi}_i^{A_0A}}\left(O_A\otimes \mathbf{1}_{A_0}\right)\sqrt{\sigma_{A_0A_i}}\left(O_{\widetilde{A}}^{\dagger}\otimes \mathbf{1}_{A_0A_i}\right)\right)\right|\\
			&=\frac{\sqrt{d^{N-1}}}{\sqrt{\tr\left(O_A^{\dagger}\widetilde{\Pi}_i^{A_0A}O_A\right)}}\left|\tr\left(\sqrt{\widetilde{\Pi}_i^{A_0A}}\left(O_A\otimes \mathbf{1}_{A_0}\right)\sigma_{A_0A_i}\left(O_{\widetilde{A}}^{\dagger}\otimes \mathbf{1}_{A_0A_i}\right)\right)\right|,
		\end{split}
	\end{equation}
	where to obtain the last line we use property from equation~\eqref{sqrt_sigma} from the main text.
	As it was discussed in Section~\ref{dPBT} the measurements and the signals are  covariant with respect to permutations $V(a,N)$, for $a=1,\ldots,N$. Next, due to definition of $O_A$ given in~\eqref{Oa} we see that  it is enough to calculate the above expression for $i=N$, so we have
	\begin{align}
		\left|\tr\left(|\psi^{(N)}_{out}\>\<\psi_{id}^{(N)}|\right)\right|=\frac{\sqrt{d^{N-1}}}{\sqrt{\tr\left(O_A^{\dagger}\widetilde{\Pi}_N^{A_0A}O_A\right)}}\left|\tr\left(\sqrt{\widetilde{\Pi}_N^{A_0A}}O_A\sigma_{A_0A_N}O_{\widetilde{A}}^{\dagger}\right)\right|,
	\end{align}
	where we suppressed the identity operators to simplify the notation.
	Then the fidelity$F(\mathcal{P}_{rec})$, due to~\eqref{PF} reads:
	\begin{equation}
		\begin{split}
			F(\mathcal{P}_{rec}(N,d,1))&=\sum_{i=1}^Np_iF(|\psi_{id}^{(i)}\>,|\psi^{(i)}_{out}\>)=Np_NF(|\psi_{id}^{(N)}\>,|\psi^{(N)}_{out}\>)\\
			&=Np_N\frac{\sqrt{d^{N-1}}}{\sqrt{\tr\left(O_A^{\dagger}\widetilde{\Pi}_N^{A_0A}O_A\right)}}\left|\tr\left(\sqrt{\widetilde{\Pi}_N^{A_0A}}O_A\sigma_{A_0A_N}O_{\widetilde{A}}^{\dagger}\right)\right|\\
			&=\frac{N\tr(\widetilde{\Pi}_N^{A_0A})}{d^{N+1}}\frac{\sqrt{d^{N-1}}}{\sqrt{\tr\left(O_A^{\dagger}\widetilde{\Pi}_N^{A_0A}O_A\right)}}\left|\tr\left(\sqrt{\widetilde{\Pi}_N^{A_0A}}O_A\sigma_{A_0A_N}O_{\widetilde{A}}^{\dagger}\right)\right|\\
			&=\frac{N\tr(\widetilde{\Pi}_N^{A_0A})}{d\sqrt{d^{N+1}}\sqrt{\tr\left(O_A^{\dagger}\widetilde{\Pi}_N^{A_0A}O_A\right)}}\left|\tr\left(\sigma_{A_0A_N}\sqrt{\widetilde{\Pi}_N^{A_0A}}O_AO_{\widetilde{A}}^{\dagger}\right)\right|,
		\end{split}
	\end{equation}
	since $p_i=\tr(\widetilde{\Pi}_i^{A_0A})/d^{N+1}$ and $[\sigma_{A_0A_N},O_{\widetilde{A}}^{\dagger}]=0$. Finally, applying Fact~\ref{trOXO} to the denominator of the above expression we obtain the first line from~\eqref{fidel}. To get the second expression from~\eqref{fidel} we have to reasoning from~\ref{trPiN}. This completes the proof.
\end{proof}

Please notice that plugging $O_A=\mathbf{1}_{(\mathbb{C}^d)^{\ot N}}$ and $O_A=\mathbf{1}_{(\mathbb{C}^d)^{\ot {N-1}}}$, we reduce to the statement of Theorem~\ref{thm1_rec} from the main text corresponding to the non-optimal deterministic PBT.  Having the general expression for $F(\mathcal{P}_{rec}(N,d,1))$ in~\eqref{fidel} in terms of operators describing optimal procedure we are ready to formulate theorem connecting the efficiency of the recycling protocol with group theoretic quantities as it was for non-optimal scheme in Theorem~\ref{thm1_rec}. First we prove the following technical proposition:

\begin{proposition}
	\label{Prop_aux1}
	Let $\mu \vdash N$ and $\alpha \vdash N-1$ label irreps of $S(N)$ and $S(N-1)$ respectively, then the following relation holds:
	\begin{equation}
		\tr\left[P_{\mu }P_{\alpha }V'\sqrt{\widetilde{\Pi}^{A_0A}_N}\right]=\delta _{\alpha ,\mu -\square }\frac{d_{\alpha }\sum_{\substack{\nu \neq \theta\\ \nu=\alpha+\square }}%
		\sqrt{m_{\nu }d_{\nu }}}{\sqrt{Nd_{\alpha}-d_{\theta }}}\frac{%
		\sqrt{m_{\mu }d_{\mu }}}{\sqrt{Nd_{\alpha }}},
	\end{equation}%
	where $P_{\alpha},P_{\mu}$ denote Young projectors, operator $V'$ is given through~\eqref{VV'} and measurement $\widetilde{\Pi}^{A_0A}_N$ in~\eqref{mea2}.
	The symbol $\delta _{\alpha ,\mu -\square }$ means that if a Young frame $\alpha \vdash N-1$ is not related to $\mu \vdash N$ by adding a single box then $\delta _{\alpha ,\mu -\square }=0$ and the resulting trace is zero, otherwise $\delta _{\alpha ,\mu -\square }=1$.
\end{proposition}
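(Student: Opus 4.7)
The plan is to reduce the trace to a single PRIR block of $\A$ and extract the answer by a matrix multiplication in the reduced basis. Since $V'=d^{N}\sigma_{A_0A_N}$ has its support in the ideal $\mathcal M$ while $\Delta$ is supported on the orthogonal subspace $\mathcal H_{\mathcal S}$, one has $V'\sqrt{\widetilde\Pi_N}=V'\sqrt{\Pi_N}$, so the excess term drops out of the trace. Decomposing $\mathcal H_{\mathcal M}$ into PRIR blocks of $\A$ with multiplicity $m_\beta$ yields
\begin{equation}
\tr\bigl[P_\mu P_\alpha V'\sqrt{\Pi_N}\bigr]=\sum_{\beta\vdash N-1}m_\beta\,\tr\bigl(M_f^\beta[P_\mu]\,M_f^\beta[P_\alpha]\,M_f^\beta[V'\sqrt{\Pi_N}]\bigr).
\end{equation}

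The key observation is that setting $a=N=n-1$ turns the transposition $(a,n-1)$ into the identity of $S(N-1)$, so equation~\eqref{blee2} and Proposition~\ref{p13} both yield matrix elements proportional to $\delta_{\xi_\omega\beta}\delta_{\xi_\nu\beta}\delta_{j_{\xi_\omega}j_{\xi_\nu}}$. Thus $M_f^\beta[V'\sqrt{\Pi_N}]$ is supported only on the PRIR sub-block labelled by $\xi=\beta$. On the other hand, Lemma~\ref{lemma_Mf} shows that $M_f^\beta[P_\alpha]$ projects onto $\xi=\alpha$, while $M_f^\beta[P_\mu]$ projects onto the $S(N)$-block $\nu=\mu$. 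Matching these two constraints forces $\beta=\alpha$ and $\mu\in\alpha$, which produces the factor $\delta_{\alpha,\mu-\square}$ and leaves only the single block $\Phi^\alpha$ in the sum.

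Inside $M_f^\alpha$ the remaining task is a direct matrix product on the $(\omega,\alpha;\nu,\alpha)$ sub-block followed by a trace over the $d_\alpha$-dimensional label $j_{\xi_\nu}$. Contracting the intermediate PRIR index and substituting $\gamma_\nu(\alpha)=Nm_\nu d_\alpha/(m_\alpha d_\nu)$ converts the middle sum into
\begin{equation}
\sum_{\substack{\omega'\in\alpha\\\omega'\neq\theta}}d_{\omega'}\sqrt{\gamma_{\omega'}(\alpha)}=\sqrt{\frac{Nd_\alpha}{m_\alpha}}\sum_{\substack{\nu\in\alpha\\\nu\neq\theta}}\sqrt{m_\nu d_\nu},
\end{equation}
while the external factor for $\omega=\nu=\mu$ gives $\sqrt{d_\mu\gamma_\mu(\alpha)}=\sqrt{Nd_\alpha m_\mu/m_\alpha}$. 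Summing the diagonal over the $d_\alpha$ values of $j$ and then multiplying by the PRIR multiplicity $m_\alpha$ cancels every remaining factor of $m_\alpha$, and splitting $\sqrt{Nd_\alpha(Nd_\alpha-d_\theta)}=\sqrt{Nd_\alpha}\,\sqrt{Nd_\alpha-d_\theta}$ rearranges the prefactor into the two quotients displayed in the statement.

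The main obstacle is the three-layer index bookkeeping (PRIR label $\beta\vdash N-1$, the $S(N)$-irrep $\nu\vdash N$, and the $S(N-1)$-refinement $\xi_\nu\vdash N-1$) required to recognize the collapse of the PRIR sum to $\beta=\alpha$; once this structural fact is established, the rest is an elementary algebraic simplification in which the $\gamma_\nu(\alpha)$ identity does the work of converting dimensional factors into the group-theoretic combinations $\sqrt{m_\nu d_\nu}$ appearing in the final expression.
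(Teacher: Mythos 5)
Your proposal is correct and follows essentially the same route as the paper's own (much terser) argument: decompose the trace over the PRIR blocks $M_f^\beta$ of $\mathcal{A}_n^{t_n}(d)$ with multiplicities $m_\beta$, observe that the excess term $\Delta$ drops out by orthogonality of supports, use the block-diagonal form of $M_f^\beta[V']$ and $M_f^\beta[\sqrt{\Pi_N}]$ at $a=n-1$ together with the block structure of the Young projectors to collapse the sum to $\beta=\alpha$ and $\mu\in\alpha$, and finish with the $\gamma_\nu(\alpha)$ substitution — your algebra reproduces the stated prefactors exactly. The only nitpick is the attribution of the block structure of $M_f^\beta[P_\alpha]$ and $M_f^\beta[P_\mu]$ to Lemma~\ref{lemma_Mf}, which literally concerns the projectors $F_\nu(\alpha)$ rather than Young projectors; the claim you need follows instead directly from the fact that the reduced basis $f$ is adapted to the $S(N)\supset S(N-1)$ branching, and is true as you use it.
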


\begin{proof}
	The calculation of the trace $\tr\left[P_{\mu }P_{\alpha }V'\sqrt{\widetilde{\Pi}^{A_0A}_N}\right]$ is based on the decomposition of natural
	representation of the algebra of partially transposed operators $%
		\mathcal{A}_{n}^{t_{n}}(d)$ with carrier space $\mathcal{H}=(
		\mathbb{C}^{d})^{\otimes n}$ onto  irreducible representations $\ M_f^{\alpha }$ of $%
		A_{n}^{t_{n}}(d)$, where $\alpha \vdash N-1$ labels the irreps of the algebra $\mathcal{A}_{n}^{t_{n}}(d)$, see Appendix~\ref{Prir}. Then we calculate the
	corresponding matrices $\ M_f^{\alpha }(P_{\mu })$, $M_f^{\alpha }(P_{\beta }),$ $%
		\ M_f^{\alpha }(V')\ $and $M_f^{\alpha }(\sqrt{\widetilde{\Pi}^{A_0A}_N})$,
	where in order to calculate the last case, we use spectral decomposition of
	the operator $\widetilde{\Pi}^{A_0A}_N$ . Next, we derive the matrix $M_f^{\alpha }(P_{\mu
			}P_{\beta }V'\sqrt{\widetilde{\Pi}^{A_0A}_N})$ for each irrep $\alpha$ and calculate its trace. The final formula for the trace
	is
	\begin{equation}
		\tr\left[P_{\mu }P_{\beta }V'\sqrt{\widetilde{\Pi}^{A_0A}_N})
		\right]=\sum_{\alpha \vdash N-1}m_{\alpha }\tr\left[M_f^{\alpha }(P_{\mu
			}P_{\beta }V'\sqrt{\widetilde{\Pi}^{A_0A}_N})\right],
	\end{equation}
	where $m_{\alpha }$ is the multiplicity of the irrep $M_f^{\alpha }$ in the
	natural representation of the algebra  $\mathcal{A}_{n}^{t_{n}}(d).$ The rest
	the proof is analogous to calculations in Appendix~\ref{StructurePOVMs}, and we leave it for the reader.
\end{proof}

Having Theorem~\ref{thm:fidel} and Proposition~\ref{Prop_aux1} from this appendix we can present the proof of Theorem~\ref{expOA} from the main text:
\begin{proof}[Proof of Theorem~\ref{expOA}]
	We prove the statement by the straightforward calculations
	\begin{equation}
		\begin{split}
			F(\mathcal{P}_{rec}(N,d,1))&=\frac{\sqrt{N}}{d}\left|\tr\left(\sigma_{A_0A_N}\sqrt{\widetilde{\Pi}_N^{A_0A}}O_AO_{\widetilde{A}}^{\dagger}\right)\right|=\frac{\sqrt{N}}{d}\left|\tr\left(\sigma_{A_0A_N}\sqrt{\Pi_N^{A_0A}}O_AO_{\widetilde{A}}^{\dagger}\right)\right|\\
			&=\frac{\sqrt{N}}{d\sqrt{d}}\sum_{\alpha \vdash N-1}\sum_{\mu \vdash N} \frac{v_{\alpha}v_{\mu}}{\sqrt{d_{\mu}m_{\mu}d_{\alpha}m_{\alpha}}}\tr\left(V'\sqrt{\widetilde{\Pi}^{A_0A}_N}P_{\mu}P_{\alpha}\right).
		\end{split}
	\end{equation}
	The second equality follows from the fact the the extra term $\frac{1}{N}\Delta$ in definition of measurements $\widetilde{\Pi}_a^{A_0A}$ in~\eqref{mea2} is always orthogonal to $\sigma_{A_0A_a}$, for $1\leq a\leq N$, so it is enough tho work only with the term $\Pi_a^{A_0A}$. To get the third equality we use fact that $\sigma_{A_0A_a}=\frac{1}{d^N}V'_{A_0A_N}\equiv \frac{1}{d^N}V'$ and by plugging the explicit forms of operators $O_A, O_{\widetilde{A}}$ given in~\eqref{Oa} to expression~\eqref{fidel} in Theorem~\ref{thm:fidel}. Now using the statement of Proposition~\ref{Prop_aux1} we finish the proof.
\end{proof}

\section{Teleportation matrix in the qubit case}
\label{AppE}
The teleportation matrix $M_F$ has been introduce firstly in~\cite{StuNJP} and its maximal eigenvalue $\lambda_{\max}(M_F)$ encodes the entanglement fidelity $F$ in optimised deterministic PBT:
\begin{equation}
	F=\frac{1}{d^2}\lambda_{\max}(M_F).
\end{equation}
From the considerations in this paper and in~\cite{StuNJP} we know that the optimising operation $O_A$ from~\eqref{Oa} can be expressed by entries of eigenvector $v=(v_{\mu})$ corresponding to the maximal eigenvalue $\lambda_{\max}(M_F)$. However, analytical expressions for eigenvectors and eigenvalues of $M_F$ are known only in two cases, when $d\geq N$ and $d=2$ with arbitrary $N$ (see Section 4 and Section 5.3 in~\cite{StuNJP}). In the latter case the interior structure of $M_F$ is reasonably simple and the whole matrix is a tridiagonal matrix of the form:
\be
M_F=\frac{1}{4}\begin{pmatrix}
	-x_1+2 & 1      & 0      & 0      & \cdots & 0      & 0      \\\
	1      & 2      & 1      & 0      & \cdots & 0      & 0      \\\
	0      & 1      & 2      & 1      & \cdots & 0      & 0      \\
	\vdots & \vdots & \vdots & \vdots & \ddots & \vdots & \vdots \\
	0      & 0      & 0      & 0      & \cdots & 2      & 1      \\
	0      & 0      & 0      & 0      & \cdots & 1      & -x_2+2
\end{pmatrix} \in \operatorname{M}(t,\mathbb{R}),
\ee
where
\be
t=\floor{N/2+1}=\begin{cases}
	\frac{N}{2}+1\quad N - \text{even}, \\
	\frac{N+1}{2}\quad N- \text{odd}.
\end{cases}
\ee
The values $x_1,x_2$ also depend on the parity of $N$, and we have
\be
\begin{cases}
	x_1=x_2=1\quad N - \text{even}, \\
	x_1=1,x_2=0\quad N- \text{odd}.
\end{cases}
\ee
The index $t$ numerates all irreps in the qubit case, so every number $t$ corresponds to some Young frame $\mu$ with up to two rows and $N$ boxes of the form $\mu=(N-l,l)$. In this particular qubit case we can exploit results of  Losonczi from \cite{Losonczi1992}, where direct expressions for eigensystem are given. By exploiting his results directly one can find the following expressions for the entries of the vector $v=(v^{(N)}_l)$ corresponding to maximal eigenvalue of $M_F$:
\be
\label{eigend=2}
v_l^{(N)}=\begin{cases} (-1)^{\frac{N}{2}-l}
		{\left(\operatorname{sin}
	\frac{\left(\frac{N+2}{2}-l\right) N\pi}{N+2} - \operatorname{sin}\frac{\left(\frac{N}{2}-l\right) N\pi}{N+2}\right)}\bigg/{\operatorname{sin}\frac{ N\pi}{N+2}}\; & \textit{for even N} \\
	(-1)^{\frac{N-1}{2}-l}\operatorname{sin} \frac{\left(\frac{N+1}{2}-l\right) N\pi}{N+2}\big/{\operatorname{sin}\frac{ N\pi}{N+2}}\;                                 & \textit{for odd N.}
\end{cases}
\ee
For further reasons the vectors $v=(v^{(N)}_l)$ have to be normalised, however for transparency we do not introduce here a new notation for their normalised versions and we use the same symbol everywhere in the text.

\section{Proof of Lemma~\ref{lem:f_opt}}
\label{app:opbtd2}
In this section we derive the expression for resource state fidelity in optimal recycling procedure, in qubit case. The general expression reads
\be
\begin{split}
	F(\mathcal{P}_{rec}(N,d,1))&=\frac{1}{d^{1/2}}\sum_{\alpha \vdash N-1}\sum_{\mu \in \alpha}\frac{v_{\alpha}v_{\mu}}{m_{\alpha}^{1/2}}\frac{\sum_{\substack{\nu \neq \theta\\
			\nu\in \alpha}}\sqrt{m_{\nu}d_{\nu}}}{\sqrt{Nd_{\alpha}-d_{\theta}}},
\end{split}
\ee
Using the expressions for dimensionality $d_\alpha$~given by \eqref{eq:m_d_alpha} and multiplicity $m_\alpha$ given by \eqref{eq:d_theta} together with the formula for the components of the normalised eigenvector 
$v_l^{(N)}$ and given in~\eqref{eigend=2}
and summing over possible lengths of second row of Young Tableaux corresponding to a given irrep $\alpha=(N-l,l)$ we obtain the final expression in the qubit case
\be
\begin{split}
	&F(\mathcal{P}_{rec}(N,2,1)) =\\
	&=\frac{1}{\sqrt 2}\sum _{l=0}^{k }
	\frac{v_l^{(N-1)}\left({v_l^{(N)} + v_{l+1}^{(N)}}\right)}{N-2l}
	\left ((N-2l+1)\sqrt{\frac{{N+1 \choose l}}{N+1}} + (N-2l-1)\sqrt{\frac{{N+1 \choose l+1}}{N+1}} \right)^2
	\sqrt{\frac{(N-l+1)(l+1)}{(N-2l)(N+1){N\choose l}}}.
\end{split}
\ee
In the above expression we use the fact, that in qubit case there are only two possibilities of adding a single box to $\alpha$, which is dented by $\mu\in \alpha$, resulting in two irreps $(N+1-l, l)$ and $(N-l, l+1)$, the latter of which is valid only when $l+1\leq{N}/{2}$, otherwise it is set to 0.
\section{Proof of Lemma~\ref{l:FPBT} with equivalent quantum angular momentum picture}
\label{AppB}

First let us derive formula for the fidelity $F(|\Phi^+\>_{AB},|\Phi\>_{AB})$ for an arbitrary dimension $d$ using explicit form of the operation $O_A$ given in~\eqref{expOA}:
\begin{equation}
	\label{compa}
	\begin{split}
		F(|\Phi^+\>_{AB},|\Phi\>_{AB})&=|\tr|\Phi^+\>\<\Phi|_{AB}|=|\tr\left( (O_A \otimes \mathbf{1}_B)|\Phi^+\>\<\Phi^+|_{AB}\right)|\\
		&=\frac{1}{\sqrt{d^N}}\tr(O_A)=\frac{1}{\sqrt{d^N}}\sum_{\mu \vdash N}v_{\mu}\sqrt{d_{\mu}m_{\mu}},
	\end{split}
\end{equation}
since $\tr(P_{\mu})=d_{\mu}m_{\mu}$ and $\tr_B(|\Phi^+\>\<\Phi^+|_{AB})=(1/d^N)\mathbf{1}_A$. As it was said earlier coefficients $v_{\mu}\geq 0$ are entries of the eigenvector corresponding to maximal eigenvalue of the teleportation matrix $M_F$ introduced in~\cite{StuNJP}. For $d>2$ the coefficients $v_{\mu}$ can be computed only using numerical methods. For $d=2$ we have two options. The first one is to observe that in this case the matrix $M_F$ is tri-diagonal, for which analytical expressions for eigenvalues and eigenvectors are known due to Losonczi work~\cite{Losonczi1992} and are given by~\eqref{eq:v_l}. Using the expressions provided in~\eqref{eq:m_d_alpha} and summing over all possible lenghts of lower row in Young tableaux $(N-l, l)$, we have the following formula
\be
\begin{split}
	&F(\mathcal{P}_{rec}(N,2,1)) =\\
	&=\frac{1}{\sqrt 2}\sum _{l=0}^{k }
	\frac{v_l^{(N-1)}\left({v_l^{(N)} + v_{l+1}^{(N)}}\right)}{N-2l}
	\left ((N-2l+1)\sqrt{\frac{{N+1 \choose l}}{N+1}} + (N-2l-1)\sqrt{\frac{{N+1 \choose l+1}}{N+1}} \right)^2
	\sqrt{\frac{(N-l+1)(l+1)}{(N-2l)(N+1){N\choose l}}}.
\end{split}
\ee

However, for our purposes it is enough to use straightforwardly results contained in the seminal work of Hiroshima and Ishizaka~\cite{ishizaka_quantum_2009}. They have described PBT protocols using tools coming from representation theory of $SU(2)^{\otimes N}$ group. Therefore we use a representation in the spin angular momentum for the $N-$spin system. In this representation the operator $O_A^{\dagger}O_A$ reads as
\begin{equation}
	\label{XA}
	O_A^{\dagger}O_A=O_A^2=\sum_{j=j_{\min}}^{N/2}\gamma(j)\mathds{1}(j),\qquad \gamma(j)\geq 0.
\end{equation}
The sum in~\eqref{XA} runs from $j_{\min}=0(1/2)$ for $N$ even (odd). The operator $\mathds{1}(j)$ is the identity operator for a fixed quantum number $j$. The operator $\mathds{1}(j)$ corresponds directly to a Young projector $P_{\mu}$ in expression~\eqref{compa}. This, together with explicit form of the coefficients $\gamma(j)$ coming from the optimisation in OPBT (see~\cite{ishizaka_quantum_2009}), allows us to write
\begin{equation}
	\begin{split}
		O_A=\sum_{j=j_{\min}}^{N/2}\sqrt{\gamma(j)}\mathds{1}(j)&=\sum_{j=j_{\min}}^{N/2}\sqrt{\frac{2^{N+2}}{(N+2)(2j+1)d_j}\operatorname{sin}^2\left(\frac{\pi(2j+1)}{N+2}\right)}\mathds{1}(j)\\
		&=\sum_{j=j_{\min}}^{N/2}\operatorname{sin}\left(\frac{\pi(2j+1)}{N+2}\right)\sqrt{\frac{2^{N+2}}{(N+2)(2j+1)d_j}}\mathds{1}(j),
	\end{split}
\end{equation}
since for $j_{\min}\leq j\leq N/2$ the sine function gives always positive values.
Taking trace from the above expression and taking into account that $\tr \mathds{1}(j)=d_jm_j $ and
\begin{equation}
	d_j=\frac{(2j+1)N!}{(N/2-j)!(N/2+j+1)!},\qquad m_j=2j+1,
\end{equation}
we obtain our result of the form
\begin{equation}
	F(|\Phi^+\>_{AB},|\Phi\>_{AB})=\frac{1}{2^N}\tr(O_A)=\sqrt{\frac{N!}{2^{N-2}(N+2)}}\sum_{j=j_{\min}}^{N/2}\frac{(2j+1)\operatorname{sin}\frac{\pi(2j+1)}{N+2}}{\sqrt{(\frac{N}{2}-j)!(\frac{N}{2}+j+1)!}},
\end{equation}
where $j_{\min}=0 (1/2)$ when $N$ is even (odd).

\bibliographystyle{plainnat}
\bibliography{biblio2}
\end{document}